\documentclass[a4wide,12pt]{article}

\usepackage[font=small,justification=justified]{caption}
\usepackage[margin=0.7in]{geometry}
\usepackage[utf8]{inputenc}
\usepackage{amsmath,amssymb}
\usepackage{mathtools}
%\usepackage[sorting=none, backend=bibtex]{biblatex}
%\bibliography{ref}
%\usepackage[bottom]{footmisc}
%\usepackage{diagbox}
%\DeclareGraphicsExtensions{.png}

\usepackage{xcolor}
\usepackage{hyperref}

\hypersetup{
  colorlinks=false
  urlbordercolor=gray,
  citebordercolor=gray,
  linkbordercolor=gray,
  pdfborderstyle={/S/U/W 1}
}

\usepackage{amsthm}

\numberwithin{equation}{section}

\newtheorem{theorem}{Theorem}
\newtheorem{corollary}{Corollary}
\newtheorem{lemma}{Lemma}

\def \sec{\begin{section}}
\def \esec{\end{section}}

% Greek
\def \al {\alpha}

\def \be {\beta}
\def \ga {\gamma}

\def \la {\lambda}

\def \om {\omega}

\def \ep {\epsilon}

\def \ksi {\xi}

\def \MU {\frac{3}{\sqrt{2}}}

% mathcal
\def \Cc {\mathcal{C}}

\def \Lc {\mathcal{L}}

\def \Oc {\mathcal{O}}
\def \Rc {\mathcal{R}}
\def \Ec {\mathcal{E}}

\def \Wc {\mathcal{W}}

% Misc

\def \pr {\partial}
\def \ra {\rightarrow}

\def \oh {\cfrac{1}{2}}

\def \B {2}
\def \C {512}
\def \K {22}

\def \beq { \begin{equation}}
\def \eeq {\end{equation}}

\DeclareMathOperator*{\Tr}{Tr}

\newcommand\const{\operatorname{const}}

\renewcommand{\tilde}{\widetilde}
\renewcommand{\bar}{\overline}

\def \l {\left(}
\def \r {\right)}

\def \bra {\langle}
\def \ket {\rangle}

\def \JJ {J_{ijkl}}
\def \NS {N_\text{SYK}}
\def \dimC {\text{dim}_C}
\def \dimE {\text{dim}_E}

\def \dimR {\text{dim}_R}
\def \Hc {\mathcal{H}}
\def \OOO {O(N_1) \times O(N_2) \times O(N_3)}
\def \id {\mathbf{1}}

\def \Pco {P_{\rm code}}

\begin{document}
\title{Quantum error correction and large $N$}
\author{Alexey~Milekhin \footnote{milekhin@princeton.edu, maybe.alexey@gmail.com} \footnote{On leave of absence from IITP RAS}}
\date{
Physics Department, Princeton University, Princeton, NJ
}
\maketitle
\abstract{
In recent years quantum error correction(QEC) has become an important part of AdS/CFT.
Unfortunately, there are no field-theoretic arguments about 
why QEC holds in known holographic systems.
The purpose of this paper is to fill this gap by studying 
the error correcting properties of the fermionic sector of
various large $N$ theories. 
Specifically we examine $SU(N)$ matrix quantum mechanics 
and 3-rank tensor $O(N)^3$ theories.
Both of these theories contain large gauge groups.
We argue that gauge singlet states indeed form a 
quantum error correcting code. Our considerations are based purely on large $N$ analysis
and do not appeal to a particular form of Hamiltonian or holography.
%This quantum mechanical model is SYK analogue
%without disorder and it has the same
%large $N$ limit as SYK, but at the same time possesses large residual symmetry. 

\newpage
\tableofcontents

\section{Introduction}
\subsection{Motivation}
Seminal paper \cite{err} by Almheiri, Dong and Harlow demonstrated that quantum error correction(QEC)
naturally emerges in the AdS/CFT correspondence. The idea is simple: the same bulk region can be 
reconstructed using different parts of the boundary. So if some part of the boundary is lost or
subject to quantum noise, information in the bulk is perfectly preserved and 
can be recovered using a different part of the boundary. This has lead to a variety of interesting results, 
such as entanglement wedge reconstruction \cite{EW_reconstruction} and derivation of Ryu--Takayanagi 
formula \cite{RT_from_QEC}.

Several toy models \cite{tensor_perfect}, \cite{tensor_random} of error-correcting 
bulk-boundary correspondence were constructed using perfect and random tensor networks in the bulk. 
In these examples the boundary has one spatial dimensional and the bulk is two-dimensional
Poincare disk. One drawback of these models is that they do not have a Hamiltonian, so they are not dynamical.
These constructions resemble approximate wave function constructions for quantum many-body systems using
matrix product states(MPS)\cite{mps1,mps2}, projected entangled pairs(PEPS)\cite{peps1,peps2}, tree
tensor networks \cite{ttn} and multi-scale entanglement renormalization ansatz(MERA) \cite{MERA}. 
However, it is not clear how they are related to conventional holographic systems, such as $\mathcal{N}=4$
super Yang--Mills or at least Sachdev--Ye--Kitaev(SYK) model \cite{sy,kitaev,ms,pr}.

\textit{The purpose of this paper is two-fold.} The first goal is to examine quantum error correcting
properties in more standard holographic systems. As was anticipated and as we will see shortly,
this quantum error correction is only approximate in the large $N$ limit. The second goal
is to see how small this error can be. Recently it was proven that in the presence of continuous
global symmetries there is a \textit{lower bound} on recovery error \cite{cont,Woods2020,Yang2020}, 
so the
recovery cannot be perfect. In this paper we will
prove an \textit{upper bound} on error, thus demonstrating that it has 
to be small in the large $N$ limit.
Let us describe these two goals in more detail.

Before considering a full quantum field theory problem, it is natural
to study first quantum mechanical systems.
In this paper we study generic $SU(N)$ matrix models and $O(N)^3$ tensor models.
As an example, one can keep in mind Banks--Fischler--Shenker--Susskind(BFSS) 
\cite{bfss} matrix model
or Carrozza--Tanasa--Klebanov--Tarnopolsky (CTKT)\cite{CTKT,Carrozza:2015adg} tensor quantum mechanics.
BFSS is a dimensional reduction of $\mathcal{N}=4$ super Yang--Mills to
quantum mechanics
and it provides some of the strongest
evidences for the holographic correspondence, as the gravity predictions from black hole thermodynamics
have been matched with numerical Monte--Carlo 
simulation of this model \cite{bfss_test1,bfss_test2,bfss_test3}. 
CTKT model has the same large $N$ limit as SYK, but does not have any disorder.

However, we would like to emphasize once again that our results are based on certain
large $N$ properties, and are not tied to any particular Hamiltonian.
This may sound surprising for the following reason. In the original paper \cite{err} on QEC in AdS/CFT the
error correcting properties were tightly bound with bulk locality. It is true that BFSS at low energies does have a description
in terms of ten-dimensional supergravity, but we do not expect bulk locality for a generic matrix quantum mechanics.
However, it has been argued \cite{Berenstein:2004kk,Itzhaki:2004te,Karch:2002vn,Clark:2003wk} that various large $N$ 
matrix models, even a harmonic oscillator, do have a description in terms of string theory. 
The corresponding geometry has a string size curvature.
Our results suggest that error correction may be a generic feature of string theories.

Investigating QEC in generic matrix/tensor has one important drawback: 
they do not have a spatial structure.
Nonetheless, in the spirit of holographic quantum error correction we can ask:
\textit{how robust are the code states in these models against erasures of fermions?}
The Hilbert space of scalar fields is infinitely-dimensional, so we will postpone 
the investigation
of the scalar sector to future work.

%However, it is worth mentioning examples of matrix/tensor systems with gravity duals.
%The most prominent example of a matrix quantum mechanics with holographic dual is
%Banks--Fishler--Shenker--Susskind(BFSS) \cite{bfss} matrix model, whereas
%Klebanov--Tarnopolsky(CTKT) tensor model is an example of a holographic 3-tensor model. 
%If they exhibit error correcting properties it would
%be a further evidence in favor of QEC interpretation of AdS/CFT. Also they would provide 
%a ground for investigating interplay between QEC and Hamiltonian dynamics.

%BFSS model is a dimensional reduction
%of $SU(N)$ $\mathcal{N}=4$ super Yang--Mills from four down to one dimension. Despite being a quantum
%mechanical model its bulk is ten-dimensional \cite{bfss_dual}. 

%CTKT tensor model \cite{CTKT} is a quantum mechanical model
%including $N^3$ Majorana fermions. This model has the same large $N$ limit
%as SYK, but does not include disorder. So at low energies
%this model has conformal two-point functions which resemble matter propagating in $AdS_2$.
%However, it has an unusually large $O(N)^3$ symmetry group. 

What is the error correcting code subspace in matrix/tensor models? 
It was suggested earlier \cite{Mintun:2015qda} that the redundancy in holographic QEC maybe 
tied\footnote{More specifically, the toy model of ref. \cite{Mintun:2015qda} builds precursor operators
on the gauge-invariant part of the Hilbert space.} the gauge redundancy.
As another motivation, let us recall that both BFSS and CTKT(or SYK) have similar features:
in both of them classical gravity description is expected to arise at low energies and large $N$ and they
both have large internal symmetry groups. In the case of BFSS 
it was conjectured \cite{to_gauge} that
$SU(N)$ non-singlet in BFSS are gapped and therefore absent from low-energy gravity description.
This statement was later corroborated by Monte--Carlo simulations \cite{Hanada}.
Similarly, it was conjectured that $O(N)^3$ 
singlets form holographic states in CTKT model \cite{singlet_states}.
\textit{Therefore, we are going to study error correcting properties of singlet states 
in large N matrix/tensor models.}
Despite focusing on $SU(N)$ or $O(N)^3$ singlets,
the states might be charged under other groups.

This brings us to the second goal of this paper. 
This interpretation of AdS/CFT also fits well with the 
expectation that the full theory of quantum gravity does not have any internal 
continuous global symmetries \cite{Banks_2011,Harlow2018symmetries,Harlow_2019}.
It has been known for a long time 
that quantum error correcting codes do not allow the presence of continuous global symmetries \cite{EK}.
However, this statement, known as Eastin--Knill theorem, 
has a few crucial assumptions. The most important assumption for the applications
in QFT and quantum gravity is that the quantum systems are \textit{finite dimensional}. 
Obviously this is not the case in QFT.
Also, the theorem states that \textit{exact} quantum error correction is not possible. Recent papers
have proven numerous bounds on \textit{approximate} error correction \cite{approx1,approx2} in presence of 
continuous symmetries \cite{cont,Woods2020,Yang2020} and Haar-random charged systems \cite{Liu}.
Very roughly, one of the bounds states \cite{cont,Woods2020,Yang2020} that the recovery error
\footnote{Here we simplify
terminology for readers not familiar with quantum error correction. Technically, "error" in the above
equation is how much fidelity $F$ is different from $1$.
 If the state before applying errors is a pure state $|s \ket$, and
after applying errors and performing recovery it is mixed state $\rho$, then 
$F^2 = \bra s | \rho | s \ket$}
can not be smaller than
\beq
\label{Q_bound}
\text{error} \gtrsim  \frac{Q}{n}
\eeq
where $Q$ is total charge of the original state and it is assumed that the system is 
built from $n$ elementary physical subsystems.
For singlet states $Q=0$, so this bound does not rule out perfect error correction.
Moreover, we should mention right away that this bound is not directly applicable for our case,
since the action of the symmetry group is not transversal\footnote{Transversal means that
the system can be separated into several parts such that: the symmetry group do not mix
the parts and any error involving a single part can be corrected. In our case we do not
have a spatial structure, so all fields live in one ``dot''.}.

Notice that eq. (\ref{Q_bound}) has the total number of degrees of freedom $n$ in the denominator.
So in large systems the error can be small. \textit{How small could it be?}

Despite being quantum mechanical systems, matrix/tensor models
have a huge number of degrees of freedom in the large $N$ limit
\footnote{Note that $n$ in the above bound is roughly the number of qubits, and not the dimension
of the full Hilbert space. So $n \sim N^2$ for matrix models and $n \sim N^3$ for tensor models.
So we do not expect that the error is exponentially small.}.
Our second goal is to show that there is 
an \textit{upper bound} for recovery error in these models as long as the number of 
erasures is not too big. 

%Also, we are going to focus on fermionic sector only, whereas BFSS model contains bosonic
%matrices as well.

%It is known that spatially extended objects, such as NS5 branes and D2 branes arise in 
%the mass-deformed BFSS model(known as Berenstein--Maldacena--Nastase(BMN) model \cite{bmn}) as non-trivial
%vacuum solutions, known as fuzzy spheres \cite{Madore}. It would be very interesting to study QEC in these 
%backgrounds. For the case of tensor models, it would be interesting to study coupled tensor models, since
%they may correspond to extended bulk geometries, such as wormholes\cite{mq,KlebanovZ2,whf}. We 
%postpone these questions to future work. We list other open questions(which are numerous) in the Conclusion.

Last, but not least, let us point out a resemblance between gauge singlets and the so-called \textit{stabilizer codes}
in quantum information theory. Code states in these codes are build as invariant states under
a certain subgroup $\{G_i\}$ of the full Pauli group\footnote{For a single qubit, Pauli group is a discrete group
generated by four Pauli matrices $\id,X,Y,Z$. So that in total there are 16 elements, as we should include
matrices multiplied by $-1$ or $\pm i$. For bigger number of qubits one should take a direct product
of Pauli groups for individual qubits.}:
\beq
G_i | c \ket = | c \ket
\eeq
In the present paper we essentially studied a version where stabilizer group is a \textit{continuous}
gauge group, with gauge group charges $Q_i$ annihilating the code states:
\beq
Q_i | c \ket = 0
\eeq

There have been proposals to make decoherence--free 
subsystems \cite{lidar1998} and 
scar--states\footnote{states which do not thermalize} \cite{Pakrouski:2020hym} 
using group charges $Q_i$.
However, in the current paper we will study the QEC properties and the action of generic operators.

\subsection{An illustration}
Let us illustrate the problem we are addressing. 
Suppose we have a set of Majorana fermions $\psi_{ij}^a$ 
in the adjoint representation(indices
$ij$) of $SU(N)$. For our purposes index $a$ can be treated as ``flavor'' index.
In principle we can have some other fields as well.
Imagine that all the states we are interested in are $SU(N)$ singlets.
We start from a singlet state $| s \ket$ and add non-singlet perturbations.
They can be seen either as ``errors'' or excitations deliberately added by an observer.
Specifically consider two states:
\beq
|\xi_1 \ket = \psi^1_{21} | s \ket \quad \text{and} \quad | \xi_2 \ket = \psi^1_{23} \psi^1_{31} | s \ket
\eeq

\textit{Is there a measurement which can distinguish $| \xi_1 \ket$ and $| \xi_2 \ket$?}
If there is, then we can act by another non-singlet operator $\psi$ and return back to the 
original state $| s \ket$, since Majorana fermions square to one. In other words, we could say that these two types of errors are correctable.

In general, the answer is \textit{no}. 
The fundamental fact of quantum mechanics is that states can be distinguished(with probability $1$) 
by a measurement only if
they are orthogonal. In the above example, the overlap $\bra \xi_1 | \xi_2 \ket$ is not
zero even if we take into account the singlet condition as we can form a singlet from three $\psi^1$:
\beq
\bra \xi_1 | \xi_2 \ket = \bra s | \psi^1_{12 } \psi^1_{23} \psi^{1}_{31} | s \ket =  
\frac{1}{N^3} \bra s | \Tr \l \psi^1 \psi^1 \psi^1 \r | s \ket + \ \text{[$1/N$ terms]}
\eeq
Since $| s \ket$ is a singlet, we kept only the singlet channel
\footnote{Terms with $\Tr \l \psi^1 \r  \Tr \l \psi^1 \psi^1 \r$ are absent since
$\psi^1$ is traceless, being $SU(N)$ adjoint.} in the product of three 
$\psi^1$:
\begin{align}
\label{eq:intro_channel}
\psi^{1}_{ij} \psi^1_{kl} \psi^1_{pq} = 
\frac{1}{N^3} \l \delta_{jk} \delta_{lp} \delta_{iq} - \delta_{jp} \delta_{qk } \delta_{li} \r 
\Tr \l \psi^1 \psi^1 \psi^1 \r + [\text{non-singlets and $1/N$ terms}]
\end{align}
Factor\footnote{In the above equations there are $1/N$ corrections, which we will 
discuss later. They arise because delta-function in eq. (\ref{eq:intro_channel}) are not
orthogonal and do not respect traceless condition.} $N^3$ comes from putting $j=k,\ l=p,\ i=q$ in both sides and summing over $kpq$.
Also we have not specified anything about the flavor index, so in general we do not expect that
this matrix element is zero. 

However, notice that there is a factor of $1/N^3$. 
If $N$ is large and we manage to show that the matrix element is 
small compared to $N^3$, it would imply that
we can distinguish the above two states with probability which is close to $1$. This way we would
have \textit{approximate} error correction \footnote{We are comparing the 
overlap $\bra \xi_1 | \xi_2 \ket$ to one because with the proper 
normalization of Majorana operators
$\bra \xi_1 | \xi_1 \ket =1$. }.

Naively, the expression $\Tr \l \psi^1 \psi^1 \psi^1 \r$ contains $N^3$ terms and if Majorana
fermions square to one, we expect that the corresponding matrix element will be of order $N^3$.
In this paper we are going to show that one can bound this matrix by
\beq
\label{eq:basic_bound}
| \bra s | \Tr \l \psi^1 \psi^1 \psi^1 \r | s \ket | \le 2 N^{5/2}
\eeq
hence
\beq
|\bra \xi_1 | \xi_2 \ket | \le \frac{2}{\sqrt{N}}
\eeq
So we are saved by a factor of $\sqrt{N}$ and in the large $N$ limit states $| \xi_{1,2} \ket$ 
are indeed almost orthogonal, so the two errors can be corrected with very
high probability.

Interestingly, scaling $N^{5/2}$ is what one expects from 't Hooft scaling which
we review in Appendix \ref{why}. This scaling is expected to hold in low-energy
sector of BFSS matrix model. However, in the present paper we are going to derive similar
bounds for \textit{any} singlet states, not just for low-energy ones. Also for more
complicated operators our bounds will be less strict than 't Hooft scaling.

It is known \cite{yaffe1982} that generalized coherent states in vector/matrix models
are orthogonal in the large $N$ limit.
It implies that their dynamics is classical in the large $N$ limit.
However, these coherent states are singlets under the gauge group, whereas we are discussing non-singlets.

\subsection{Outline of the paper}
Most of this paper is devoted to showing that this trend continues to hold:
there is indeed approximate error correction in the large $N$ limit as long as
there are not too many errors. Essentially this will follow from the fact that \textit{generic 
non-singlet states are orthogonal in the large $N$ limit.}
We are going to introduce the formalism of error operators and explain
why correcting certain errors is equivalent to correcting erasures of fermions.

First we will have to generalize the bound (\ref{eq:basic_bound}) for more general
operators and tensor models. Our main tool is the observation that we can bound 
operators using elementary $SU(N)$ representation theory. By repeatedly using
Cauchy--Schwartz inequality and cutting and gluing operators we can bound arbitrary
fermionic operators by certain quadratic Casimirs of auxiliary $SU(N)$. 

For example, we can introduce
$SU(N)_1$ which rotates only $\psi^1_{ij}$. Then operator 
$\Tr \l \psi^1 \psi^1 \psi^1 \psi^1 \r$ is proportional to quadratic Casimir of $SU(N)_1$:
\beq
\Tr \l \psi^1 \psi^1 \psi^1 \psi^1 \r \propto C_2 \l SU(N)_1 \r
\eeq
Singlets under $SU(N)$ are not necessarily singlets under $SU(N)_1$. However, because 
of Pauli principle we can not have arbitrary ``large'' $SU(N)_1$ representations, so that
the Casimir can be bounded\footnote{Again, this coincides with 't Hooft scaling.} by $\propto N^3$.
More precisely, using $\psi$ anti-commutation relations and cyclicity of the trace, 
one can easily obtain that it is proportional to identity operator
\footnote{The author is grateful to I.~Klebanov for this remark.}:
\beq
\Tr \l \psi^1 \psi^1 \psi^1 \psi^1 \r = \l N^2-1 \r \l 2N-\frac{1}{N}\r \id \le 2N^3 \times \id
\eeq

For both matrix and tensor models we find that the error can be bounded by a power of $1/N$: 
\begin{align}
\label{intro}
\text{error}_{\rm matrix} = 1-F_{\rm matrix}^2  \lesssim \Oc \l \frac{S_0}{N^{2/5}} \r \nonumber \\
\text{error}_{\rm tensor} = 1-F_{\rm tensor}^2  \lesssim \Oc \l \frac{S_0^5}{N^{2}} \r
\end{align}
where $S_0$ is the number of affected fermions.
However, this set
will be a bit different in matrix and tensor models: in the matrix case we allow 
up to $S_0 \lesssim N^{1/10}$ erasures,
whereas in the tensor case we can allow up to $S_0 \lesssim N^{1/6}$ errors.
%In BFSS we will have to restrict ourselves to low-energy singlet subsector and the 
%allowed errors include \textit{no more than $N^{1/3}$ fermions at known locations}.
%However, we conjecture that in fact we can correct up to $N$ erasures.
% In contrast, in CTKT
%we find that \textit{all singlets form a code subspace which can correct up to $N^{1/6}$ erasures
%at known locations}. 
These powers of $N$ do not look very impressive, but in many places we used 
very generous upper bounds, so the above powers of $N$ can 
probably be easily improved. Also, from the illustration above it is clear that 
error-correcting properties are going to hold as long as the number of affected fermions
is parametrically smaller than $N$. However, in matrix models the number of singlet operators
build from $k$ fermions grows exponentially \cite{hagedorn} with $k$, 
whereas in tensor models this
growth is even factorial \cite{singlet_operators}, so naively one might expect that 
gauge-singlet codes can correct only $\sim \log N$ errors. So it will require some effort
to show correctability of $\sim N^\#$ errors.

The idea of our derivation is elementary. We will borrow a simple formalism of 
trace-preserving quantum operations from
quantum information theory. This formalism can describe erasures and more general quantum noise.
Knill--Laflamme(KL) \cite{KL,NC} condition states that 
the set of error operators $\{E_\al\}$ is correctable
if and only if:
\beq
\label{KL_sim_raw}
\Pco E_\al^\dagger E_\be \Pco \propto \Pco \delta_{\al \be}
\eeq
where $\Pco$ is the projector onto the code subspace. 
In our case $\Pco$ is the projector on singlets. This means
that we are interested in the singlet channel in the product $E^\dagger_\al E_\be$. 
Had it contained only the identity operator, it would have led directly to (\ref{KL_sim_raw}). 
However, in general we will also have a non-trivial singlet operator $\Oc_{\al \be}$:
\beq
\label{KL_sim}
\Pco E_\al^\dagger E_\be \Pco \propto \Pco \delta_{\al \be} +  \Oc_{\al \be} \Pco
\eeq
\textit{We show that
the singlet channel $\Oc_{\al \be}$ is suppressed by $1/N$.} 
%In the case of BFSS, or in matrix models in general,
%this is related to 't Hooft scaling and can be very easily seen. 

%We should point out two things. First of all, not all singlet states exhibit 't Hooft scaling
%in matrix elements. In Section \ref{sec:mm} we will demonstrate that 
%low-energy singlet states in matrix harmonic
%oscillator violate it. However, for BFSS we argue that low-energy states do have this scaling,
%despite the fact that there are high energy state which definitely violate it.
%Surprisingly, we find that in CTKT model for \textit{all singlet states}, non-trivial singlet channel
%$\Oc_{\al \be}$ is suppressed by $1/N$. We show this by 
%proving a bound(eq. (\ref{bound}) ) on matrix elements of singlet operators in CTKT model. 
%Unfortunately, we expect that this 
%statement is no longer true for multiple coupled CTKT models. We discuss this
%in more detail in the Conclusion.

Secondly, it is a long way between $1/N$ term in the KL condition (\ref{KL_sim}) and
the error bound (\ref{intro}). The meaning of the original KL condition is that errors
are orthogonal on the code subspace. It means we can come up with a measurement 
telling us with probability 1 which error has occurred(syndrome measurement)  so that we can 
fix it. In our case errors will be almost orthogonal:
\beq
V^{c_1, \al}_{c_2, \be} = \bra c_1 | E^\dagger_\al E_\be | c_2 \ket \approx 
\delta_{c_1 c_2} \delta_{\al \be},\ |c_{1,2}\ket \in
\Cc
\eeq
Therefore for a syndrome measurement we will have to find the nearest orthogonal basis.
This problem is known in mathematics as orthogonal Procrustes problem, which
 we will discuss in Appendix \ref{sec:OProc}. 
It has a very simple explicit solution in terms of the square root
of the above Gram matrix:
\beq
\l \sqrt{V}\r^{c_1, \al}_{c_2, \be}
\eeq
So the existence of an effective syndrome measurement is tied to bounding various norms of $V$.
The construction we present is known as ``pretty good measurement''
\cite{Belavkin,Holevo1978,pretty_good,barnum2000}, which we tailor to large $N$ situation.

%Since 't Hooft scaling does not hold in all singlets of matrix models, the error
%will include an assumption about the dimension of this subspace.

The paper is organized as follows. In Section \ref{sec:holog} we briefly review the motivation
why quantum error correction emerges in AdS/CFT. Section \ref{sec:opers} is dedicated to our main
technical tool: trace-preserving quantum operations. We explain in details what kind
of quantum operations we will be using and state Knill--Laflamme theorem which gives
necessary and sufficient conditions for a perfect error correction. In Section \ref{sec:erasure}
we explain that erasure of a subsystem, commonly discussed within the AdS/CFT, can be described 
as a special quantum operation.

In Section \ref{sec:mm} we discuss the QEC properties of $SU(N)$ matrix quantum mechanics.
Section \ref{sec:kt} is devoted to rank-3 tensor models.
We describe the models and state our main results. Also we explain why error correction
eventually breaks down.
%In Section \ref{mm:setup} we describe 
%the field content of matrix models we study and how we build error operators.
%In Section \ref{mm:opers} we discuss KL condition and how it is related to 
%operator spectrum. Also in this section we describe our main observation, which is
%the suppression of singlets in KL condition. In Section \ref{mm:main} we formulate
%our main result. In Section \ref{mm:large} we explain how error correction breaks 
%for large amount of erasures.

%After that we turn to CTKT model. 
%In Section \ref{kt:setup} we describe 
%the field content of tensor models and state various elementary properties of 
%Klebanov--Tarnopolsky model.
%In Section \ref{kt:main} we formulate
%our main result. 
%In Section \ref{kt:opers} we discuss KL condition and
%operator spectrum. In Section \ref{kt:large} we explain how error correction breaks 
%for large amount of erasures. 
%Finally, in Section \ref{rest_oper}
%we will discuss the correction of less general errors which are specific to tensor models.

In Conclusion we summarize our results and discuss numerous open questions. 

Appendices are dedicated to technical proofs. 
In Appendix \ref{app:fidelity} we prove a general bound on recovery fidelity using
some mild assumptions on singlet operator spectrum.
In Appendix \ref{app:mm_opers} we prove the necessary bounds for matrix models and
in Appendix \ref{app:bound} we prove similar bounds for tensor models.
In Appendix \ref{why} we explain what we mean by ``'t Hooft scaling'' and
 give some evidence why it should hold in the low-energy sector of BFSS.  
 
\section{Holographic error correction}
\label{sec:holog}
In this Section we very briefly 
review the motivation for holographic error correction following the original paper \cite{err}.
Results from this Section will not be used in the rest of the paper.

The motivation comes from HKLL AdS-Rindler bulk reconstruction \cite{HKLL,Morrison2014}.
HKLL provides an explicit formula for bulk fields in the leading order in $1/N$.
Namely, studying free field equations of motion in Rindler wedge of $AdS_d$ one finds
that one can reconstruct the bulk fields using only the data on the boundary of $AdS$.
More precisely, given a boundary region $A$, one can reconstruct the bulk fields 
in the \textit{casual wedge} $\Wc_C[A]$ \cite{Causal2012,Headrick2014} of $A$.

Consider a Cauchy slice of $AdS_3$ and separate the boundary into three regions, $A,B$ and $C$ - Figure
\ref{reconstruction}.

\begin{figure}[!ht]
\centering
\includegraphics{./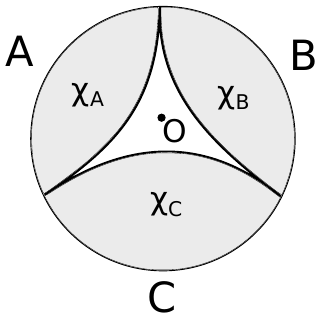}
\caption{Bulk Cauchy slice of $AdS_3$ and three boundary regions $A,B,C$. Gray regions
are the intersections of the corresponding causal wedges $\Wc_C$ and the Cauchy slice. Bulk fields
within them can be reconstructed using the respective boundaries via HKLL.
Bold lines $\chi$ are the causal surfaces.
For empty $AdS$, $\chi$ actually coincides with
Ryu--Takayanagi surface. However, in general $\chi_A$ is larger \cite{Headrick2014}.}
\label{reconstruction}
\end{figure}
We see that point $O$ can not be reconstructed given any \textit{single} 
region $A,B$ or $C$. However, it can be reconstructed from any \textit{pair} $AB$, $AC$ or $BC$.
Therefore, if any of the three regions is lost(erased) we can 
still reconstruct point $O$. This is
exactly the setup of quantum error correction.

\iffalse
This motivates the following algebraic construction.
Suppose we have a code subspace $C$ within a larger Hilbert space $\Hc$. We want to use $C$ to protect against
erasure of subspace $E \subset \Hc$. A necessary and sufficient condition for this is the following \cite{err}. 
We form a state $|\phi \ket$ by entangling the code subspace vectors $|i\ket_C \in C \subset \Hc $ with a 
reference system $R$ with $\dimR=\dimC$:
\beq
| \phi \ket = \sum_{i=1}^{\dimC} |i\ket_C \otimes | i \ket_R \in \Hc \otimes R
\eeq
Then if $E$ does not have information about the encoded state $|i\ket_C$, the density matrix $\rho_{RE}$
should factorize:
\beq
\rho_{RE} = \rho_R \otimes \rho_E
\eeq
Equivalently, the mutual information between $E$ and $R$ should vanish:
\beq
\label{eq:I}
I_{R,E} =  S_{RE} - S_R - S_E = 0
\eeq
where $S_*$ is the standard von Neumann entropy:
\beq
S_* = - \Tr \rho_* \log \rho_*
\eeq
A necessarily condition for this is the following bound on the number of physical qubits $|Q|$,
logical qubits $|C|$ and erasure qubits $|E|$ \cite{err}:
\beq
\label{eq:nec}
|Q| \ge 2|E| + |C|
\eeq
\fi

As we will see in the next Section, it would be more convenient to study the protection
against more general errors and derive the protection against erasure as a by-product.

\section{Error operators formalism}
\label{sec:noise}
\subsection{Quantum operations}
\label{sec:opers}
In this Subsection we will introduce the formalism of error operators which is 
standard in quantum information
theory. We will describe the type of errors we will consider 
later in the paper. In the next Subsection
we will argue that this set is enough to describe erasures at known locations.

How do we describe noise in a quantum system? \cite{NC} We can consider the situation when the 
original system, with density matrix $\rho$ is coupled to an environment in a state $| e_1 \ket$. 
Without loss of generality we consider the environment in a pure state.
Then together they undergo a ``normal'' unitary evolution:
\beq
\rho \otimes | e_1 \ket \bra e_1 | \ra U \l  \rho \otimes | e_1 \ket \bra e_1 | \r U^\dagger
\eeq
Suppose that the environment has orthonormal basis $|e_\al \ket, \al=1,\dots,M_0$. Then tracing out the environment 
yields:
\beq
\Tr_\text{Env} \l U \rho \otimes | e_1 \ket \bra e_1 | U^\dagger \r = 
\sum_{\al=1}^K \bra e_k | U | e_1 \ket \rho  \bra e_1 | U^\dagger | e_\al \ket 
\eeq
Introducing operators $E_\al = \bra e_1 | U |e_\al \ket$ we see that
a generic quantum operation can be described by a set of operators $E_\al, \al=1,\dots,M_0$ 
acting on the system's density matrix $\rho$ as:
\beq
\rho \ra \tilde{\rho} = \sum_\al E_\al \rho E^\dagger_\al
\label{E_evol}
\eeq

Imagine that the above operation is an undesirable noise and we 
want to correct it: come up with another
operation(a set of $R_\mu$) restoring the original $\rho$:
\beq
\tilde{\rho} \ra \rho = \sum_\mu R_\mu \tilde{\rho} R^\dagger_\mu
\eeq

Suppose we have a code subspace and a projector operator $\Pco$ onto this subspace. 
One simple theorem from the quantum information theory is the following:

\begin{theorem} 
(Knill--Laflamme) \cite{NC,KL}: \textit{Quantum operation (\ref{E_evol}) can be corrected if 
and only if there is a complex matrix $N_{\al \be}$ such that}
\beq
\label{eq:KL}
\Pco E^\dagger_\al E_{\be} \Pco = N_{\al \be} \Pco
\eeq   
\end{theorem}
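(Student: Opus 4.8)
This is the textbook Knill--Laflamme argument, and the plan is to prove both implications directly from the structure of Kraus decompositions, using two standard facts: (i) the Kraus operators of a quantum operation are determined only up to multiplication by a rectangular unitary matrix, so that the map $\rho \mapsto \sum_\al E_\al \rho E_\al^\dagger$ depends only on the operator subspace spanned by $\{E_\al\}$; and (ii) a Hermitian matrix can be brought to diagonal form by a unitary change of basis. It is harmless to take both the noise $\{E_\al\}$ and the recovery $\{R_\mu\}$ trace preserving, $\sum_\al E_\al^\dagger E_\al = \id$ and $\sum_\mu R_\mu^\dagger R_\mu = \id$, and I will do so.

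For the ``if'' direction, assume $\Pco E_\al^\dagger E_\be \Pco = N_{\al \be}\Pco$. Taking the adjoint gives $N_{\be \al} = N_{\al \be}^*$, so $N$ is Hermitian; I diagonalise it by a unitary $u$ and pass to the rotated Kraus operators $F_\mu = \sum_\al u_{\mu \al} E_\al$, which implement the same operation and obey $\Pco F_\mu^\dagger F_\nu \Pco = d_\mu\, \delta_{\mu \nu}\, \Pco$. For each $\mu$ with $d_\mu > 0$ the polar decomposition of $F_\mu \Pco$ produces a partial isometry $U_\mu$ with $F_\mu \Pco = \sqrt{d_\mu}\, U_\mu \Pco$, and the orthogonality relation forces the subspaces $U_\mu \Pco \Hc$ to be mutually orthogonal. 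Hence $R_\mu = \Pco U_\mu^\dagger$ is a legitimate collection of recovery operators, completed to a trace-preserving channel by at most one extra Kraus operator, and a short computation --- using $\sum_\mu d_\mu = 1$, which follows from trace preservation restricted to the code --- shows $\sum_\mu R_\mu \big( \sum_\nu F_\nu \rho F_\nu^\dagger \big) R_\mu^\dagger = \rho$ for every $\rho = \Pco \rho \Pco$.

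For the ``only if'' direction, suppose a recovery $\{R_\mu\}$ exists with $\sum_{\mu, \al} R_\mu E_\al \rho E_\al^\dagger R_\mu^\dagger = \rho$ for all code-supported $\rho$. Regarded as a channel from the code subspace into $\Hc$, the left-hand side is the inclusion map, whose Kraus representation is the single operator $\Pco$; by the unitary-freedom fact this forces $R_\mu E_\al \Pco = c_{\mu \al}\Pco$ for scalars $c_{\mu \al}$. Then $\Pco E_\al^\dagger R_\mu^\dagger R_\mu E_\be \Pco = c_{\mu \al}^* c_{\mu \be}\Pco$, and summing over $\mu$ with $\sum_\mu R_\mu^\dagger R_\mu = \id$ gives exactly $\Pco E_\al^\dagger E_\be \Pco = N_{\al \be}\Pco$ with $N_{\al \be} = \sum_\mu c_{\mu \al}^* c_{\mu \be}$, which is moreover manifestly positive semidefinite.

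The only genuinely delicate step is the rigidity statement invoked in the converse: that a channel acting as the identity on every density matrix supported on $\Pco \Hc$ must have all of its Kraus operators proportional to $\Pco$. Making this precise requires care about domains --- maps $\Pco \Hc \to \Hc$ versus maps $\Hc \to \Hc$ --- and about whether the recovery is taken to be trace preserving or merely trace non-increasing; once that is settled, everything else is bookkeeping with polar decompositions and unitary changes of Kraus basis.
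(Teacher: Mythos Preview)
The paper does not prove this theorem; it simply quotes it as a standard result from the quantum information literature, citing Nielsen--Chuang and the original Knill--Laflamme paper. Your argument is correct and is essentially the textbook proof found in those references: diagonalise the Hermitian matrix $N_{\al\be}$ by a unitary rotation of the Kraus operators, use the polar decomposition of $F_\mu \Pco$ to build the syndrome projectors, and for the converse exploit the unitary freedom in Kraus representations of the identity channel on the code. There is nothing to compare here, since the paper offers no proof of its own.
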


So far we have not specified what kind of operators $E_\al$ we
are considering, $\al$ is just an index to enumerate them. 
We will study
$E_\al$ consisting of a single product(no contractions) of Majorana fermions
$\psi_I$, where $I$ is could be a multi-index\footnote{For matrix models like BFSS
$I=(a,i,j)$, where $a$ is $SO(9)$ spinor index and $ij$ are $SU(N)$ adjoint indices.
For CTKT $I=(a,b,c)$, with $a,b,c$ being $O(N)^3$ fundamental indices.}:
\beq
E_\al = \prod_{I_l \in \al} \psi_{I_l}
\eeq 
For example, 
\begin{align}
E_1 = \psi_1 \psi_2 \psi_{37} \nonumber \\
E_2 = \psi_3 \psi_4 \psi_5 \psi_7 \psi_9
\label{}
\end{align}
There are two reasons for that. 
Majorana fermions $\psi_I$ can be represented as spins using Jordan--Wigner
representation:
\beq
\psi_{2I-1} = Z \otimes \dots \otimes Z \otimes X \otimes \id \otimes \dots \otimes \id,\ X \text{ at position }I.
\eeq
\beq
\psi_{2I} = Z \otimes \dots \otimes Z \otimes Y \otimes \id \otimes \dots \otimes \id,\ Y \text{ at position }I.
\eeq
where $X,Y,Z$ are Pauli matrices. With this normalization, all Majorana fermions square to one:
\beq
\psi_I \psi_I = \id
\eeq

Therefore products of $\psi_I$ form conventional quantum noise operators such as 
bit flip $X$ and phase flip $Z$. More importantly, we will show in the next section that this is enough to describe
erasures of subsystems. 

Another reason is that we want to preserve the trace in the evolution (\ref{E_evol}). We can ensure that by
requiring
\beq
\sum_\al E^\dagger_\al E_\al = \id
\eeq
If $E_\al$ is a simple product of $\psi_I$(up to a constant), then $E^\dagger_\al E_\al \propto \id$, 
so after a simple normalization the trace is preserved. However, if $E_\al$ contains contractions
then its square might involve complicated operators and there is no simple argument why the trace 
will be preserved.

In our case, if we treat the singlet states as a code subspace, it means that $E^\dagger_\al E_\be$ 
should act
as a scalar $N_{\al \be}$ on singlet states. 
The product $E^\dagger_\al E_\be$ can be decomposed into irreducible representations under the corresponding
gauge group.

Now the theorem can be reformulated as:\textit{errors are correctable if and only if the singlet channel 
consists of identity
operators $\mathbf{1}$ and gauge group Casimirs (as they annihilate singlets). } 

One last comment is that we will consider the case of erasures at known locations. It means that
operator indices $I_l$ in $\al$ can be drawn from one fixed set of indices.

\subsection{Erasure and approximate error correction}
\label{sec:erasure}
In this subsection we will connect the discussion about the error operations in the previous subsection
with holographic error correction described in Section \ref{sec:holog}. 
Following \cite{RT_from_QEC}, we will
argue that the ability to correct any error defined by a product of Pauli matrices leads 
to the ability to correct erasures of known fermions.
We will also discuss this in the setting of approximate error
correction.

How do we describe erasure of a subsystem $E$ with quantum operation (\ref{E_evol})? 
One option is to use the so-called \textit{depolarization channel}\footnote{In quantum information
theory ``erasure channel'' means a different thing \cite{Bennett1997}: 
it replaces a qubit by an erasure state $| 2 \ket$ which is orthogonal to both spin-up
$| 1 \ket$ and spin-down $|0\ket$.} which makes the reduced density matrix on
$E$ maximally mixed\footnote{Usually people consider depolarization with probability $p$:
$$
\rho_{E \bar{E}} \ra  (1-p) \rho_{E \bar{E}} + p \frac{1}{\dimE} \id_E \otimes \rho_{\bar{E}}
$$
but here we want to enforce the depolarization, thus $p=1$.
}:
\beq
\label{depol}
\Ec:\ \rho_{E \bar{E}} \ra \tilde{\rho}_{E \bar{E}} =  \frac{1}{\dimE} \id_E \otimes \rho_{\bar{E}}
\eeq
\beq
\rho_{\bar{E}} = \Tr_E \rho_{E \bar{E}}
\eeq
In th above equations we split the system into $E$ and the compliment of $E$ which we call $\bar{E}$.
$\rho_{E \bar{E}}$ is the original density matrix:
\beq
\rho_{E \bar{E}}  = \rho
\eeq
%From now one, we will suppress $E \bar{E}$ subscript for brevity, so $\rho \equiv \rho_{E \bar{E}}$.
If $E$ is a one qubit, then $\Ec$ is given by:
\beq
\Ec:\ \rho \ra \frac{1}{4} \id \rho \id + \frac{1}{4} X \rho X + \frac{1}{4} Y \rho Y + \frac{1}{4} Z \rho Z
\eeq
where Pauli matrices $\id, X,Y,Z$ act on $E$. Obviously, if $E$ consists of more than one qubit,
we can depolarize all the qubits in $E$ one by one. Such operation will involve all possible
Pauli strings supported on $E$. 
It was proven in \cite{RT_from_QEC} that this is equivalent 
to correcting the erasure of $E$.
This is why we dedicated a lot of time to
Majorana operator strings in the previous section. 

Suppose now depolarization (\ref{depol}) is only approximately correctable. 
Meaning that there is recovery operation $\tilde{\Rc}$:
\beq
\tilde{\Rc}:\ \tilde{\rho}_{E \bar{E}} \ra \hat{\rho}_{E \bar{E}}
\eeq
such that $\hat{\rho}$ is close to the original $\rho$.

What do we mean by ``close''? One measure is the conventional $L_2$ matrix norm, often called the
trace distance:
\beq
D(\rho,\sigma) = || \rho - \sigma || = \frac{1}{2} \Tr \sqrt{(\rho-\sigma)^\dagger (\rho-\sigma)},
\eeq

Another important measure is fidelity.
Fidelity $F$ between two density matrices are defined by\footnote{Uhlmann's theorem states that
$$
F(\rho,\sigma) = \text{max}_{| \ksi \ket, | \zeta \ket} |\bra \ksi | \zeta \ket |
$$
where the maximum is over all possible purifications $| \ksi \ket, | \zeta \ket$ of $\rho$ and $\sigma$.}:
\beq
F(\rho,\sigma) = F(\sigma,\rho) = \Tr \sqrt{\rho^{1/2} \sigma \rho^{1/2}}
\eeq

In fact, these two are closely related. One can show that
\beq
1-F \le D \le \sqrt{1-F^2}
\eeq
So fidelity and trace distance are equivalent.
In this paper we will mostly concentrate on fidelity.

%Let us mention briefly that the requiring that the mutual information (\ref{eq:I}) is small:
%\beq
%I_{R,E} = S_{RE} - S_R - S_E = \varepsilon \ll 1
%\eeq
%is in some sense stronger statement. We discuss this in more detail in Section \ref{sec:num}
%when we investigate these properties for finite $N$ CTKT models.

\section{Matrix models}
\label{sec:mm}
In this Section we will discuss error correcting properties of fermionic singlet sector in various
matrix quantum mechanical models. Our discussion will not be tied to a particular Hamiltonian.
%For example, D0-brane BFSS\cite{bfss} matrix model and its
%massive deformation, known as BMN\cite{bmn} model. Although our discussion will not be 
%attached to a specific Hamiltonian. 

This Section is organized as follows. In Section \ref{mm:setup} we describe 
the field content of matrix models we study and how we build error operators.
In Section \ref{mm:opers} we discuss KL condition and how it is related to 
operator spectrum. Also in this section we describe our main observation, which is
the suppression of singlets in KL condition. In Section \ref{mm:main} we formulate
our main result. In Section \ref{mm:large} we explain how error correction breaks 
for large amount of erasures.

\subsection{Setup and error operators}
\label{mm:setup}
We will consider a collection of Majorana fermions in the adjoint representation of
$SU(N)$:
\beq
\psi^a_{ij}
\eeq
where $a=1,\dots,D$ is ``flavor'' index and $ij$ are matrix indices of $SU(N)$ adjoint matrix, which
is traceless hermitian matrix. 
For BFSS $D=16$. We may want to define error operators in terms of $\psi^a_{ij}$ strings:
\beq
E_\al \propto \psi^{a}_{ij} \psi^{b}_{kl} \dots
\eeq
However, this is a bad choice because $E_\al^\dagger E_\al$ is not proportional to identity 
operator\footnote{
$\psi^{a}_{ij}$ have anti-commutation relations of Clifford algebra:
$$
\{ \psi^a_{ij}, \psi^b_{kl} \} = 
2 \delta^{ab} \l \delta_{il} \delta_{jk} - \frac{1}{N} \delta_{ij} \delta_{kl} \r \id
$$
The $ijkl$ color structure in the right hand side is specific to $SU(N)$. 
$1/N$ term is needed to make it traceless in $ij$ and $kl$. Also $\psi^a_{ij}$ are
hermitian, but not as operators, but in the matrix sense:
$$
\l \psi^a_{ij} \r^\dagger = \psi^a_{ji}
$$
}
In other words, $\psi^a_{ij}$ are not Pauli strings. Moreover
they are not independent\footnote{Because of the traceless condition: $\sum_i \psi^a_{ii} = 0 $}. 
What we can do is to introduce a proper
orthogonal basis $T^{(ij)}$
%\footnote{One of the choices is when $T^(ij)$ has non-zero entries
%only in the $ij$ and $ji$ locations.}
in the space of $SU(N)$ algebra adjoints:
\beq
\label{eq:T_norm}
\Tr \l T^{(ij)} T^{(kl)} \r = \delta^{(ij),(kl)}
\eeq
Variable $(ij)$ should be understood as a single variable taking $\dim su(N) = N^2-1$ values.
Now 
\beq
\label{eq:from}
\psi^a_{ij} = \sum_{(kl)} T^{(kl)}_{ij} \psi^a_{(kl)}
\eeq
and $\psi^a_{(ij)}$ are hermitian operators $\l \psi^a_{(ij)} \r^\dagger =  \psi^a_{(ij)} $
which square to one:
\beq
\label{eq:good_anti}
\{ \psi^a_{(ij)}, \psi^b_{(kl)} \} = 2 \delta^{ab} \delta_{(ij),(kl)}
\eeq

We focus on erasures of fermions at $S_0$ known locations. 
As we have explained in the previous section
 it means that the  error operators $E_\al$ are given by
products of $\psi^a_{ij}$:
\beq
E_\al = \frac{1}{\sqrt{M_0}} \prod_{(a_l,(ij)_l) \in \al} \psi^{a_l}_{(ij)_l}
\eeq
where index $l$ enumerates pairs $I=(a,(ij))$ in the string $\al$. 
Having erasures at known locations means that pairs $I$ 
are to be drawn from a particular fixed set of $S_0$ triples. This way $M_0 = 2^{S_0}$.

\subsection{Operator spectrum}
\label{mm:opers}
According to KL theorem we are interested in projecting the product $E^\dagger_\al E_\be$ onto
singlet(code) subspace:
\beq
\Pco E^\dagger_\al E_\be \Pco
\eeq
If $\al=\be$ then due to Majorana relation (\ref{eq:good_anti}) we have only identity operator:
$E^\dagger_\al E_\al \propto \id$.

More generally, for $\al \neq \be$, we will have a bunch of non-trivial
singlet operators which we collectively call $\Oc_{\al \be}$:
\beq
\label{mm:KL}
\Pco E^\dagger_\al E_\be \Pco = \frac{1}{M_0} \l \delta_{\al \be} \Pco +  \Oc_{\al \be} \Pco \r
\eeq
\textit{Our main observation is that matrix elements of $\Oc_{\al \be}$ in singlet states are
small.}

Let us see what singlet operators the product $E^\dagger_\al E_\be$ may have.  
The fermions have peculiar $(ij)$ indices, which label orthogonal $su(N)$ algebra
generators. How do we contract them to form a singlet operator?

The inverse of eq. (\ref{eq:from}) is
\beq
\psi^a_{(kl)} = \Tr \l T^{(kl)} \psi^a \r  = \sum_{ij} T^{(kl)}_{ij} \psi^a_{ji}
\eeq

After that we can easily form a singlet with matrices $\psi^a_{ji}$. 
The matrices $T^{(ij)}$ will automatically form the same singlet.
For example:
\beq
\psi^1_{(ij)} \psi^2_{(kl)} = 
\frac{1}{N^2-1} \Tr \l \psi^1 \psi^2 \r \Tr \l T^{(ij)} T^{(kl)} \r + [\text{non-singlets}]
\eeq

\begin{align}
\label{eq:three_psi}
\psi^1_{(ij)} \psi^2_{(kl)} \psi^3_{(pq)} = 
\frac{1}{N^3-1/N} \Tr \l T^{(ij)} T^{(kl)} T^{(pq)} \r
\l \Tr \l \psi^1 \psi^2 \psi^3 \r + \frac{1}{N^2} \Tr \l \psi^1 \psi^3 \psi^2  \r \r 
- \nonumber \\
 \frac{1}{N^3-1/N} \Tr \l T^{(ij)} T^{(pq)} T^{(kl)} \r
\l \frac{1}{N^2} \Tr \l \psi^1 \psi^2 \psi^3 \r + \Tr \l \psi^1 \psi^2 \psi^3 \r \r + 
[\text{non-singlets}]
\end{align}

In the large $N$ limit we can neglect the difference between $N^3-1/N$ and $N^3$. 
Also in the above example we had a mixing between $\Tr \l \psi^1 \psi^2 \psi^3 \r$
and $\Tr \l \psi^1 \psi^3 \psi^2 \r$. 
\footnote{Since
$$
\psi^1_{ij} \psi^2_{kl} \psi^3_{pq} \propto \delta_{jk} \delta_{lp} \delta_{qi} 
\Tr \l \psi^1 \psi^2 \psi^3 \r - \delta_{jp} \delta_{qk} \delta_{li} 
\Tr \l \psi^1 \psi^3 \psi^2 \r
$$
it happened because the two delta symbols
can be both non-zero for some small amount of indices.}
Obviously if $N$ is large and the number of operators is not too big,
such ``wrong contractions'' will be suppressed. We will comment on this more in the next Section.
So we simply write
\begin{align}
\psi^1_{(ij)} \psi^2_{(kl)} \psi^3_{(pq)} = 
\frac{1}{N^3} \Tr \l T^{(ij)} T^{(kl)} T^{(pq)} \r
\Tr \l \psi^1 \psi^2 \psi^3 \r 
- \frac{1}{N^3} \Tr \l T^{(ij)} T^{(pq)} T^{(kl)} \r \Tr \l \psi^1 \psi^2 \psi^3 \r 
\end{align}

As we have mentioned in the Introduction, we can bound various fermionic operators by
considering a representation theory of auxiliary $SU(N)$ groups.
In particular, consider $SU(N)_{a}$ which rotates fermion $\psi^a_{ij}$ only. Singlet under the
original gauge $SU(N)$ are not necessarily singlets under $SU(N)_a$. But since we are 
dealing with fermions, we would not have arbitrary ``big'' representations. Therefore the Casimir
will be bounded.
In Appendix \ref{app:mm_opers} we show that the quadratic Casimir 
$C_2\l SU(N)_a \r$ of $SU(N)_a$ is proportional to $\Tr \l \psi^a \r^4$. Whereas this
operator can be bounded by $2N^3$:
\beq
C_2 \l SU(N)_a \r \propto \Tr \l \psi^1 \psi^1 \psi^1 \psi^1 \r = 
\l N^2-1 \r \l 2N-\frac{1}{N} \r \times \id \le 2N^3 \times \id
\eeq
Notice that naively we expect that this operator scales as $N^4$.

By using various tricks with cutting and gluing $\psi$ we can bound any fermionic operator by
an appropriate combination of Casimirs. 
%For example, in the Introduction we mentioned the bound
%on $\Tr \l \psi^1 \psi^1 \psi^1 \r$. We can obtain it as follows. 
%In the matrix element 
%\beq
%\sum_{ijk} \bra s_1 | \psi^1_{ij} \psi^1_{jk} \psi^1_{ki} | s_2 \ket
%\eeq
%between the
%two singlets $| s_{1,2} \ket $, we can use $SU(N)$ rotation to force index $k=1$. 
%Now for index $i$ we have two cases. Either $i=1$ and we leave it alone, or $i \neq 1$.
%Using another $SU(N)$ rotation we can make it $i=2$ without modifying $i=1$. So we have
%\beq
%N \sum_{j} \bra s_1 | \psi^1_{1j} \psi^1_{j1} \psi^1_{11} | s_2 \ket + 
%N(N-1) \sum_j \bra s_1 | \psi^1_{2j} \psi^1_{j1} \psi^1_{12} | s_2 \ket   
%\eeq
The only exception are bilinears like $\Tr \l \psi^1 \psi^2 \r$.
One can easily check that for them the naive expectation $N^2$ is true:
\beq
\Tr \l \psi^a \psi^b \r \propto N^2 .
\eeq
We will make sure we avoid them. Essentially it means that the set of erasures/errors can contain
each $SU(N)$ adjoint index $(ij)$ only once, because bilinears come with a color factor
$\Tr \l T^{(ij)} T^{(kl)}\r = \delta^{(ij),(kl)}$.

One final comment is that color factors like $\Tr \l T^{(ij)} T^{(kl)} \dots \r$
are not large. In Section \ref{app:colors} we demonstrate that they can be bounded by $\sqrt{2}$.

In Appendix \ref{app:mm_opers} we prove the following bound:
\begin{theorem}
\label{as:thooft}
For any single-trace fermionic 
operator $\Oc_{k}$ made from $3 \le k \le 2\sqrt{N}$ fermions, there is
a bound on the matrix elements in singlet states:
\begin{align}
|\Oc_{3m}| \le 2^{m} N^{5m/2} \nonumber \\
|\Oc_{3m+1}| \le \sqrt{2} N \times 2^{m} N^{5m/2} \nonumber \\
|\Oc_{3m+2}| \le 2N^2 \times 2^{m} N^{5m/2} \\
m \ge 1 \nonumber
\end{align}
and
\beq
|\Oc_{4}| \le 2 N^3
\eeq
where the $|\cdot|$ mean the element-wise matrix norm in the singlet subspace:
\beq
|\Oc| = \max_{|s_{1,2} \ket} |\bra s_1 | \Oc | s_2 \ket | , \quad \bra s_i | s_i \ket = 1
\eeq
\end{theorem}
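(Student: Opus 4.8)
The plan is to prove all the bounds by induction on $k$, realizing each single-trace operator $\Tr(\psi^{a_1}\psi^{a_2}\cdots\psi^{a_k})$ as a product of two simpler single-trace pieces that are glued along a pair of matrix indices, and then controlling the gluing with Cauchy--Schwarz. Concretely, for an operator $\Oc_k = \Tr(\psi^{a_1}\cdots\psi^{a_k})$ acting between singlet states $|s_1\ket,|s_2\ket$, I would write $\bra s_1|\Tr(\psi^{a_1}\cdots\psi^{a_k})|s_2\ket = \sum_{ij}\bra s_1|(\psi^{a_1}\cdots\psi^{a_p})_{ij}(\psi^{a_{p+1}}\cdots\psi^{a_k})_{ji}|s_2\ket$ and insert a complete set of states between the two chunks. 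Applying Cauchy--Schwarz in the index $ij$ and over the intermediate states bounds $|\Oc_k|$ by the geometric mean of two quantities of the form $\bra s|(\psi^{a_1}\cdots\psi^{a_p})_{ij}(\psi^{a_p}\cdots\psi^{a_1})_{ji}|s\ket$ summed over $ij$, i.e.\ by square roots of diagonal ``norm-squared'' operators $\Tr\!\big((\psi^{a_1}\cdots\psi^{a_p})(\psi^{a_p}\cdots\psi^{a_1})\big)$. The base cases are the two computations already quoted in the text: $\Tr(\psi^a\psi^b)\propto N^2$, which forces the no-repeated-index restriction and handles bilinears, and the Klebanov identity $\Tr(\psi^a\psi^a\psi^a\psi^a) = (N^2-1)(2N-1/N)\,\id \le 2N^3\,\id$, which gives the $|\Oc_4|\le 2N^3$ bound and, crucially, the building block that feeds the induction.

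The key step is to show that each ``cut-and-glue'' operation introduces exactly one factor bounded by roughly $N^{5/2}$ — this is where the $3m$ in the exponent comes from, since a length-$3m$ operator should be reducible to $m$ applications of the length-$3$ building block, each contributing $2N^{5/2}$ (the bound $|\Oc_3|\le 2N^{5/2}$ from the Introduction). The mechanism is: cutting a long chain into a piece of length $3$ and a remainder, the length-$3$ piece glued to its reverse produces (after Casimir identification) an operator controlled by $C_2(SU(N)_a)\propto\Tr(\psi^a)^4\le 2N^3$, and taking the square root of the relevant combination of such operators yields the $N^{5/2}$ scaling rather than the naive $N^3$. The residual cases $k=3m+1$ and $k=3m+2$ are handled by absorbing a leftover chunk of length $1$ or $2$: a length-$1$ leftover contributes a factor $\sqrt{2}\,N$ (one matrix index left dangling, normalized $T$ of norm controlled by \eqref{eq:T_norm}) and a length-$2$ leftover contributes $2N^2$ (the bilinear bound), which matches the three stated inequalities. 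Throughout, the color factors $\Tr(T^{(ij)}T^{(kl)}\cdots)$ that appear when re-expressing $\psi^a_{(ij)}$ in terms of matrix-valued $\psi^a_{ij}$ must be shown to be $O(1)$ — bounded by $\sqrt{2}$ as claimed — so they never spoil the power counting; this is deferred to the color-factor lemma referenced as Section~\ref{app:colors}.

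The main obstacle I anticipate is making the induction genuinely \emph{uniform} over all single-trace operators and all cyclic orderings of flavors, rather than just the all-equal-flavor case $\Tr(\psi^1\psi^1\cdots\psi^1)$ where the Casimir identity is exact. For distinct flavors the ``wrong contraction'' terms (the $\Tr(\psi^1\psi^3\psi^2)$ pieces in \eqref{eq:three_psi}) and the $1/N$ corrections in the Clifford algebra \eqref{eq:good_anti} mean that $E^\dagger_\al E_\be$ does not collapse to a clean Casimir; one has to argue that these mismatched pieces are themselves single-trace operators of the same or smaller length, hence covered by the inductive hypothesis, and that their coefficients carry the extra $1/N$ suppression so the leading bound is unaffected. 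The condition $k\le 2\sqrt{N}$ is presumably exactly what is needed to keep the combinatorial proliferation of such subleading terms under control (the number of ways to perform wrong contractions grows like $k^2/N$ per step). Verifying that this accounting closes — that at no stage does the accumulated error overwhelm the stated bound — is the delicate part; the cut-and-glue/Cauchy--Schwarz skeleton itself is routine once the building-block estimates are in hand.
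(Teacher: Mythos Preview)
Your proposal has the right ingredients --- the $\Tr(\psi^a)^4 \le 2N^3$ Casimir bound, the color-factor control, the role of wrong contractions and the $k \le 2\sqrt{N}$ cutoff --- but the central inductive step does not close as written. When you cut $\Oc_k = \sum_{ij}(A)_{ij}(B)_{ji}$ with $|A|=3$ and $|B|=k-3$ and apply Cauchy--Schwarz, you obtain $|\Oc_k|^2 \le |\Tr(AA^\dagger)| \cdot |\Tr(BB^\dagger)|$, where $\Tr(AA^\dagger)$ has length $6$ and $\Tr(BB^\dagger)$ has length $2(k-3)$. For $k \ge 7$ the remainder $2(k-3) > k$ is \emph{longer} than the original operator, so there is no reduction and the induction does not terminate. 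Moreover, the claim that the length-$6$ palindrome $\Tr(\psi^{a_1}\psi^{a_2}\psi^{a_3}\psi^{a_3}\psi^{a_2}\psi^{a_1})$ ``is controlled by $C_2(SU(N)_a)\propto\Tr(\psi^a)^4\le 2N^3$'' is not correct: the matrix $((\psi^{a_3})^2)_{kk'} = \sum_l\psi^{a_3}_{kl}\psi^{a_3}_{lk'}$ is not a $c$-number times $\delta_{kk'}$, so the inner pair does not collapse and the object remains a genuine six-fermion operator.

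The paper closes this gap with an additional idea you are missing: since $|s_{1,2}\ket$ are $SU(N)$-invariant, one can use gauge rotations to \emph{freeze} the matrix indices in the sum $\sum_{ijk\dots}$ to the specific values $1,2,\dots,2m$, at the price of an explicit prefactor $N^{2m}$. With \emph{fixed} indices the situation changes completely: now $\psi^a_{ij}\psi^a_{ji}$ (no sum over $i,j$) is a positive operator bounded by $2\cdot\id$, so after Cauchy--Schwarz one can peel off the single $\psi$'s one by one at cost $2$ each, leaving a product of fixed-index pairs $Q^{ab}_{12}Q^{ba}_{21}\cdots$. Converting those back to singlets divides by $N^{2m}$ and each $\Tr(Q^{ab}Q^{ba})$ is a four-fermion operator bounded by $2N^3$; assembling the factors yields $|\Oc_{3m}|^2 \le 4^m N^{5m}$ directly for every $m$, with no length-increasing recursion. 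The $3m+1$ and $3m+2$ cases then reduce (via one more Cauchy--Schwarz after freezing one or two indices) to $\Tr(\Oc_{3m}\Oc_{3m}^\dagger)$, i.e.\ to $\Oc_{6m}$, which is already covered. Your cut-and-glue scheme is essentially what the paper uses for the \emph{tensor} model, where the middle bubble $\sum_{bc}\psi_{a_1bc}\psi_{a_2bc}$ \emph{does} collapse to $N^2\delta_{a_1a_2}$ plus a charge operator, genuinely shortening the chain; in the matrix-adjoint case that collapse fails, and the index-freezing trick is what makes the argument go through.
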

In terms of fermionic number $k$ the above bounds can be concisely formulated for any operators,
not necessarily single-trace:
\beq
\label{eq:thooft}
| \bra s_1 |\Oc_k| s_2 \ket | \le \B^{k} N^{k-k/10}, \quad k \ge 3
\eeq
as long as $\Oc_k$ does not contain bilinear operators, such as $\Tr \l \psi^1 \psi^2 \r$, since
they are of order $N^2$, and therefore are not suppressed compared to naive $N$ counting.

\subsection{Main result}
\label{mm:main}
Let us return to the error correction and formulate our main theorem.
One has to be careful about translating  $1/N$ correction in (\ref{mm:KL}) 
to the actual fidelity of
recovery $\tilde{R}$. Appendix \ref{app:fidelity} is dedicated to this question. 

%Also as we just stated in the previous section we have to avoid bilinear operators such as 
%$\Tr \l \psi^1 \psi^2 \r$. We can exclude them by simply demanding that the errors does not
%contain more than one flavor for a given adjoint index $(ij)$. In other words, 
%for a given adjoint index $(ij)$, there is only one specific operator $\psi^a_{ij}$ 
%where $a=a(ij)$ - is a symmetric function of $(ij)$: $a(ij)=a(ji)$.

%Here we simply state the result:
Finally, we state our main result:
\begin{theorem}
\label{mm_theorem}
(Error correction in matrix models) Gauge-singlet sector of matrix models
forms an approximate error correcting code against any erasure of $S_0$ 
fermions at
known locations as long as 
\beq
S_0 \le N^{1/10}
\eeq 
and the set of erased fermions does not contain one $SU(N)$ index $(ij)$ more than once.

Recovery fidelity can be bounded by:
\beq
F \ge 1-  C_{\rm MM}\frac{S_0}{N^{2/5}}, \quad C_{\rm MM} = 20736
\eeq
\end{theorem}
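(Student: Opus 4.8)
The plan is to combine the operator bound of Theorem \ref{as:thooft} with the abstract fidelity estimate of Appendix \ref{app:fidelity}, and then optimize the parameters. First I would set up the Knill--Laflamme data: for error operators $E_\al$ that are normalized products of distinct Majorana modes $\psi^{a}_{(ij)}$ drawn from the fixed set of $S_0$ erased fermions, the product $E^\dagger_\al E_\be$ is again (up to the $1/M_0$ normalization) a single product of at most $2S_0$ fermions. Because the erased set contains each $SU(N)$ index $(ij)$ at most once, no bilinear $\Tr(\psi^a\psi^b)$ can appear in the singlet channel, so Theorem \ref{as:thooft} — in the concise form \eqref{eq:thooft}, $|\bra s_1|\Oc_k|s_2\ket|\le \B^k N^{k-k/10}$ for $k\ge 3$ — applies to every singlet operator $\Oc_{\al\be}$ arising in \eqref{mm:KL}. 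This gives a bound on the off-diagonal matrix elements of the Gram matrix $V^{c_1,\al}_{c_2,\be}=\bra c_1|E^\dagger_\al E_\be|c_2\ket$ of the schematic form $|V^{c_1,\al}_{c_2,\be}-\delta_{\al\be}\delta_{c_1c_2}|\lesssim \B^{2S_0}N^{2S_0-S_0/5}/M_0$, since each $\Oc_{\al\be}$ is built from between $3$ and $2S_0$ fermions.

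Next I would feed this into the Appendix \ref{app:fidelity} machinery. The recovery is the pretty-good measurement built from $\sqrt{V}$ (the solution to the orthogonal Procrustes problem), and the fidelity loss $1-F$ is controlled by how far $V$ is from the identity in the appropriate norm — essentially by the sum over $\al\ne\be$ (and over the code states) of $|\Oc_{\al\be}|$, weighted by the number of error operators, i.e.\ by $M_0$. Since $M_0=2^{S_0}$, the prefactor $M_0$ multiplying the per-element bound cancels one power of $2^{S_0}$ coming from $\B^{2S_0}=2^{2S_0}$, but one is still left with a factor growing like $4^{S_0}N^{2S_0}/M_0 \sim 2^{S_0}N^{2S_0}$ against the suppression $N^{-S_0/5}$ per operator; the careful bookkeeping of which powers of $N$ are genuinely there (keeping in mind that the ``wrong contraction'' terms and the color factors $\Tr(T^{(ij)}T^{(kl)}\cdots)$ contribute only $O(1)$, bounded by $\sqrt 2$ as noted in Section \ref{app:colors}) is what produces the net scaling. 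After collecting these, one finds $1-F^2 \lesssim \Oc(S_0/N^{2/5})$ with an explicit constant, and the constraint $S_0\le N^{1/10}$ is exactly what is needed to keep the exponentially-many-errors combinatorial factors from overwhelming the polynomial suppression — this is the point, flagged in the Outline, that singlet operator counting grows exponentially in $k$, so $S_0$ cannot be taken as large as one naively hopes from the single-error illustration.

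The main obstacle I anticipate is the second step rather than the first: translating the entrywise operator bound \eqref{eq:thooft} into a genuine operator-norm (or fidelity) statement about $V$. The subtlety is that the code subspace (singlets) is large, the number of error labels $\al$ is $M_0=2^{S_0}$, and $V$ acts on the tensor product of these; naively bounding $\|V-\id\|$ by the sum of absolute values of all entries is far too lossy, which is presumably why the exponent $2/5$ and not something closer to $1$ appears, and why the constant $C_{\rm MM}=20736$ is so large. One needs the Appendix \ref{app:fidelity} argument to exploit that the relevant norm is more like an average (a Schatten-type or Hilbert--Schmidt estimate) and to handle the fact that distinct $\Oc_{\al\be}$ are correlated singlet operators, not independent ones. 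A secondary technical point is making sure the ``wrong contraction'' terms — the mixing between $\Tr(\psi^1\psi^2\psi^3)$ and $\Tr(\psi^1\psi^3\psi^2)$ in \eqref{eq:three_psi}, and more generally the $1/N$ corrections in the Clifford algebra and the non-orthogonality of the $T^{(ij)}$ — are genuinely subleading for $S_0\le N^{1/10}$; this requires checking that the number of index coincidences that can trigger such a term is $O(S_0^2)$ and hence contributes at relative order $S_0^2/N$, comfortably inside the stated regime.
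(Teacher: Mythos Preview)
Your high-level architecture matches the paper's: Theorem~\ref{as:thooft} feeds into the Procrustes/pretty-good-measurement machinery of Appendix~\ref{app:fidelity}. But the quantitative bookkeeping is off in a way that would derail the proof.

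First, your entrywise bound $|V^{c_1,\al}_{c_2,\be}-\delta|\lesssim \B^{2S_0}N^{2S_0-S_0/5}/M_0$ is wrong: it \emph{grows} with $N$. The bound \eqref{eq:thooft} applies to singlet operators like $\Tr(\psi\cdots\psi)$, but $E^\dagger_\al E_\be$ is a product of fermions with open $(ij)$ indices. Extracting the singlet channel, as in \eqref{eq:three_psi}, costs a factor $1/N^k$ (from the delta-function coefficients), and also brings in a $k!$ from the number of singlets. The correct entrywise bound is of the form $|\bra s|\Oc_{\al\be}|s\ket|\le \frac{3}{\sqrt 2}\,k!\,\B^k/N^{k/10}$ with $k$ the length of the reduced string $\al\be$; this is the paper's eq.~(\ref{MM:B_bound}). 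Your subsequent cancellation argument (``$M_0$ cancels one power of $2^{S_0}$\ldots'') is therefore built on the wrong starting point.

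Second, and more structurally, the paper does \emph{not} control $1-F$ by a single norm of $V-\id$. It introduces two parameters: $\eta$ bounds a single entry of $(\delta V)^k$, while $\ep$ bounds a \emph{row sum} $\sum_\al|[(\delta V)^k]^{a\,\al}_{b\,\be}|$; the fidelity bound is then $F\ge 1-6\ep\eta$ (Theorem~\ref{theorem1}). For matrix models $\eta\sim 1/N^{3/10}$ and $\ep\sim S_0/N^{1/10}$, giving $6\ep\eta=C_{\rm MM}S_0/N^{2/5}$. Establishing these bounds for \emph{all} powers $k$ of $\delta V$ (Lemmas~\ref{MM:lemma1} and~\ref{MM:lemmak}) is the real work you are missing: one expands $(\delta V)^k$ as a chain $\bra a|\Oc_{\al\gamma_1}\cdots\Oc_{\gamma_{k-1}\be}|b\ket$, and iteratively absorbs each $\Oc$ into a new normalized singlet state $|\tilde a\ket\propto\Oc^\dagger|a\ket$, using crucially that the operator bound holds for \emph{every} singlet state, not just code basis vectors. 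Without this step you cannot control the square-root $\sqrt V$ that defines the recovery. The row-sum over $\be$ is handled not by multiplying by $M_0$ but by summing $\sum_{l\ge 3}C^l_{S_0}\,l!\,\B^l/N^{l/10}$, which is where the condition $S_0\le N^{1/10}$ actually enters (Lemma~\ref{MM:lemma1}).
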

This theorem is proved in Appendix \ref{app:fidelity}. $C_{\rm MM}$ is a product of three different factors and the fact that it is
huge stems from using loose upper bounds for all of them.

\subsection{Large operators}
\label{mm:large}
As we mentioned in the Introduction, $\Oc_{\al \be} \Pco$ 
might not be suppressed for complicated operators.

This happens because $\Oc_{\al \be}$ might contain a lot of singlets. 
Let us explain how it might happen. This
will give us an estimate of how many erasures we can correct. Considerations 
below are very crude, so
they could probably be refined. Let us consider $\Oc_{\al \be}$ build from
$k$ fermions. We need to estimate its matrix elements in singlet states.
%In fact, $\Oc_{\al \be}$ is a sum of singlet operators, but we will
%refer to it as singlet operator.
\begin{enumerate}
%\item Since not all singlet states has 't Hooft scaling, we will have to restrict ourselves
%to some code subspace within singlets. The dimension of the code subspace $\dimC$ will actually
%enter the bound for fidelity.
\item First of all, in the previous Section we neglected operators with ``wrong contractions''.
Thanks to Theorem \ref{as:thooft} they have the same scaling with $N$. 
So a problem might occur only if they are too many. In Appendix \ref{sec:wrong} we show
is indeed not a problem as long as $k \le N/2$. For such $k$ they introduce an extra factor of $3/2$ 
at worst.
%Specifically, each ``wrong'' contraction
%comes with a power of (at least) $1/N$ and for operator containing $k$ fermions there are at 
%most\footnote{We are neglecting the cyclicity of trace. In the example with 
%$\Tr \l \psi^1 \psi^2 \psi^3 \r$ there was only one possible wrong contraction.}  binomial 
%$C^l_k$ ways
%to distribute wrong contractions. So their relative contribution can be bounded by
%\beq
%\sum_{l=1}^k C_{k}^l \frac{1}{N^l} = \l 1 + \frac{1}{N} \r^k -1
%\eeq
%Since we are considering $k \le N^{1/4}$ in this indeed small.
\item In Appendix \ref{app:colors} we show that color factors $\Tr \l T^{(ij)} T^{pq} \dots \r$
are bounded by $\sqrt{2}$ regardless of $k$.
\item 
%After removing repeated $\psi$, 
%since $\psi^a_{ij} \psi^a_{ji}=\id$, 
Product $E^\dagger_\al E_\be$ of length $k$ 
typically contain several singlet operators. Expectation value of each of them is suppressed by
$\B^k/(N^{k/10})$ due to eq. (\ref{eq:thooft}).
\item The number of different singlet operators build from $k$ 
operators $\psi^a_{ij}$ can be 
generously bounded\footnote{It is an upper bound because we do not have bilinears and we do not take into account fermionic
symmetry. However for large $k$ taking into account these effect will 
only modify the numerical prefactor. So we do not overcount too much.} 
by the number of singlet representations in the tensor product of $k$ adjoint representations of $SU(N)$.
This can be computed by the following integral over unitary matrices:
\beq
\int_{SU(N)} [d U] \ \l \Tr U \Tr U^\dagger - 1 \r^k
\eeq
where $\Tr U \Tr U^\dagger -1$ is the character of the adjoint representation. Using the following large $N$ result \cite{pastur2004moments}
\beq
\int_{SU(N)} [d U] \prod_l \l \Tr U^l \r^{a_l} \Tr \l U^{l,\dagger} \r^{b_l} = \prod_l \delta_{a_l,b_l} l^{a_l} (a_l)!
\eeq
We see that this number is bounded\footnote{Note that this number includes non-single trace operators. Large $N$ matrix theories
have Hagedorn transition \cite{hagedorn} characterized by \textit{exponential} growth of the number of \textit{single-trace} operators.}  by $k!$. 
\iffalse
\footnote{If fact, one can show that $\# \Oc_{k} \sim D^{k}$ but the crude approximation above is enough for
our purposes. Such exponential growth leads to Hagedorn transition at finite temperature \cite{hagedorn}.
Surprisingly, in our cases it does not lead to any problems. As was stated in the Introduction, 
the actual problem
is \textit{factorial} growth of the number of color factors.}
\beq
\# \Oc_{k} \le 2^{k} D^{k},
\eeq
where $2^{k}$ comes from distributing $\Tr$ among $k$ operators and $D^{k}$ comes from 
putting different fermion flavors inside traces. 
%$D=16$ corresponds to BFSS/BMN, but we are considering a more general case.
\item More importantly, each singlet operator comes with a 
string of delta-functions encoding which indices
are contracted(which we converted into color factors). 
Since the same singlet can be build from different arrangements of $\psi$. 
For some strings $\al \be$ this might multiply a singlet operator by a large number.
Again, we generously bound this number by $k!$. This might seem to be too much, but the tight bound
will include factorial growth as well, since there could be renormalon-like operators:
a ``necklace'' of identical ``beads'', each ``bead'' containing $b \sim \Oc(1)$ fermions.
For certain indices one could put ``beads'' in any order, leading to a factorial growth $(k/b-1)!$
\fi
\end{enumerate}

We conclude that for $\al \be$ of length $k$, the singlet contribution is bounded by
\beq
\MU \B^{k} \frac{k!}{N^{k/10}}
\eeq
%We will be assuming that $B1$. 
It is clear that this number is small as long as errors $E_\al$ include no more than
\beq
\frac{N^{1/10}}{\B}
\eeq 
fermions.
 
Unfortunately, for our proof presented in Appendix \ref{app:fidelity},  
it is not enough that matrix elements of $\Oc_{\al \be} \Pco$ are small.
We will need to show that sums like
\beq
\sum_{s_1, \al: \al \neq \be} \bra s_1 |  \Oc_{\al \be}| s_2 \ket^l=\sum_{s_1, \al: \al \neq \be}  \bra s_1 | E^\dagger_\al E_\beta  | s_2 \ket^l
\eeq
are small, where $| s_{1,2}\ket$ are singlet states. 
Notice, that this can not be small for errors at unknown locations.  
For example, $\be$ can be empty string.
Then even if $\al$ includes only two fermions, the sum over $\al$ will have $N^2$ terms, overcoming $1/N$
suppression.

%Quantity $S_{\rm max}$ is present in the bound only because we do not know for how many 
%operators 't Hooft scaling \ref{eq:thooft} holds.

%In the above we did not take into account the fact
%that the suppression depends on the number of traces. Namely, when decomposing
%product of $2k$ fermionic operators into singlets, single trace operator is suppressed by $1/N^{2k-1}$ whereas
%$\Tr(\psi^{a_1} \psi^{a_2})^k$ is suppressed only by $1/N^k$. We could not prove rigorously that
%bilinears like $\Tr(\psi^{a_1} \psi^{a_2})$ dominate the singlet channel. However, if we \textit{assume}
%that it is true, we can greatly improve the bound and allow up to $N$ erasures. Therefore, we formulate
%the following statement as a conjecture: 
%\begin{conjecture}
%\label{conj}
%Assuming the 't Hooft scaling \ref{eq:thooft} and bilinear dominance, $\dimC$ low-energy singlet states
%form approximate quantum error code with recovery fidelity $F$:
%\beq
%1-F \le \Oc \l \frac{B D^3}{N} \r
%\eeq
%as long as there are $S_0$ known erasures
%\beq
%S_0 \le \text{min}\l S_{\rm max}, \Oc \l \frac{N}{BD^3} \r \r
%\eeq 
%and
%\beq
%\frac{B D^3 S_0}{N} \lesssim 1
%\eeq
%\end{conjecture}
%%We will discuss this Conjecture in more detail in Appendix, Section \ref{more_conj}.

\section{Tensor models}
\label{sec:kt}
In this Section we analyze error correcting properties of fermionic tensor models.
Our results and the general discussion are very similar to the Section \ref{sec:mm} about matrix models.

This Section is organized as follows. In Section \ref{kt:setup} we describe 
the field content of tensor models and state various elementary properties of 
CTKT model.
In Section \ref{kt:main} we formulate
our main result. 
In Section \ref{kt:opers} we discuss KL condition and
operator spectrum. In Section \ref{kt:large} we explain how error correction breaks 
for large amount of erasures. 
%In Section \ref{sec:num} we corroborate our claims by studying
%error correction in singlet sector at finite $N$. 
Finally, in Section \ref{rest_oper}
we will discuss the correction of less general errors which are specific to tensor models.

\subsection{Setup of the model}
\label{kt:setup}
CTKT model is quantum mechanical model including $N^3$ Majorana
fermions $\psi_{abc}$, $a,b,c=1,\dots,N$. Fermions can be rotated with large symmetry $O(N)^3$ group. 
More generically, one can consider 
$O(N_1) \times O(N_2) \times O(N_3)$ model with Majorana fermion anti-commutation relation:
\beq
\{ \psi_{a_1 b_1 c_1} , \psi_{a_2 b_2 c_2}\} = 2 \delta_{a_1 a_2}\delta_{b_1 b_2}\delta_{c_1 c_2} \id
\eeq
Unlike matrix model case, they are truly hermitian:
\beq
\l \psi_{abc} \r^\dagger = \psi_{abc}
\eeq

This model has very large symmetry group $G=O(N_1) \times O(N_2) \times O(N_3)$. 
Corresponding generators $Q^1,Q^2,Q^3$ read as
\begin{align}
Q^1_{a a'} = \frac{i}{2} \sum_{bc} [\psi_{a b c}, \psi_{a' b c}] \nonumber \\
Q^2_{b b'} = \frac{i}{2} \sum_{ac} [\psi_{a b' c}, \psi_{a b' c}] \nonumber \\
Q^3_{c c'} = \frac{i}{2} \sum_{ab} [\psi_{a b c}, \psi_{a b c'}]
\label{charges}
\end{align}
And the corresponding quadratic Casimir operators $C_2$ are
\begin{align}
C^1_2 = \sum_{a \neq a'} Q^1_{a a'} Q^1_{a' a} \nonumber \\
C^2_2 = \sum_{b \neq b'} Q^2_{b b'} Q^2_{b' b} \nonumber \\
C^3_2 = \sum_{c \neq c'} Q^3_{c c'} Q^3_{c' c}
\label{casimirs}
\end{align}

%Energy eigenstates fall into different representations under the group $G$. For many reasons
%it is interesting to concentrate on the singlet sector.

As was mentioned in the Introduction, the number of singlet states is exponentially big:
\beq
\text{Singlet states} = 2^{N^3/2 - c N^2 \log N}, \ c=\text{const}
\eeq

Also there are a lot of singlet operators.
The number of singlet operators build from $2k$ fermions grows factorially \cite{singlet_operators}:
\beq
\label{op_growth}
\#[\Oc_{2k}] \sim 2^k k!
\eeq
In principle, for singlet states one can exclude operators containing $O(N)$ charge operators \ref{charges}.
However, the above asymptotic formula still holds even for this subset. 

Although we would not assume any dynamical information, it is still worth recalling some
properties of the CTKT Hamiltonian.

The CTKT Hamiltonian has the following explicit expression:
\beq
H_\text{CTKT} =J \sum_{abca' b' c'}  \psi_{abc} \psi_{a'b'c} \psi_{a'bc'} \psi_{ab'c'}
\label{H_CTKT}
\eeq

In the large $N$ limit this model is also dominated by melonic diagrams, leading to 
SYK $q=4$ $G\Sigma$ action. 
One can think about the CTKT Hamiltonian as $\NS=N_1 N_2 N_3$ SYK model with very sparse $\JJ$. 
Namely, there are only $\NS^2$ non-zero terms out of $\NS^4$ possible.

%The number of operators(with charges excluded) for small values of $k$ is given 
%by the following generating function:
%\beq
%\label{op_small}
%Z = \sum_{k} x^{2k} \times \# [\Oc_{2k}] = x^4 + 17 x^8 + 24 x^10 + 617 x^12 + 4887 x^14 + \dots
%\eeq

\subsection{Main result}
\label{kt:main}
As in the case of matrix models we need to study the matrix elements of product of fermionic operators.
We focus on erasures at known locations, such that error operators $E_\al$ are given by
products of $\psi_{abc}$:
\beq
E_\al = \frac{1}{\sqrt{M_0}} \prod_{(a_l,b_l,c_l) \in \al} \psi_{a_l b_l c_l}
\eeq
where index $l$ enumerates triples $I=(a,b,c)$ in the string $\al$. 
Having erasures at known locations means that triples in $\al$ 
are to be drawn from a particular fixed set of $S_0$ triples. This way $M_0 = 2^{S_0}$.

Similarly to matrix-model case, first we will need a bound on matrix elements of 
singlet operators.
In Appendix \ref{app:bound} we prove the following bound:
\begin{theorem}
For any singlet state $| s \ket$ in tensor quantum mechanics and any singlet operator
build from $2k$ fermions $\psi_{abc}$:
\beq
|\bra s | \Oc_{2k} | s \ket| \le  N^{5k/2} \times
\begin{cases}
1 ,\text{k even} \\
1/\sqrt{N}, \ \text{k odd}
\end{cases}
\label{bound}
\eeq
as long as $k \le \sqrt{N}$.
\end{theorem}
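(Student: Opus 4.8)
\emph{Proof strategy.} The plan is to bound an arbitrary $2k$-fermion singlet operator by reducing it, through repeated Cauchy--Schwarz ``cutting and gluing'', to products of quadratic Casimirs of \emph{auxiliary} orthogonal groups acting on composite indices, each of which is controlled by elementary fermionic representation theory --- the same mechanism behind the matrix identity $\Tr\l \psi^1\psi^1\psi^1\psi^1 \r = (N^2-1)(2N-1/N)\,\id \le 2N^3\,\id$ used in Appendix~\ref{app:mm_opers}. First, because $\Oc_{2k}$ is invariant under $O(N_1)\times O(N_2)\times O(N_3)$, its $a$-, $b$- and $c$-indices are each paired off among themselves, so (up to coincident-delta ``wrong contraction'' corrections, which are subleading for $k\le\sqrt N$, and with combinatorial coefficients that stay subleading in $N$ in this range) $\Oc_{2k}$ is a sum of fully contracted monomials $\psi_{I_1}\cdots\psi_{I_{2k}}$ carrying three fixed pairings of the $2k$ slots. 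It suffices to bound one such monomial, and one may discard any factor proportional to $\id$ (e.g. a self-contracted bilinear $\sum_{abc}\psi_{abc}\psi_{abc}=N^3\,\id$), which belongs to the identity channel.

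\emph{Elementary blocks.} Two objects suffice. A maximally contracted bilinear $\sum_{(\text{shared})}\psi\psi$ with two uncontracted composite indices: its summands mutually commute and square to $\pm\id$, so in the gluing step it contributes a factor $N^2$. The quadratic Casimir $C_2$ of an auxiliary orthogonal group $O(M)$, $M\le N^2$, rotating a composite index built from at most $\sim N$ ``flavors'' of Majoranas: by the Pauli principle only not-too-large irreducible components occur in the fermionic Fock space, so $C_2$ is bounded by a fixed power of $N$, which one should check works out to $\lesssim \const\cdot N^5$; note that on a genuine $O(N_i)$-singlet the diagonal Casimirs $C^i_2$ of (\ref{casimirs}) vanish outright, but those of the composite groups do not. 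The four-fermion ``tetrahedral'' contraction of (\ref{H_KT}) is the base case: one cut plus Cauchy--Schwarz sends it to $\Tr\mathcal A^2$ with $\mathcal A$ a generator of a composite $O(N^2)$, hence $\lesssim N^5$.

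\emph{Cutting and gluing.} For a general monomial, split the chain into two $k$-fermion sub-chains $A,B$, sum over the composite indices crossing the cut, and apply Cauchy--Schwarz:
\beq
|\langle s|\Oc_{2k}|s\rangle|\le \l \sum_\mu \langle s|A_\mu A^\dagger_\mu|s\rangle \r^{1/2}\l \sum_\mu \langle s|B^\dagger_\mu B_\mu|s\rangle \r^{1/2}.
\eeq
The glued operators $\sum_\mu A_\mu A^\dagger_\mu$ and $\sum_\mu B^\dagger_\mu B_\mu$ are again fully contracted, and choosing the cut along a melonic contraction turns each of them into a $\Tr\mathcal A^2$-type object, i.e. a Casimir of a composite orthogonal group as above. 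Iterating, a generic $2k$-fermion monomial decomposes into four-fermion tetrahedral blocks ($\lesssim N^5$ each), plus a single leftover two-fermion block ($N^2$) when $2k\equiv 2\pmod{4}$; tracking the $N$-powers then yields $N^{5k/2}$ for $k$ even, and the leftover block being $\sqrt N$ cheaper per fermion than a tetrahedron gives the $1/\sqrt N$ improvement for $k$ odd. The hypothesis $k\le\sqrt N$ is precisely what keeps the multiplicities generated here (number of contraction patterns, wrong-contraction corrections, Cauchy--Schwarz branchings) from overtaking the leading power of $N$.

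\emph{Main obstacle.} The crux is the bookkeeping of the cutting procedure: one must show that for \emph{every} admissible triple of pairings there is a schedule of cuts whose glued operators genuinely reduce to products of the elementary blocks above --- never a fresh fully contracted operator with no a priori control --- and that the \emph{optimal} schedule reproduces exactly the exponent $5k/2$ (so that each tetrahedral block costs $N^5$ and not $N^6$, that the odd case only ever needs a cheap $N^2$ bilinear, and that the even/odd split of the $2k$ fermions comes out as stated). Pinning down this combinatorics, together with the representation-theoretic bound $C_2\lesssim N^5$ on the composite orthogonal Casimirs, is the real work.
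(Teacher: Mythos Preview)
Your strategy is essentially the paper's: split $\Oc_{2k}$ in two, use Cauchy--Schwarz to glue each half to its conjugate, exploit a bubble at the cut to drop the fermion count, and iterate down to the $N^5$-bounded four-fermion building block (the Hamiltonian/Casimir), with the $k\le\sqrt N$ hypothesis controlling the anti-commutator corrections. You have also correctly located the crux in your ``Main obstacle'' paragraph.

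Two points of detail worth flagging. First, the paper's induction is not a recursive halving into tetrahedra but a single step $\Oc_{2k}\to N^5\,\Oc_{2k-4}$: the split $\Oc_{2k}=\sum_M\Oc^M_k\tilde\Oc^M_k$ followed by $2\Oc_{2k}\le\sum_M\bigl((\Oc^M_k)^2+(\tilde\Oc^M_k)^2\bigr)$ produces a $2k$-fermion operator again, but with a guaranteed two-dangling-line vertex that collapses to an $N^2$ bubble, and the resulting $\Oc_{2k-2}$ is then allowed to contain one $\sum_{abc}\psi_{abc}\psi_{abc}=N^3$ factor, landing at $N^5\,\Oc_{2k-4}$. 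Your ``decompose into tetrahedral blocks'' picture reaches the same exponent but via a slightly different bookkeeping. Second, and this is exactly the gap you anticipate, the existence of a \emph{proper cut} --- one where each half has a two-dangling vertex and no completely disconnected ``stray'' $\psi$ (whose square would cost an uncontrolled $N^3$) --- is not automatic; the paper proves it by a three-stage graph-theoretic case analysis for $2k\ge 12$ and checks $2k=4,8,10$ by hand, while noting that all six-fermion singlets degenerate to four-fermion ones with an $N^2$ prefactor, which is the true origin of the $1/\sqrt N$ improvement for odd $k$. Your phrase ``choosing the cut along a melonic contraction'' undersells how delicate this step is.
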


Similarly to the matrix model case, this bound is based on simple properties of $O(N)$ 
Casimirs. First we bound Casimir operators by $N^5$ and then we use cutting and gluing to
bound all other operators by powers of Casimirs.

Now we can formulate the statement about the error correction:

\begin{theorem}
(Error correction in CTKT) All singlet states in CTKT model form an approximate quantum error correcting
code against errors $\{E_\al\}$ at $S_0$ known locations as long as
\beq
S_0 \le \frac{N^{1/6}}{2}
\eeq 
Recovery fidelity can be bounded by
\beq
F \ge 1- C_{\rm CTKT} \frac{S_0^5}{N^2} , \quad C_{\rm CTKT} = 67584
\eeq
\end{theorem}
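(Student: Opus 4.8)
The plan is to mirror the matrix-model argument of Theorem~\ref{mm_theorem}, feeding the tensor-model operator bound (the theorem on $|\bra s | \Oc_{2k} | s \ket|$) into the general fidelity estimate of Appendix~\ref{app:fidelity}. First I would set up the Knill--Laflamme data: with $E_\al = M_0^{-1/2}\prod_{(abc)_l \in \al}\psi_{a_lb_lc_l}$ and $M_0 = 2^{S_0}$, the product $E^\dagger_\al E_\be$ is, up to the overall $1/M_0$, a single Majorana string on the symmetric difference of $\al$ and $\be$; when $\al=\be$ it collapses to $\id$ by $\psi^2=\id$. So $\Pco E^\dagger_\al E_\be \Pco = \tfrac{1}{M_0}(\delta_{\al\be}\Pco + \Oc_{\al\be}\Pco)$, where $\Oc_{\al\be}$ is the $O(N_1)\times O(N_2)\times O(N_3)$ singlet content of a string of even length $2k \le 2S_0$. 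Decomposing $\Oc_{\al\be}$ into a basis of singlet operators (schematically, contractions of the $\psi$'s into $O(N)^3$-invariant patterns, i.e.\ the analogue of the $\Tr(T^{(ij)}\cdots)$ color factors here being the index-contraction patterns), each singlet operator has a matrix element bounded by $N^{5k/2}$ (or $N^{5k/2}/\sqrt N$ for $k$ odd) by the operator theorem, and there are at most $\sim 2^k k!$ such operators by the factorial growth estimate~(\ref{op_growth}). Combined with the $N^{-3k}$ suppression expected from naive index counting of a $2k$-fermion singlet against the $1$-normalization, one gets matrix elements of $\Oc_{\al\be}$ bounded by something like $2^{k}k!\,/N^{k/2}$ — small precisely when $2S_0 \lesssim N^{1/6}/\mathrm{const}$.

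Next I would invoke Appendix~\ref{app:fidelity}. The recovery is the ``pretty good measurement'': one forms the Gram matrix $V^{c_1,\al}_{c_2,\be} = \bra c_1 | E^\dagger_\al E_\be | c_2\ket \approx \delta_{c_1c_2}\delta_{\al\be} + (\text{small})$, solves the orthogonal Procrustes problem via $\sqrt V$, and reads off the fidelity deficit $1-F$ in terms of the operator norm of $V - \id$. The key point, exactly as flagged at the end of Section~\ref{mm:large}, is that pointwise smallness of $\Oc_{\al\be}$ is not enough: one needs the sums $\sum_{s_1,\al:\,\al\neq\be}\bra s_1|\Oc_{\al\be}|s_2\ket^l$ to be small. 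Because the errors are at $S_0$ \emph{known} locations there are at most $M_0 = 2^{S_0}$ values of $\al$, so the number of terms in these sums is controlled by $2^{S_0}$, not by a power of $N$; multiplying the per-term bound $\bigl(2^{2S_0}(2S_0)!/N^{S_0}\bigr)$ (using $k \le S_0$) by $2^{S_0}$ and tracking the powers carefully gives $1-F^2 \lesssim C_{\rm KT}\,S_0^5/N^2$ in the regime $S_0 \le N^{1/6}/2$, with the constant $C_{\rm KT}$ arising as a product of the loose constants from the operator bound, the color-factor/contraction-counting bound, and the Procrustes/fidelity step.

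The main obstacle is the bookkeeping in the second step: turning the $1/N$-type suppression of each singlet channel into a genuine bound on $\|V-\id\|$ and hence on $1-F$, while correctly accounting for (i) the factorial proliferation $2^k k!$ of singlet operators built from $2k$ fermions, (ii) the multiplicities with which a given singlet operator appears across different strings $\al\be$ (the ``wrong contraction'' issue — here the analogue of Appendix~\ref{sec:wrong}, needing $2S_0 \le N/2$ or similar), and (iii) the sum over singlet states $s_1$, which is where one must use that $\Oc_{\al\be}$ is built from only $O(1)$ fermions relative to $N$ so that its operator norm, not just its diagonal, is controlled by the Casimir bounds. Once these are organized, the final inequality $F \ge 1 - C_{\rm KT} S_0^5/N^2$ with $C_{\rm KT} = 67584$ follows by the same chain of Cauchy--Schwarz estimates as in the matrix case, and the breakdown at $S_0 \sim N^{1/6}$ is the point where $2^{S_0}(2S_0)!$ overcomes the $N^{-S_0}$ suppression.
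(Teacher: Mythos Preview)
Your overall strategy matches the paper's: feed the tensor operator bound into the Gram-matrix/Procrustes machinery of Appendix~\ref{app:fidelity}, establish the two bounds \eqref{uni_bounds} with some $(\eta,\ep)$, and read off $1-F\le 6\ep\eta$ from Theorem~\ref{theorem1}. However, several of the specific estimates you write would not close, and correcting them is exactly how the paper arrives at the stated constants.

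First, the singlet count. The number of singlet channels in a product of $2k$ rank-3 tensor fermions is not $2^k k!$ but $(2^k k!)^3$: one factor per $O(N)$ in $O(N)^3$, coming from the integral $\bigl(\int_{SO(N)}[dM](\Tr M)^{2k}\bigr)^3$ (Section~\ref{kt:large}). Combined with the $N^{5k/2}$ operator bound and the $N^{-3k}$ from the singlet projection, the correct per-term estimate is
\[
|\bra s|\Oc_{\al\be}|s\ket|\ \le\ \frac{8^{k}\,k!^3}{N^{k/2}},\qquad 2k=l(\al\be).
\]
Second, the direction of the $k$-dependence. Under $S_0\le N^{1/6}/2$ this function is \emph{decreasing} in $k$, so the pointwise bound $\eta$ is set by the \emph{shortest} nontrivial string $2k=4$, giving $\eta=512/N$; you took the largest $k$. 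Likewise the sum over $\be\neq\al$ is not ``$2^{S_0}$ times the worst term'' but
\[
\sum_{\be\neq\al}|(\delta V)^{a\,\al}_{b\,\be}|\ \le\ \sum_{k=2}^{S_0/2}\binom{S_0}{2k}\frac{8^{k}k!^3}{N^{k/2}}\ \le\ \frac{22\,S_0^5}{N}\ =:\ \ep,
\]
again dominated by $k=2$ (Lemma~\ref{KT:lemma1}). Plugging $(\eta,\ep)=(512/N,\ 22S_0^5/N)$ into $1-F\le 6\ep\eta$ gives precisely $C_{\rm KT}=6\cdot22\cdot512=67584$ and the power $S_0^5/N^2$.

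Third, the sum over intermediate singlet states in $(\delta V)^k$. The paper's mechanism is sharper than ``operator norm controlled by Casimirs'': since the code subspace \emph{equals} the full singlet subspace, the resolution $\sum_c|c\ket\bra c|$ collapses to the identity on singlets, and because each $\Oc_{\al\be}$ is (anti-)hermitian the uniform diagonal bound on $\bra s|\Oc_{\al\be}|s\ket$ is in fact a spectral bound, so $\sqrt{\bra s|\Oc_{\al\be}^\dagger\Oc_{\al\be}|s\ket}$ obeys the same inequality for \emph{every} singlet $|s\ket$ (eq.~\eqref{KT:sqrt_bound}). This is what lets you peel off one $\Oc$ at a time and iterate Lemma~\ref{KT:lemma1} to get $|[(\delta V)^k]|\le\eta\ep^{k-1}$ and $\sum_\al|[(\delta V)^k]|\le\ep^k$ (Lemma~\ref{KT:lemmak}). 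Without this step the sum over singlets is uncontrolled.
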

This theorem is proved in Appendix \ref{app:fidelity}.
Like $C_{\rm MM}$, $C_{\rm CTKT}$ is a product of three different factors and that fact that it is
huge stems from using loose upper bounds for all of them.

\subsection{Operator spectrum}
\label{kt:opers}
As in the case of matrix models, the previous theorem uses various facts about $N$ scaling
of singlet operators. 

Let us discuss the behavior of matrix elements of $E^\dagger_\al E_\be$ in singlet states.
For example, if operators $E_\al$ include only single fermion operators,
then  $\bra s_1 | E^\dagger_\al E_\be | s_2 \ket$ will vanish unless $\al=\be$, since
there are no non-trivial singlets formed from two $\psi$ operators:
\beq
\psi_{abc} \psi_{a'b'c'} \sim  \mathbf{1} \times \delta_{aa'} \delta_{bb'} \delta_{cc'} + 
\text{non-singlets such as}\ Q^1_{aa'}
\eeq

However, if $E_\al$ are two-fermion operators then their product
\footnote{In the below we are assuming large $N$ limit and neglect what we called
``wrong contractions'' in matrix model. Essentially the problem is that
delta-functions below are not orthogonal. Using the reasoning similar to Appendix
\ref{sec:wrong} one can
show that it is not important as long as $k \le N/2$. } might contain the Hamiltonian and
Casimir operators \ref{casimirs}:
\begin{align}
\psi_{a_1 b_1 c_1} \psi_{a_2 b_2 c_2} \psi_{a_3 b_3 c_3} \psi_{a_4 b_4 c_4} \sim
\id \times \l \delta^{a_1 b_1 c_1}_{a_2 b_2 c_2} \delta^{a_3 b_3 c_3}_{a_4 b_4 c_4} + 
\delta^{a_2 b_2 c_2}_{a_3 b_3 c_3} \delta^{a_1 b_1 c_1}_{a_4 b_4 c_4} -
\delta^{a_1 b_1 c_1}_{a_3 b_3 c_3} \delta^{a_2 b_2 c_2}_{a_4 b_4 c_4} \r + \nonumber \\
+\frac{H_{\rm CTKT}}{N^6} \times \l \delta^{a_1}_{a_2} \delta^{a_3}_{a_4} \delta^{b_1}_{b_3} \delta^{c_1}_{c_4} 
\delta^{b_2}_{b_4} \delta^{c_2}_{c_3} + \text{[5 other combinations]}  \r + \nonumber \\
+\frac{C^1_2}{N^6} \times \l \delta^{b_1 c_1}_{b_2 c_2} \delta^{b_3 c_3}_{b_4 c_4} \delta^{a_1 a_2}_{a_3 a_4}
+ \text{[2 other combinations]} \r  
+ \nonumber \\ + \text{[ $C^2_2, C^3_2$ Casimirs ]}
+ \text{[non-singlets]}
\label{four_psi}
\end{align}
The numerical factor in the above expression have $1/N$ corrections and 
tensor $\delta$ with multiple indices is defined as
\beq
\delta^{i_1 \dots i_n}_{j_1 \dots j_n} = \prod_{l=1}^n \delta^{i_l}_{j_l}
\eeq

Casimir operators act by zero on the singlet sector, so their presence does not cause any problems.
However, the Hamiltonian does not act as identity. Moreover, if we increase the number of $\psi$, 
there is a factorially growing number of singlet operators \cite{singlet_operators}.

%We see that the exact error correction is not possible for all possible operators without restrictions.
%We should start from restricting ourselves to operators build from products of $\psi_{abc}$ without index
%contractions. Obviously, if all the indices are contracted, $E_\al$ is a singlet operator and we can not
%say anything more, since matrix elements of generic singlets operators are of order one or bigger. 
%Now we notice two things. First of all, singlet operators in ``OPE'' like eq. (\ref{four_psi}) 
%are suppressed by powers of $N$. Singlet operator build from $2k$ psi-operators has $N^{3k}$ in the
%denominator. 

Actually, we do not need to get rid of all the singlets.
Casimir operators and identity operators are good, so we can keep them.   
Notice, that all operators come with certain $\delta^{a}_{a'} \dots$ structure. One obvious thing
to do is to make sure this structure is zero in front of bad singlet operators. This requires
restricting the set of $a,b,c$ indices in $E_\al$. By doing this we allow $E_\al$ to have length $\sim N$,
but restricting the set of color indices. We will present the details in Subsection \ref{rest_oper}

Now we are going to bound the singlet contribution to $E^\dagger_\al E_\be$ by $1/N$ as 
long as $E_\al$ are not very long. The key element is the bound on matrix element we mentioned
earlier.

%For simplicity we consider $O(N)^3$ model.
%To warm up, let us recall the energy bounds from \cite{singlet_states}.
%The CTKT Hamiltonian \ref{H_CTKT} has $N^6$ terms. Therefore, naively
%%\beq
%|\bra \ksi | H_{\rm CTKT} | \ksi \ket| \le \sum_{abca' b' c'}  
%| \bra \ksi | \psi_{abc} \psi_{a'b'c} \psi_{a'bc'} \psi_{ab'c'} | \ksi \ket | \le 
%\sum_{abc a'b'c'} 1 = N^6
%\eeq
%However, it is easy to show that there is a stricter
%\footnote{In fact, one can prove even the tighter bound $N^{9/2}$, but it is not easy
%\cite{singlet_states}.} bound $N^5$. 

\subsection{Large operators}
\label{kt:large}
Again parallel to matrix model case, for large $k$ there could be too many different singlets
in $E^\dagger_\al E_\be$ and they can overcome $1/N$ suppression.

Now we need to compute the number of singlets arising in the product of $2k$ 3-tensor fermionic fields.
This will include both different singlet operators and different delta-function color factors.
Like in the matrix model case, we neglect the fermionic symmetry and obtain the following upper bound:
\beq
\l \int_{SO(N)}  [dM] \l \Tr M \r^{2k} \r^3
\eeq
Using the following large $N$ result \cite{pastur2004moments}
\beq
\int_{SO(N)}  [dM] \l \Tr M \r^{2k} = 2^k \frac{1}{\sqrt{\pi}}  \Gamma \l k + \frac{1}{2} \r
\eeq
 we get that the number is bounded by
\beq
\l 2^k k! \r^3
\eeq

\iffalse
We need to take into account two facts: 
there could be many singlet operators and there could be many different contractions leading to
one singlet operator.
For example, the Hamiltonian ``channel'' in the expansion (\ref{four_psi}) contains 6 different delta-functions.
The number 
of different singlet operators is given by eq. (\ref{op_growth}), whereas
the number of different contractions can be bounded by $\sim k!$. 

We have to pause for a moment and explain where this number comes from. First of all, since we have
$2k$ fermions the number of possible contractions for a given operator can not exceed the total
number of permutations which is $(2k)!$. Probably this bound can not be improved much, 
because one can imagine
a necklace operator build from identical $D$-fermion operators. Then the number of contractions is $(2k/D)!$
which is not much different from $(2k)!$.  
\fi

Finally, using the Stirling approximation 
we see that the total singlet contribution in the product
$E^\dagger_\al E_\be$ can be bounded by:
\beq
[E^\dagger_\al E_\be]_\text{singlet} \sim \frac{1}{N^{3k}} \times N^{5k/2} \times 2^{3k} k!^3 \sim
\l  \frac{8 k^3}{\sqrt{N}} \r^k
\eeq 
So that this contribution is small for $k \lesssim N^{1/6}/2$.

\subsection{Exact error correction for less general errors}
\label{rest_oper}
Now we consider possible restrictions on $E_\al$ to get rid of ``bad'' singlet operators.
In deriving the expansion (\ref{four_psi}) we essentially used Wick theorem. Notice, if we have
two $\psi$-operators, $\psi_{a bc }$ and $\psi_{a' b'c'}$, contracting just one index, for example
$a-a'$, leads to a non-singlet operator $\psi_{a b c} \psi_{a b'c'}$ from which one can build either
a charge operator or a more complicated (bad) singlet. Charge operators(and identity) 
are characterized by 
the property that once we contact one pair of indices($a$ in this case) for two particular $\psi$, 
at least one other pair ($b$ or $c$) will be contracted between them too. 

This observation leads to the following naive solution. We restrict the set of $\psi$ in $E_\al$ such
that contraction of one index in a pair of $\psi$ will lead to the
contraction of another pair. This can be done as follows. Suppose that instead of generic $\OOO$ model
we have $O(N)^3$ model. Then we restrict the set of $\psi_{abc}$ in $E_\al$, from $N^3$ to $N^2$ by
selecting two indices, for example $a$ and $b$, and a permutation $\sigma$ of $N$ elements, such that
the only allowed $\psi$ are $\psi_{a\  \sigma(a)\ c}$. However, it is not enough to ensure that contraction
in $a$ will lead to a contraction in $b$, since there might be several $\psi$ with the same $a$.
So we further require that all $a$ are different. We have essentially proven (by construction) 
the following theorem.

\begin{theorem} 
Pick up a permutation $\sigma$ of $N$ elements. 
Suppose that the error operators $\{\tilde{E}_\al\}$ contain fermionic strings 
\beq
\prod_{(a_l,b_l,c_l) \in \al} \psi_{a_l b_l c_l}
\eeq
such that
\begin{itemize}
\item $b_l = \sigma(a_l),\ l=1,\dots,d$
\item All $a_l$ are different.
\end{itemize}
The number of such operators is $\frac{N^{2d}}{d!}\l 1+ O(ld/N) \r $. 
Then quantum operations build from $\tilde{E}_\al$ are correctable.
\end{theorem}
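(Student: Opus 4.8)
The plan is to verify the Knill--Laflamme condition \eqref{eq:KL} directly for the restricted error set $\{\tilde E_\al\}$, showing it holds \emph{exactly} (hence the quantum operations built from them are correctable in the strict sense). The key observation, already foreshadowed in the text, is that for this restricted set the products $\tilde E_\al^\dagger \tilde E_\be$ can only contain identity and $O(N)^3$ Casimir operators in their singlet channel, and both annihilate (or act as scalars on) the code subspace.

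First I would set up the combinatorics: a product $\tilde E_\al^\dagger \tilde E_\be$ is a string of fermions $\psi_{a_l \sigma(a_l) c_l}$, and by Wick's theorem its singlet part is obtained by fully contracting the first indices among themselves, the second among themselves, and the third among themselves. The crucial structural claim is this: because $b_l = \sigma(a_l)$ with $\sigma$ a fixed permutation and all the $a_l$ distinct \emph{within each of $\al$ and $\be$}, whenever the contraction pattern pairs the $a$-index of one $\psi$ with the $a$-index of another, it is \emph{forced} to pair their $b$-indices with each other as well (since $b = \sigma(a)$ and $\sigma$ is injective, $a_l = a_m \Rightarrow b_l = b_m$, and distinctness rules out spurious matchings). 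As explained in the paragraph preceding the theorem, this is precisely the property distinguishing the identity/charge channels from the ``bad'' singlet operators: every index contraction drags along a partner contraction. Then I would argue that the resulting contractions can only build up the identity operator $\id$ and the quadratic Casimirs $C^1_2, C^2_2, C^3_2$ of \eqref{casimirs} — there is simply no room to form, e.g., the Hamiltonian channel in \eqref{four_psi}, whose $\delta$-structure requires $a$-index contractions not accompanied by $b$-index contractions.

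Next I would invoke that Casimir operators annihilate all singlet states, so on the code subspace $\Pco$ the singlet channel of $\tilde E_\al^\dagger \tilde E_\be$ reduces to a multiple of $\id$, i.e. $\Pco \tilde E_\al^\dagger \tilde E_\be \Pco = N_{\al\be}\Pco$ with $N_{\al\be}$ read off from the coefficient of $\id$ in the Wick expansion; this is exactly the Knill--Laflamme condition \eqref{eq:KL}, so by the Knill--Laflamme theorem the errors are exactly correctable. Finally I would count the operators: choosing an ordered $d$-tuple of distinct values for $(a_1,\dots,a_d)$ gives $N(N-1)\cdots(N-d+1)$ choices, and then $N^d$ choices for $(c_1,\dots,c_d)$; dividing by $d!$ because the string is unordered (fermions anticommute, so reorderings give the same operator up to sign) yields $\frac{N^{2d}}{d!}(1+O(d^2/N))$, matching the stated count $\frac{N^{2d}}{d!}(1+O(ld/N))$.

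The main obstacle I expect is making the structural claim fully rigorous: one must carefully enumerate \emph{all} Wick contraction patterns of $\tilde E_\al^\dagger\tilde E_\be$ and confirm that none of them produces a non-Casimir, non-identity singlet — in particular handling the subtlety that $\al$ and $\be$ may share some fermions (so $\tilde E_\al^\dagger \tilde E_\be$ is shorter after cancellation) and that cross-contractions between the $\al$-part and the $\be$-part must be analyzed on the same footing. I would also need to be slightly careful that ``all $a_l$ different'' is imposed within each string but an $a$-value appearing in $\al$ may also appear in $\be$; I would check that this still forces the paired-contraction property, since the offending cross-term $\psi_{a\,\sigma(a)\,c}\,\psi_{a\,\sigma(a)\,c'}$ again has both $a$ and $b$ contracted. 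Everything else is routine bookkeeping with the anticommutation relations and the explicit form of the Casimirs.
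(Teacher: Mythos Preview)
Your proposal is correct and follows essentially the same route as the paper: the theorem is stated there as proven ``by construction'' from the preceding paragraph, and you are spelling out exactly that construction --- the constraint $b_l=\sigma(a_l)$ with distinct $a_l$ forces every $a$-index contraction to drag along the corresponding $b$-index contraction, so the singlet channel of $\tilde E_\al^\dagger\tilde E_\be$ contains only identity and charge-built operators, and Knill--Laflamme holds exactly.

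One small imprecision worth fixing: for strings longer than four fermions the surviving singlets are not just the quadratic Casimirs $C^i_2$ but more general polynomials in the charges $Q^3_{cc'}$ (since the $a$- and $b$-pairings are locked together while the $c$-pairing remains free). This does not affect the conclusion --- any polynomial in the $Q^i$ annihilates singlet states --- but your statement ``can only build up $\id$ and the quadratic Casimirs'' should be relaxed to ``$\id$ and operators built from the $O(N)^3$ charges''. You already flag this as the step needing care, so just carry it through.
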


Obviously, such $\tilde{E}_\al$ can not include more than $N$ different fermions.

\section{Conclusion}
In this paper we investigated error correcting properties of the singlet subsector of
matrix and tensor models. Our results are purely ``kinematical'': they rely on large $N$ limit only,
and do not involve any dynamical information. 
For example, they are applicable to the well-studied BFSS matrix model and
CTKT tensor quantum mechanics. This provides further evidence
that AdS/CFT is tightly related to QEC. For simplicity we concentrated on 
fermionic subsector which is finite dimensional. We have found that singlets indeed can correct
for erasures at known locations provided that there are not too many of them. 

As was expected from holographic picture, 
the error correction is exact only at infinite $N$ limit. 
In general, we have managed to bound the recovery fidelity by 
$1/N^\#$(i.e. $1/N$ in some fractional power).

Unfortunately, the models we studied do not have any spatial structure, so it is very difficult
to probe the bulk in them. 
Moreover, for generic matrix/tensor models the bulk is highly stringy.
BFSS does have a well-defined ten-dimension bulk, but it is still not clear 
what is the boundary counterpart of extremal surfaces in the bulk \cite{Anous:2019rqb}.

Obviously it would be very interesting to study 
QEC in systems with spatial structure.
One possibility would be to study coupled SYK models or coupled tensor models, 
since at low temperatures they are dual to
traversable wormholes \cite{mq} and in this case one can even talk about RT surfaces. 
The mass deformation of BFSS model, known as BMN possesses M2/M5 brane vacua which are 
realized as fuzzy-spheres. It would be interesting so study the error correction in this 
case and involve some dynamical input. It is expected that low-energy states of BMN/BFSS
have 't Hooft scaling (\ref{eq:true_thooft}) in the matrix elements. 
As was noted in the main text, 't Hooft scaling 
grows slower with $N$, than the bounds we used,  so assuming
't Hooft scaling would yield better bounds.

It might seem that our results are quite weak: we can correct errors involving only
$1/N$ fraction of fermions at best. However, 
in our setup we can reconstruct the original state \textit{completely}. Recall the Figure
\ref{reconstruction} which was the starting point for 
holographic error correction. In this setup erasure of
$A,B,C$ should not affect point $O$. However, upon erasure of $A$, for example, the bulk information
inside its casual wedge $\Wc_C[A]$ is lost. So from the 
boundary perspective we should not expect
to reconstruct the original state after big erasures, since some information is really lost.
It was proposed in \cite{RT_from_QEC} that this should correspond to the so-called \textit{subsystem
quantum error correction}. 
%In boundary terms it means that if region $\bar{a}$ is indeed lost
%after erasure of $\bar{A}$, then error operators $E_\al$ supported on $\bar{A}$ do not necessarily
%act by identity on $\bar{a}$:
%\beq
%\Pco E^\dagger_\al E_\be \Pco \sim \delta_{\al \be} \id_a \otimes X_{\bar{a}} \Pco 	 
%\eeq   
It would be interesting to study this setup for singlets and see how much the space of errors
can be enlarged.

\iffalse
Also we can compare our results with probabilistic construction of quantum error correcting codes 
for large number of qubits.
Quantum Gilber--Varshamov bound states that for a large number $n$ of physical qubits, encoding
$k$ logical qubits, error correcting code(with error exponentially small in $n$) 
can be probabilistically constructed as long as the number of affected
physical qubits $t$ satisfies:
\begin{align}
\frac{k}{n} \ge 1 - H \l \frac{2t}{n}  \r \\
H(x) = -x \log_2(x) - (1-x) \log_2(1-x) \nonumber
\end{align}

In the CTKT case: $n=N^3/2, k=N^3/2 - \# N^2 \log N$, which means that
\beq
t_{\rm GV, CTKT} \sim N^2
\eeq
whereas we saw that in the CTKT case we can correct up to $N$ errors.
\beq
t_{\rm S, CTKT} \sim N 
\eeq

In the BFSS/matrix model case, $n \sim N^2$. Also the dimension $\dimC$ of the code subspace can not
be more than $N$. It means that $k \sim \log N$. From the GV bound it follows that $t \sim n$, so
\beq
t_{\rm GV, BFSS} \sim N^2
\eeq
\fi

\section*{Acknowledgment}
I am indebted to J.~Maldacena for numerous comments and discussions.
I am grateful to I.~Danilenko for answering numerous questions about Lie algebras, A.~Alfieri
for pointing out orthogonal Procrustes problem and C.~King for helping to find
the proper cuts for 10-particle operators in tensor models.
Also I would like to thank A.~Almheiri, X.~Dong, A.~Dymarsky, A.~Gorsky, I.~Klebanov, D.~Marolf, 
K.~Pakrouski, F.~Popov and B.~Swingle for 
discussions and comments on the manuscript.
\appendix

\section{Proving the bound on recovery fidelity}
\label{app:fidelity}
\subsection{Notation}
Let us fix the notation:
\begin{itemize}
\item Big Latin indices, $I,J,\dots$ denote indices of Majorana fermions.
\item Small Latin indices from the beginning of the alphabet, $a,b,c,\dots$ denote the orthonormal
basis in the code subspace. For example, the projector $\Pco$ can be written as:
\beq
\Pco = \sum_c | c \ket \bra c |
\eeq
\item Greek indices $\al,\be$ denote a string of $\psi_{I_1} \psi_{I_2} \dots$ of length no more than
$S_0$, with indices drawn from a particular subset of $S_0$ elements.

\item We consider the error operators
$E_\alpha, \alpha=1,\dots,M_0$ 
acting on density matrices as
\beq
\rho \ra \sum_\alpha E_\alpha \rho E^\dagger_\alpha
\eeq

\item Their normalization:
\beq
E^\dagger_\al E_\al = \frac{\id}{M_0} \quad \text{(no sum)}
\eeq

\item Up to a normalization, $E_\al$ is a unitary operator. We denote to corresponding unitary by 
\beq
U_\al = \sqrt{M_0} E_\al
\eeq

\item Operators $E_\al$ satisfy the following analogue of Knill--Laflamme condition:
\beq
\Pco E^\dagger_\al E_\be \Pco = \frac{1}{M_0} \l \delta_{\al \be} +  \Oc_{\al \be} \r \Pco  
\eeq
Where $\Oc_{ij}$ commutes with $\Pco$.
As follows from this requirement: 
$\Oc_{\al \al}=0$

\item Finally, states $| c\ \al \ket$ denote normalized states:
\beq
| c\ \al \ket = U_\al | c \ket
\eeq 
These states are charged and do not belong a code subspace. Unfortunately, these states are not orthogonal.
However, they are almost orthogonal.
\beq
\label{non_orth}
\bra a\ \al | b\ \be \ket \propto \frac{1}{N} \neq 0, \ \al \neq \be
\eeq
We will make this statement precise later.

\end{itemize}

%\item We will be assuming that each $E_\al$ is (up to $1/\sqrt{M_0}$ normalization) a string of Majorana
%operators $\psi_I$. $I$ is a multi-index taking $3N$ values. 
%We demand the length of the string to be no more than $S_0$. Obviously, $M_0 = 2^{S_0}$. 
%\item For a fixed string $E_\al$, containing $k$ fermions, the number of indices $j$ for which $\Oc_{ij} \neq 0$ is
%\becoq
%\#j \le (k!)^d
%\eeq
%This is evident from the fact that we want to contract all the gauge indices. Therefore for a fixed
%index-set $i$, the set $j$ contains only $d$ permutations

\subsection{Idea of the proof}
The proof is inspired by the proof of Theorem 10.1 in \cite{NC}. The main challenge is the fact that naively
defined syndrome operators do not form a well-defined quantum operation. Let us define naive 
projectors $P_\al$:
\beq
P_\al = U_\al \Pco U^\dagger_\al = \sum_c | c\ \al \ket \bra c\ \al |
\eeq
And naive recovery $\Rc$:
\beq
\Rc(\sigma) = \sum_\al U^\dagger_\al P_\al \sigma P_\al U_\al + P_R \sigma P_R
\eeq
We will need to find projector $P_R$ which completes the set of projectors $P_\al$.
This recovery recovers well: in the sum over $\al$ and $\be$:
\beq
\sum_{\al,\be} U^\dagger_\al P_\al E_\beta \rho E_\be^\dagger P_\al U_\al 
\eeq
the term with $\al=\be$ is exactly the original density matrix and the terms with $\al \neq \be$ are suppressed
by $1/N$. However, this is not a trace-preserving operation: trace preservation requires the operator $P_R$ to
be:
\beq
\id - \sum_\al P_\al = P^\dagger_R P_R
\eeq
However, because of (\ref{non_orth}), different projectors $P_\al$ are not orthogonal. Therefore the left
hand side is not even a positive operator. Despite the fact that $P_\al$ are almost orthogonal, we need them
to be exactly orthogonal. 

We will show in the next subsection (Section \ref{sec:OProc}) that there exists an orthogonalization
$\tilde{P_\al}$ of $P_\al$ 
which is close to $P_\al$. After that we will define an improved recovery $\tilde{\Rc}$:
\beq
\tilde{\Rc}(\sigma) = \sum_\al U^\dagger_\al \tilde{P_\al} \sigma \tilde{P_\al} U_\al + \tilde{P_R} \sigma 
\tilde{P_R}
\eeq
After obtaining simple bounds on distance between $P_\al$ and $\tilde{P_\al}$ in 
Sections \ref{CTKT:lemmas} and \ref{MM:lemmas}, 
it will be a simple
algebraic exercise to show that $\tilde{\Rc}$ has high fidelity.

\subsection{Orthogonal Procrustes problem}
\label{sec:OProc}
So we are facing the following geometric problem: we have a set of normalized vectors $| a\ \al \ket$,
which are almost orthogonal. We want to define a new set $|\tilde{a\ \al} \ket$ which are normalized and orthogonal.
Obviously there are infinitely many choices of orthogonal basis $|\tilde{a\ \al} \ket$. We want the one which maximizes
$| \bra  a\ \al | \tilde{a\ \al} \ket |$.

This problem can be easily solved explicitly. In mathematical literature this is knows as
orthogonal Procrustes problem. To simplify our notation a bit, let us
define $v_i = |a\ \al \ket$, so that small Latin indices from the middle of the alphabet denote the states:
$i=\{a\ \al\}$. We want to find orthonormal basis $e_i$ closest to $v_i$. Let us denote the Gram matrix 
of vectors $v_i$ as $V_{ij}$:
\beq
V_{ij} = (v_i, v_j)
\eeq
Matrix $V_{ij}$ is hermitian and positive-definite(since the scalar product is positive-definite).
It can be diagonalized with the help of unitary $U$, where diagonal matrix $D$ has positive entries:
\beq
V = U^\dagger D U
\eeq
Denote the scalar products of $e_i$ and $v_j$ by $M_{ij}$:
\beq
M_{ij} = (e_i,v_j)
\eeq
In general $M$ is not hermitian. But obviously we have the following matrix equation:
\beq
V=M^\dagger M
\eeq
General solution of this equation is 
\beq
\label{KDU}
M=K \sqrt{D} U
\eeq
where $K$ is another unitary.
We need to minimize the distance from $M$ to identity matrix:
\beq
{\rm min} \Tr (M-\id)^\dagger (M-\id)
\eeq
It is easy to see that the extremum of this problem is equivalent to extremizing the trace of 
$M+M^\dagger$. Recalling eq. (\ref{KDU}), we see that the extremal $K=U^\dagger$, so that 
$M$ is a square-root of $V$: 
\beq
V=M^2
\eeq
This answer make sense: if $V$ is close to identity, as we expect, $M$ is close to identity too.
Namely, denote $\delta V = V-\id$. Then 
\beq
M=\sqrt{\id+\delta V} = \id + \delta M =\id + \sum_{k=1}^{+\infty} C^k_{1/2} (\delta V)^k
\eeq

%Assume that $|(\delta V)^k_{ij}| \le \ep^k$ - $ij$ element of $k$-power of $\delta V$ is less than $0<\ep<1$.
%Then obviously:
%\beq
%|(M-\id)_{ij}| \le \sqrt{1+\ep} - 1 \le \ep
%\eeq

Returning to the previous notation, vectors $e_i$ are $| a\ \tilde{\al} \ket $. So we see that
\beq
\label{def:M}
\bra \tilde{a\ \al} | b\ \be \ket =
M^{\tilde{a\ \al}}_{b\ \be} = \delta^{a\ \al}_{b\ \be} + (\delta M)^{a\ \al}_{b\ \be} =  
 \delta^{a\ \al}_{b\ \be} + \sum_{k=1}^{+\infty} C_{1/2}^k (\delta V^k)^{a\ \al}_{b\ \be}
\eeq

It would be convenient to introduce the following bounds for the Gram matrix $\delta V$:

\begin{align}
\label{uni_bounds}
|[(\delta V)^k]^{a\ \al}_{b\ \be} | \le \eta \l \ep \r^{k-1} \nonumber \\
\sum_\al |[(\delta V)^k]^{a\ \al}_{b\ \be} | \le  \l \ep \r^{k}
\end{align}
where $\eta$ and $\ep$ are small. We will prove in Sections (\ref{mm:spec}) and (\ref{kt:spec}) that
\beq
\ep_{\rm MM} = 2 \MU \frac{\B S_0}{N^{1/10}} ,\quad \eta_{\rm MM} = 24 \MU \frac{\B^3}{N^{3/10}}
\eeq
\beq
\ep_{\rm CTKT} = \frac{22 S_0^5}{N} ,\quad \eta_{\rm CTKT} = \frac{512}{N} 
\eeq

\subsection{Gram matrix bounds}
This Section is dedicated to the following technical Corollary:
\begin{corollary}
\label{cor1}
Assume $\ep<1$. Then we have the following bounds:
\end{corollary}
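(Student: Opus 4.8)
The plan is to read the claimed bounds on $\delta M = M - \id$ straight off the series representation already recorded in (\ref{def:M}), estimating it term by term with the hypotheses (\ref{uni_bounds}) on the powers of $\delta V$. Since $M = \sqrt{\id+\delta V}$,
\beq
(\delta M)^{a\ \al}_{b\ \be} = \sum_{k=1}^{\infty} C^k_{1/2}\,[(\delta V)^k]^{a\ \al}_{b\ \be},
\eeq
so everything reduces to bounding this absolutely convergent sum. The one elementary input I would record first is that $|C^k_{1/2}| \le \tfrac12$ for all $k\ge1$: indeed $|C^k_{1/2}| = \binom{2k}{k}\big/\bigl(4^k(2k-1)\bigr)$, which decreases monotonically in $k$ with maximal value $\tfrac12$ at $k=1$; equivalently $\sum_{k\ge1}|C^k_{1/2}|x^k = 1-\sqrt{1-x}$ on $[0,1]$.

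With this in hand the two estimates are immediate. Applying the triangle inequality entry by entry and then the first line of (\ref{uni_bounds}),
\beq
\bigl|(\delta M)^{a\ \al}_{b\ \be}\bigr| \le \sum_{k=1}^{\infty}|C^k_{1/2}|\,\bigl|[(\delta V)^k]^{a\ \al}_{b\ \be}\bigr| \le \eta\sum_{k=1}^{\infty}|C^k_{1/2}|\,\ep^{k-1} \le \frac{\eta}{2(1-\ep)},
\eeq
where $\ep<1$ is used to sum the geometric series (one may instead keep the sharper $\sum_{k\ge1}|C^k_{1/2}|\ep^{k-1} = (1-\sqrt{1-\ep})/\ep$). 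For the row sum I interchange the sums over $k$ and $\al$ --- legitimate by absolute convergence --- and use the second line of (\ref{uni_bounds}),
\beq
\sum_{\al}\bigl|(\delta M)^{a\ \al}_{b\ \be}\bigr| \le \sum_{k=1}^{\infty}|C^k_{1/2}|\sum_{\al}\bigl|[(\delta V)^k]^{a\ \al}_{b\ \be}\bigr| \le \sum_{k=1}^{\infty}|C^k_{1/2}|\,\ep^{k} = 1-\sqrt{1-\ep} \le \ep.
\eeq
Because the Gram matrix is Hermitian, so are $\delta V$ and $\delta M$, which gives the same bounds with the index pairs $(a,\al)$ and $(b,\be)$ exchanged --- the form that will be needed on both sides of $\tilde P_\al$ later on.

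The only point requiring care is the legitimacy of the square-root expansion itself: for $\sqrt{\id+\delta V}$ to coincide with its binomial series (and for the rearrangements above) one needs $\|\delta V\|_{\rm op}<1$. This I would extract from (\ref{uni_bounds}) at $k=1$ together with the standing assumption $\ep<1$: since the strings $\al,\be$ are drawn from one fixed $S_0$-element erasure set, $\delta V$ is a finite matrix, and its entrywise plus row-sum bounds control its operator norm (e.g.\ by a Schur-test estimate), while $\ep<1$ is exactly what makes the relevant geometric series converge. Everything past that is the routine bookkeeping displayed above, so I expect this ``obstacle'' to amount only to stating the convergence hypotheses cleanly rather than to any genuine difficulty.
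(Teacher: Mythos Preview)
The corollary as stated in the paper is truncated --- the actual content is the bulleted list that follows it, consisting of seven specific bounds on sums of products of up to four $\delta M$ factors with particular index patterns, such as $\bigl|\sum_\al \delta M^{s\ \al}_{s\ \al}\bigr|$, $\bigl|\sum_{a,\al}\delta M^{s\ \al}_{a\ \al}\delta M^{a\ \al}_{s\ \al}\bigr|$, $\bigl|\sum_{\al,\be}\delta M^{s\ \al}_{s\ \be}\delta M^{s\ \be}_{s\ \al}\bigr|$, and so on. You guessed that the intended bounds were a single-entry estimate on $\delta M$ and a row-sum estimate; your proofs of \emph{those} statements are fine, but they are not what the corollary claims, and more importantly your two estimates are not strong enough to recover the paper's bounds.

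Two concrete failures illustrate the gap. First, for the diagonal sum $\sum_\al \delta M^{s\ \al}_{s\ \al}$ the paper obtains $\tfrac{\ep}{2}2^{S_0}\eta$; your entrywise bound summed over $2^{S_0}$ values of $\al$ gives only $2^{S_0}\eta/(2(1-\ep))$, missing the crucial extra factor of $\ep$. The paper gets it because on the diagonal the $k=1$ term of the binomial series vanishes ($\delta V$ has zero diagonal), so the expansion starts at $k=2$. Second, and more seriously, for the double sum $\sum_{\al,\be}\delta M^{s\ \al}_{s\ \be}\delta M^{s\ \be}_{s\ \al}$ your approach would produce a factor $4^{S_0}$, whereas the paper needs $2^{S_0}$. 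The paper's method is to expand each $\delta M$ in its binomial series, write the resulting matrix elements as $\bra s|\Oc_{\al\gamma_1}\cdots\Oc_{\gamma_{k-1}\be}|s\ket$, and then use that the sums over intermediate code states $\sum_a|a\ket\bra a|$ act as the identity on singlets to concatenate the two chains into a single long string; one then applies the first bound in (\ref{uni_bounds}) once (picking up a single $\eta$) and the second bound in (\ref{uni_bounds}) for all remaining $\gamma$-sums including $\be$. This concatenation trick is the missing idea: it is what converts products of $\delta M$ into higher powers of $\delta V$ and thereby avoids paying a factor of $2^{S_0}$ for each free string index.
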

\begin{itemize}
\item Sum of the diagonal elements:
\begin{align}
|\sum_\al \delta M^{s\ \al}_{s\ \al}| \le \sum_\al \sum_{k=2}^{+\infty} |C_{1/2}^k  (\delta V^k)^{s\ \al}_{s\ \al} |
\le \\
\le 2^{S_0} \eta \sum_{k=2}^{+\infty} |C_{1/2}^k| \ep^{k-1} = 
2^{S_0} \eta \frac{1}{\ep} \l 1-\sqrt{1-\ep}-\frac{\ep}{2} \r \le \frac{\ep}{2} 2^{S_0} \eta
\end{align}
Factor $2^{S_0}$ comes from summing over $\al$. 

\item Similar sum of diagonal elements squared:
\begin{align}
| \sum_{\al} \delta M^{s\ \al}_{s\ \al} \delta M^{s\ \al}_{s\ \al} | \le
2^{S_0} \l \eta \r^2 \sum_{k,l=2} | C_{1/2}^k C_{1/2}^l | \ep^{k+l-2} = \\
= 2^{S_0} \l \eta  \r^2 \l \frac{1-\sqrt{1-\ep}-\ep/2}{\ep} \r^2 \le 
\frac{\ep^2}{4} 2^{S_0} \l \eta \r^2
\label{}
\end{align}

\item Even if we have a sum over singlets states $| a \ket \bra a | $ we can use the definition of $\delta M$ and turn it
into identity:
\begin{align}
\sum_{a,\al}  \delta M^{s\ \al}_{a\ \al} \delta M^{a\ \al}_{s\ \al}  = \nonumber \\
 \sum_{a,\al} \sum_{\gamma,\mu;k,l=2} C_{1/2}^k C_{1/2}^l 
 \bra s | \Oc_{\al \gamma_1} \dots \Oc_{\gamma_{k-1} \al} | a \ket 
\bra a | \Oc_{\al \mu_1} \dots \Oc_{\mu_{l-1} \al} | s \ket 
\label{}
\end{align}
We can insert more projectors onto code subspace to turn this sum into
matrix powers of $\delta V$, since
\beq
\l \delta V \r^{a\ \alpha}_{b \beta} = \bra a | U^\dagger_\alpha U_\beta | b \ket = 
\bra a |  \Oc_{\alpha \beta}| b \ket  
\eeq
 For two norms involving $\Oc_{\al \gamma_1}$ and $\Oc_{\al \mu_1}$ we will
use the basic bound with $\eta$(Corollary \ref{cor1}, $k=1$), so that we can freely sum over all $\gamma$ and $\mu$:
\begin{align}
| \sum_{a,\al}  \delta M^{s\ \al}_{a\ \al} \delta M^{a\ \al}_{s\ \al} | \le
\l \eta \r^2 2^{S_0} \sum_{k,l=2} |C_{1/2}^k C_{1/2}^l| \ep^{k+l-2} = \nonumber \\
= \l \eta \r^2 2^{S_0} \l \frac{1-\sqrt{1-\ep}-\ep/2}{\ep} \r^2 \le 
\frac{\ep^2}{4} \l \eta \r^2 2^{S_0}
\end{align}

\item We will need to bound the sum involving two set of fermionic strings, $\al$ and $\be$:
\begin{align}
 \sum_{\al,\be} \delta M^{s\ \al}_{s\ \be} \delta M^{s\ \be}_{s\ \al} =
\sum_{\al,\be; k,l=1} C_{1/2}^k C_{1/2}^l (\delta V^k)^{s\ \al}_{s\ \be} (\delta V^l)^{s\ \be}_{s\ \al}
\label{}
\end{align}
Again, expanding each $\delta V$ in terms of norms
$\Oc_{\gamma_i \gamma_{i+1}}$, we use the $\eta$ bound for the norm with $\Oc_{\al \gamma_1}$. 
After that we can perform the sum over all index strings including $\beta$. 
Notice that now the sum over $k,l$ starts from
$1$:
\begin{align}
 |\sum_{\al,\be} \delta M^{s\ \al}_{s\ \be} \delta M^{s\ \be}_{s\ \al} | \le
 \eta 2^{S_0} \sum_{k,l=1} | C_{1/2}^k C_{1/2}^l | \ep^{k+l-1} = \\ \nonumber 
= \eta 2^{S_0} \frac{1}{\ep} \l 1 - \sqrt{1-\ep} \r^2  \le \ep \eta 2^{S_0}
\end{align}

\item Using the same reasoning we straightforwardly obtain the following bounds for various products
of three $\delta M$:
\begin{align}
|\sum_{\al,\be,a} \delta M^{s\ \al}_{a\ \al} \delta M^{a\ \al}_{s\ \beta} \delta M^{s\ \be}_{s\ \al} |
\le \nonumber \\
\le  2^{S_0} \l \eta\r^2 \frac{1}{\ep^2} \l 1-\sqrt{1-\ep} -\ep/2 \r \l 1-\sqrt{1-\ep} \r^2 \le
\frac{\ep^2}{2} 2^{S_0} \l \eta \r^2
\label{}
\end{align}

\begin{align}
|\sum_{\al,a} \delta M^{s\ \al}_{a\ \al} \delta M^{a\ \al}_{s\ \al} \delta M^{s\ \al}_{s\ \al}| \le \nonumber \\
\le
2^{S_0} \l  \eta \r^3 \l \frac{1-\sqrt{1-\ep}-\ep/2}{\ep} \r^3 \le
\frac{\ep^3}{8} 2^{S_0} \l \eta \r^3
\label{}
\end{align}

\item And finally, the sum involving four $\delta M$:
\begin{align}
| \sum_{a,b,\al,\be} \delta M^{s\ \al}_{a\ \al} \delta M^{a\ \al}_{s\ \beta} 
\delta M^{s\ \beta}_{b\ \al} \delta M^{b\ \al}_{s\ \al}|  \le \nonumber \\
\le 2^{S_0} \l \eta \r^3 \ep \l \frac{1-\sqrt{1-\ep}}{\ep} \r^2 \l \frac{1-\sqrt{1-\ep}-\ep/2}{\ep} \r^2 \le
\frac{\ep^3}{4} 2^{S_0} \l \eta \r^3
\label{}
\end{align}

\end{itemize}

\subsection{Finishing the proof}
Having obtained the orthogonal basis $| \tilde{a\ \al} \ket$, we can defined 
enhanced projectors:
\beq
\tilde{P}_\al = \sum_{a} | \tilde{a\ \al} \ket \bra \tilde{a\ \al} |
\eeq
And the projector orthogonal to all of them:
\beq
\tilde{P}_R = \id - \sum_\al \tilde{P}_\al
\eeq
We have to introduce it to make the following recovery operation well-defined, but it is not going
to play any significant role, since it is orthogonal to all $|a \ket$ and $| a\ \al \ket$.

We define the enhanced recovery operation $\tilde{\Rc}$ as:
\beq
\tilde{\Rc}(\sigma) = \sum_\al U^\dagger \tilde{P}_\al \sigma \tilde{P}_\al U_\al + \tilde{P}_R \sigma \tilde{P}_R
\eeq 

\begin{theorem}
\label{theorem1}
Suppose that the Gram matrix $\delta V$ obeys the bounds (\ref{uni_bounds}) with
$\ep,\eta<1$.
Then the recovery operation $\tilde{\Rc}$ has following lower bound on fidelity
\beq
\label{uni_fidelity}
F(\rho, \tilde{\Rc}(\Ec(\rho))) \ge 1 - 6 \ep \eta,
\eeq
for $\rho$ in singlet subspace.
\end{theorem}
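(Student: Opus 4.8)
The plan is to bound the fidelity $F(\rho, \tilde{\Rc}(\Ec(\rho)))$ directly by expanding $\tilde{\Rc}(\Ec(\rho))$ in the orthogonalized projectors $\tilde P_\al$ and comparing to $\rho$. First I would write $\Ec(\rho) = \sum_\be E_\be \rho E_\be^\dagger = \frac{1}{M_0}\sum_\be U_\be \rho U_\be^\dagger$ and plug this into $\tilde{\Rc}$. Since $\tilde P_R$ is orthogonal to every $|a\rangle$ and every $|a\ \al\rangle$, and $\rho$ is supported on the code subspace, the $\tilde P_R$ term contributes nothing to the overlap with $\rho$ that controls fidelity; I would record this and then work only with $\sum_\al U_\al^\dagger \tilde P_\al \big(\frac{1}{M_0}\sum_\be U_\be \rho U_\be^\dagger\big)\tilde P_\al U_\al$. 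Using $\tilde P_\al = \sum_a |\tilde{a\ \al}\rangle\langle \tilde{a\ \al}|$ and $U_\al^\dagger |\tilde{a\ \al}\rangle = \sum_b |b\rangle M^{b\ \al}_{\ast}$-type expressions (via the definition (\ref{def:M}) of $M = \id + \delta M$), each matrix element $\langle \tilde{a\ \al} | U_\be | c\rangle$ becomes $M^{a\ \al}_{c\ \be}$, so the whole recovered state is a quadratic form in $M = \id + \delta M$ sandwiched against $\rho$.

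Next I would split $M = \id + \delta M$ everywhere and organize the resulting sum by the number of $\delta M$ factors: the zeroth-order term reproduces $\rho$ exactly (this is where the $\al = \be$ diagonal and the completeness of $\{|c\rangle\}$ on the code subspace conspire to give the identity channel), and all remaining terms carry at least one factor of $\delta M$. I would then estimate the fidelity loss. The cleanest route is to use $F(\rho,\sigma) \ge \langle \psi | \sigma | \psi\rangle$ for a purification-type argument, or more simply to bound $1 - F \le D(\rho, \tilde\Rc(\Ec(\rho)))$ by the trace distance and then bound $\|\tilde\Rc(\Ec(\rho)) - \rho\|_1$ by the sum of the $\delta M$-containing terms. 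Each such term, after summing over the internal code indices $a,b$ and the string indices $\al,\be$, is exactly one of the sums catalogued in Corollary \ref{cor1}: sums of one, two, three, or four $\delta M$'s with various index contractions, each bounded by a small multiple of $\ep^{j}\eta^{\,?}2^{S_0}$. The crucial cancellation is that the $2^{S_0} = M_0$ from summing over $\al$ (or $\be$) is exactly cancelled by the $1/M_0$ normalization in $\Ec$, leaving bounds of the form $\const \cdot \ep\,\eta$ (the lowest surviving order being one power of $\ep$ times one power of $\eta$, since every nontrivial term has at least one $\delta M \sim \eta$ and the sum over strings costs one power of $\ep$).

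The main obstacle I expect is the bookkeeping: correctly enumerating which of the Corollary \ref{cor1} sums each term maps to, tracking where the diagonal $\al=\be$ piece has already been extracted so one does not double-count, and verifying that the cross terms between $\tilde P_R$ and the rest genuinely vanish rather than merely being small. Assembling the numerical coefficient to land at exactly $6\ep\eta$ requires being careful that, e.g., a term with two $\delta M$'s and a free code index sum (which is $O(\ep^2\eta^2 2^{S_0})$, hence $O(\ep^2\eta^2)$ after the $1/M_0$) is genuinely subleading compared to the $O(\ep\eta)$ terms, so that under $\ep,\eta<1$ everything can be absorbed into $6\ep\eta$. Once the term-by-term dictionary to Corollary \ref{cor1} is set up, the rest is the "simple algebraic exercise" promised in the idea-of-the-proof subsection: collect the finitely many contributions, apply the stated bounds, and use $\ep,\eta<1$ to simplify the geometric-series remainders.
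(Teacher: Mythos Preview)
Your proposal follows essentially the same route as the paper: write $\tilde{\Rc}(\Ec(\rho))$ in terms of the matrix $M=\sqrt{V}$, expand $M=\id+\delta M$, identify the zeroth-order term as perfect recovery, and bound the remaining terms using Corollary~\ref{cor1}. Two points, however, need sharpening.

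First, the paper does \emph{not} attack a general mixed $\rho$ directly, and for good reason. For pure $\rho=|s\rangle\langle s|$ one has the clean identity $F^2=\langle s|\,\tilde{\Rc}(\Ec(\rho))\,|s\rangle$, which unfolds into exactly the quartic sum
\[
F^2=\frac{1}{2^{S_0}}\sum_{\al,\be;a,b} M^{s\,\al}_{a\,\al}\,M^{a\,\al}_{s\,\be}\,M^{s\,\be}_{b\,\al}\,M^{b\,\al}_{s\,\al},
\]
a product of four $M$'s and nothing else. Expanding $M=\id+\delta M$ gives $16$ terms, the all-$\id$ term equals $1$, and every other term matches one of the sums in Corollary~\ref{cor1} after the $1/2^{S_0}$ cancels the $2^{S_0}$. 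This is where the explicit constant $6$ comes from: $3\ep\eta$ from the four terms with a single $\delta M$ (two vanish by $\delta M^{s\,\al}_{s\,\al}=0$, actually: diagonal $\delta V$ vanishes so the lowest contribution starts at $k=2$ in Corollary~\ref{cor1}), plus the higher-order pieces all absorbed under $\ep,\eta<1$. Your alternative via $1-F\le D$ and bounding $\|\hat\rho-\rho\|_1$ is \emph{harder}, not simpler: $\hat\rho$ has support outside the code subspace (through $U_\al^\dagger|\tilde{a\,\al}\rangle$ with the tilde-states mixing different $\gamma$'s), and the $\tilde P_R$ piece, while orthogonal to $|s\rangle$, does contribute to the trace norm. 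Stick with the direct $F^2$ computation for pure states.

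Second, you never say how to pass to mixed $\rho$. The paper's last step is strong concavity of the fidelity: writing $\rho=\sum_i p_i|s_i\rangle\langle s_i|$ and using $F\big(\sum p_i\rho_i,\sum p_i\sigma_i\big)\ge\sum p_i F(\rho_i,\sigma_i)$ immediately transfers the pure-state bound to arbitrary code-subspace $\rho$. Without this, or an equivalent device, the mixed-state claim is unproved.
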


\begin{proof}
Assume $\rho = | s \ket \bra s |$.
Then the fidelity is
\begin{align}
F^2_{| s \ket \bra s |} = F \l | s \ket \bra s| , \tilde{\Rc} \l \Ec \l | s \ket \bra s | \r \r \r^2 = 
\frac{1}{M_0} 
\sum_{\al \be} \bra s | U^\dagger_\al \tilde{P}_\al U_\be | s \ket \bra s| U^\dagger_\be \tilde{P}_\al U_\al  | s \ket
 = \nonumber \\
=\frac{1}{2^{S_0}} \sum_{\al,\be;a,b} M^{s\ \al}_{a\ \al} M^{a\ \al}_{s\ \beta} M^{s\ \beta}_{b\ \al} 
M^{b\ \al}_{s\ \al}
\end{align}
Now we expand $M=\id + \delta M$. The term with four $\id$ yield 1, the rest of the terms we can bound with 
Corollary \ref{cor1}. Putting all the terms together we have:
\beq
| 1 - F_{|s \ket \bra s |}^2 | \le 3 \ep \l \eta \r + \l 5 \times \frac{\ep^2}{4} + 2 \times
\frac{\ep^2}{2} \r  \l \eta \r^2 + 
\l 2 \times \frac{\ep^3}{8} + \frac{\ep^3}{4} \r \l  \eta \r^3 \le 6 \ep \eta
\eeq
The last inequality holds for $\ep,\eta<1$.

The last step is to consider a generic density matrix 
\beq
\rho = \sum_i p_i \rho_i = \sum_i p_i| s_i \ket  \bra s_i |.
\eeq
Then after the recovery the density matrix $\tilde{\Rc}(\Ec(\rho))$ is given by:
\beq
\tilde{\Rc}(\Ec(\rho)) = \sum_i p_i \tilde{\rho}_i = \sum_i p_i \tilde{\Rc}(\Ec(\rho_i)) 
\eeq
To obtain the desired bound we use the strong concavity of the fidelity:
\beq
F \l \rho,\tilde{\Rc}(\Ec(\rho)) \r = F \l \sum_i p_i, \rho_i, \sum_i p_i \tilde{\rho}_i \r \ge
\sum_i p_i F(\rho_i,\tilde{\rho}_i) \ge 1- 6 \ep \eta
\eeq

\end{proof}

\subsection{Specific bounds for CTKT model}
\label{kt:spec}
\subsubsection{Basic bounds}
Let us begin from recalling the basic bounds obtained in the main text.
\begin{itemize}
\item For CTKT model, Majorana fermion index $I=(a,b,c)$, where $a,b,c$ are
fundamental indices of $O(N)^3$.
\item If the total length of $E^\dagger_\al E_\be$ is $2k$, then the matrix elements of the singlet
operator $\Oc_{\al \be}$ are bounded by:
\beq
\label{CTKT:B_bound}
|\bra s | \Oc_{\al \be} | s  \ket|  \le  \frac{8^k k!^3}{N^{k/2}} 
\eeq
where state $| s \ket$ is $O(N)^3$ singlet.
\item If function $l(\al)$ denote the length of a string $\al$, the above
bound can be rewritten as:
\beq
\label{CTKT:N_bound}
|\bra s | \Oc_{\al \be} | s \ket|  \le  \frac{8^{l/2} (l/2)!^3}{N^{l/4}} \le \frac{\C}{N}, \ l=l(\al \be)
\eeq
The right hand side is a decreasing function of $l$ since we are
considering $l \le S_0 \le N^{1/6}/2$. Coefficient $\C$ comes from $l=4$.
%Also, in terms of the numerical coefficient the above expression is
%very loose: for $s=4$ we have the explicit expansion (\ref{four_psi}).
\end{itemize}

%Then the approximate recovery operation $\Rc$ exists and for an arbitrary initial 
%density matrix $\rho$ with support in the code subspace, the fidelity is bounded from below by 
%\beq
%F\l \rho, \tilde{\Rc}(\Ec(\rho)) \r \ge 1 -  \frac{145}{N},
%\eeq
%as long as $S_0$ is less than $N^{1/6}/2$:
%\beq
%S_0 \le \frac{N^{1/6}}{2}
%\eeq

\subsubsection{Gram matrix bounds}
\label{CTKT:lemmas}
%Having established the existence of orthogonal basis $|\tilde{a\ \al}\ket$, we proceed with
%proving various bounds for overlaps $\bra \tilde{a\ \al}| a\ \al \ket$. This Appendix is very technical,
%although the reasoning used here is elementary. All these bounds are used to prove the bound on fidelity
%in the next subsection. 

\begin{lemma}
\label{CTKT:lemma1}
The following inequality holds for fixed $a,b,\al$:
\beq
\label{lemma1_bound}
\sum_{\be, \be \neq \al} |(\delta V)^{a\ \al}_{b\ \be}| = 
\sum_{\be,\beta \neq \al} |\bra a\ \al | b\ \beta \ket| \le
 \frac{\K S_0^5 }{N}
\eeq
as long as $S_0 \le N^{1/6}/2$. Also in the above expression we used the bound for singlet operators
in the form (\ref{CTKT:N_bound}).
\end{lemma}
\begin{proof}
It is not difficult to see that above sum goes over all non-empty strings $\gamma$, regardless
of the string $\al$:
\beq
\sum_{\be,\beta \neq \al} |\bra a\ \al | b\ \beta \ket| =
\sum_{\gamma,\gamma \neq \emptyset} |\bra a |U_\gamma |b  \ket| \le 
\sum_{k=2}^{S_0/2} C_{S_0}^{2k} \frac{8^{k} k!^3}{N^{k/2}}
\eeq
Notice that although $\ga$ is a combination of strings $\al$ and $\be$ it may not be longer
than $S_0$.
The last inequality we used comes from eq. (\ref{CTKT:B_bound}). Binomial coefficient comes from
choosing $2k$ fermionic sites from available $S_0$.
It is easy to see that as long as $S_0 \le N^{1/6}/2$ the terms in the sum are decreasing with $k$.
So we can bound the sum by taking the biggest term $k=2$ and multiplying it by $S_0/2$. 
This way we obtain the bound (\ref{lemma1_bound}). This seems like a very crude approximation,
but in fact as can be seen from Stirling approximation, for $k \propto e N^{1/6}$, a 
single matrix element $8^k k!^3/N^{k/2}$ becomes of 
order one, so $S_0 \lesssim \Oc \l N^{1/6} \r $ is 
a tight bound in terms of powers of $N$.
\end{proof}

Recall that the bound (\ref{CTKT:B_bound}) or equivalently (\ref{CTKT:N_bound}) 
is true for any singlet state $| c \ket$.
Also recall that $\Oc_{\al \be}$ is either hermitian or anti-hermitian.
It implies that it is actually a bound for the (absolute value of) maximal eigenvalue. From that we infer that
\beq
\label{CTKT:sqrt_bound}
\sqrt{ \bra s | \Oc^\dagger_{\al \be} \Oc_{\al \be}| s \ket} \le \frac{ (l/2)!^3 8^{l/2}}{N^{l/4}}, \
l=l(\al \be)
\eeq
Again we would like to stress that this bound hold \textit{for any} singlet $| s \ket$.

Now we will use this fact to bound $\delta V^k$ for arbitrary $k$.
Let us recall that $\delta V$ is the Gram matrix without diagonal elements:
\beq
\delta V^{a\ \al}_{b\ \be} = \bra a\ \al |b\ \be \ket - \delta^{a\ \al}_{b\ \be}
\eeq

\begin{lemma} For CTKT model
\label{CTKT:lemmak}
\begin{align}
|[(\delta V)^k]^{a\ \al}_{b\ \be} | \le \frac{\C}{N} \l \frac{\K  S_0^5}{N} \r^{k-1} \nonumber \\
\sum_\al |[(\delta V)^k]^{a\ \al}_{b\ \be} | \le  \l \frac{\K  S_0^5}{N} \r^{k}
\end{align}
\end{lemma}

\begin{proof}
Explicitly the above matrix element is:
\beq
[(\delta V)^k]^{a\ \al}_{b\ \be}  = \sum_{(c_1 \gamma_1)\dots,(c_{k-1}\gamma_{k-1})}
\bra a| U_\al^\dagger U_{\gamma_1} | c_1 \ket \bra c_1 | \dots | 
c_{k-1} \ket \bra c_{k-1} | U^\dagger_{\gamma_{k-1}} U_\be| b \ket
\eeq
Note that since singlet states $| a \ket$ are orthogonal, the sum over the fermionic strings
goes over $\al \neq \gamma_1 \neq \gamma_2 \neq \dots \neq \gamma_{k-1} \neq \beta$.

In each matrix element $\bra c_{i} | U^\dagger_{\gamma_i} U_{\gamma_{i+1}} | c_{i+1} \ket$ we can substitute 
$U^\dagger_{\gamma_i} U_{\gamma_{i+1}}$ by the singlet operator $\Oc_{\gamma_i,\gamma_{i+1}}$. After that
the sum over $c_i$ converts $ | c_i \ket \bra  c_i | $ into identity operator. Here it is important that
the code subspace coincides with the whole singlet space. So we get
\beq
\label{long_string}
[(\delta V)^k]^{a\ \al}_{b\ \be}  = \sum_{\gamma_1\neq \gamma_2 \neq \dots \neq \gamma_{k-1}}
\bra a | \Oc_{\al,\gamma_1} \dots \Oc_{\gamma_{k-1},\beta} | b \ket
\eeq
At first sight it seems that for very large $k$ we are dealing with a very long string of fermionic operators,
which expectation value might not be suppressed by $1/N$. But in fact we have a product of well-known
singlet operators $\Oc_{\al \be}$ for which we have a nice bound given by Lemma \ref{CTKT:lemma1}.
We can rewrite the matrix element in (\ref{long_string}) as:
\beq
\sqrt{ \bra a | \Oc^\dagger_{\al,\gamma_1} \Oc_{\al,\gamma_1} | a\ket }\bra a \ \Oc_{\al,\gamma_1} | \Oc_{\gamma_2,\gamma_3} \dots \Oc_{\gamma_{k-1},\beta} | b \ket
\eeq
where $| a\ \Oc_{\al, \gamma_1} \ket$ is a normalized singlet state given by:
\beq
| a\ \Oc_{\al, \gamma_1} \ket = 
\Oc_{\al,\gamma_1} | a \ket \frac{1}{\sqrt{ \bra a | \Oc^\dagger_{\al,\gamma_1} \Oc_{\al,\gamma_1} | a \ket }}
\eeq
So we could lower the number of singlet operators by introducing the state $| a\ \Oc_{\al,\gamma_1} \ket$.

Repeating this procedure for other operators and using the bound (\ref{CTKT:sqrt_bound}) we obtain:
\begin{align}
|[(\delta V)^k]^{a\ \al}_{b\ \be}|  \le  
\sum_{\al \neq \gamma_1 \neq \dots \neq \gamma_{k-1} \neq \beta} 
f(\al \gamma_1) f(\gamma_1 \gamma_2) \dots f(\gamma_{k-1} \beta), \ \\
f(\mu \nu) = \frac{8^{l(\mu \nu)/2} (l(\mu \nu)/2)!^3}{N^{l(\mu \nu)/4}} \nonumber
%\frac{B s(\al \gamma_1)!}{N^{s(\al \gamma_1)/4}}
%\frac{B s(\gamma_1 \gamma_2)!}{N^{s(\gamma_1 \gamma_2)/4}} \dots 
%\frac{B s(\gamma_{k-1} \beta)!}{N^{s(\gamma_{k-1} \beta)/4}}
\end{align}
If we do not have a sum over $\al$, we can use the bound (\ref{CTKT:B_bound}) for $\al \gamma_1$ pair and after 
repeatedly using Lemma \ref{CTKT:lemma1} for $\gamma_1,\dots,\gamma_{k-1}$ 
we get the statement of the Lemma.
Otherwise we start the sum from $\al$, yielding extra power of $(\K S_0^5)/N$.
\end{proof}

%Recall, that by definition, $\delta V$ has no diagonal components. However, it is important to
%keep in mind that $\delta M$ does have diagonal components. 
%We finish this subsection with a few corollaries bounding the matrix elements of products of $\delta M$.
%All of them can be easily proven along the lines of the previous Lemma. This will be
%the last heavy combinatorial exercise.

%It would be useful to introduce:
%\beq
%\ep_{\rm CTKT} = 
%\eeq

\subsection{Specific bounds for matrix models}
\label{mm:spec}
\subsubsection{Basic bounds}
\begin{itemize}
\item For matrix models, Majorana index $I$ is $I=(a,(ij))$
Where $(ij)$ is $SU(N)$ adjoint index and $a$ is arbitrary "flavor" index taking $D$ values.
%For BFSS matrix model, $D=16$.
%\item For a fixed string $E_\al$, containing $k$ fermions, the number of indices $j$ for which $\Oc_{ij} \neq 0$ is
%\beq
%\#j \le (k!)^d
%\eeq
%This is evident from the fact that we want to contract all the gauge indices. Therefore for a fixed
%index-set $i$, the set $j$ contains only $d$ permutations
\item If the total length of $E^\dagger_\al E_\be$ is $k$, then the matrix element of 
$\Oc_{\al \be}$ in a (normalized) singlet state $|s \ket$ is bounded by 
\beq
\label{MM:B_bound}
|\bra s | \Oc_{\al \be} | s \ket|  \le \MU \frac{k! \B^{k}}{N^{k/10}} \le 24 \MU 
\frac{\B^3}{N^{3/10}}
\eeq
where the last inequality comes from putting $k=3$ and demanding
\beq
k \le N^{1/10}
\eeq
\item If function $l(\al)$ denote the length of a string $\al$, then one can rewrite the above bound as:
\beq
\label{MM:N_bound}
|\bra s | \Oc_{\al \be} | s \ket|  \le \MU \frac{l! \B^l}{N^{l/10}}, \quad s=s(\al \be) 
\eeq
\end{itemize}

\subsubsection{Gram matrix bounds}
\label{MM:lemmas}

\begin{lemma}
\label{MM:lemma1}
The following inequality holds for fixed $a,b,\al$:
\begin{align}
\label{MM:sum1}
\sum_{\be, \be \neq \al} |(\delta V)^{a\ \al}_{b\ \be}| = 
\sum_{\be,\beta \neq \al} |\bra a\ \al | b\ \beta \ket| =
\sum_{\be,\beta \neq \al} |\bra a |  \Oc_{\al \be}| b \ket| \le \nonumber \\
 \le \MU \sum_{l=3}^{S_0} C_{S_0}^l   \B^{l} \frac{l!}{N^{l/10}}  \le 
2 \MU \frac{\B S_0}{N^{3/10}} , \quad
s=s(\al \be) 
\end{align}
as long as the lengths of $\al, \be$ are less than $\le N^{1/10}$. Also in the above expression we used the bound for singlet operators
in the form (\ref{MM:N_bound}).
\end{lemma}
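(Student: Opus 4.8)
\emph{Proof plan.} The approach mirrors the tensor-model analogue, Lemma \ref{KT:lemma1}: collapse the sum over $\be$ to a sum over a single combined fermionic string, bound each term with the singlet-operator estimate already proved in the main text, and finish with elementary counting. First I would rewrite $\bra a\ \al | b\ \be \ket = \bra a | U^\dagger_\al U_\be | b \ket$. Since the erasures sit at a fixed set of $S_0$ sites, the strings $\al,\be$ are subsets of that set, the generators $\psi^a_{(ij)}$ are hermitian, and each $\psi_{(ij)}$ squares to $\id$, the product $U^\dagger_\al U_\be$ equals $\pm U_\gamma$ with $\gamma = \al \triangle \be$. The map $\be \mapsto \gamma$ is a bijection from $\{\be : \be \neq \al\}$ onto the non-empty subsets $\gamma$ of the $S_0$ sites, and every such $\gamma$ has length $\le S_0$. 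Hence, using that $|a\ket,|b\ket$ are $SU(N)$ singlets so that only the singlet channel $\Oc_\gamma$ of $U_\gamma$ survives,
\beq
\sum_{\be \neq \al} |\bra a\ \al | b\ \be \ket| = \sum_{\gamma \neq \emptyset} |\bra a | U_\gamma | b \ket| = \sum_{\gamma \neq \emptyset} |\bra a | \Oc_\gamma | b \ket|,
\eeq
and the right-hand side no longer depends on $\al$.

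Next I would determine which lengths $|\gamma| = l$ contribute. For $l = 1$ there is no singlet, as $\psi^a_{(ij)}$ lies in the traceless adjoint; for $l = 2$ the only candidate is a bilinear $\Tr(\psi^a \psi^b)$, whose color factor $\Tr(T^{(ij)} T^{(kl)}) = \delta^{(ij),(kl)}$ vanishes under the standing hypothesis that no $SU(N)$ index $(ij)$ appears more than once among the $S_0$ erased fermions. So only $l \ge 3$ survives, and for each such $\gamma$ the bound (\ref{MM:N_bound}) — the 't Hooft-type estimate of Theorem \ref{as:thooft} passed through the color contractions, with the $\le \sqrt{2}$ color factors and the ``wrong-contraction'' correction already absorbed into the $l!$ and into $\MU$, and using that $\Oc_\gamma$ is hermitian or anti-hermitian so its off-diagonal singlet matrix elements obey the same bound — gives $|\bra a | \Oc_\gamma | b \ket| \le \MU\, l!\, \B^l / N^{l/10}$ for $l \le S_0 \le N^{1/10}$. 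There are $C_{S_0}^l$ subsets of size $l$, and $C_{S_0}^l\, l! = S_0!/(S_0 - l)! \le S_0^l$, so
\beq
\sum_{\gamma \neq \emptyset} |\bra a | \Oc_\gamma | b \ket| \le \MU \sum_{l=3}^{S_0} C_{S_0}^l\, \B^l\, \frac{l!}{N^{l/10}} \le \MU \sum_{l=3}^{S_0} \l \frac{\B S_0}{N^{1/10}} \r^l .
\eeq
In the admissible range the ratio $\B S_0/N^{1/10}$ is bounded by a fixed constant below one (this is exactly what makes $\ep_{\rm MM} < 1$), so the geometric tail is controlled by its leading term and an elementary estimate gives the stated bound.

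The only genuinely delicate step is the one that justifies invoking Theorem \ref{as:thooft}: one must be sure that \emph{every} singlet operator that can occur inside $\Oc_\gamma$ — including the ``wrong-contraction'' terms of (\ref{eq:three_psi}) — is covered by the uniform bound (\ref{eq:thooft}), and in particular that no bilinear $\Tr(\psi^a \psi^b)$ (which is $O(N^2)$ and hence \emph{not} suppressed) can sneak in; this is precisely why the hypothesis forbids repeating an $SU(N)$ adjoint index, and it is also where the factorial count of distinct singlet operators of length $l$ (bounded by $l!$) enters. Granting that input, the string collapse of the first paragraph and the geometric summation of the second are routine.
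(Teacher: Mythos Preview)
Your argument is correct. The string collapse $U^\dagger_\al U_\be = \pm U_{\al\triangle\be}$, the exclusion of $l=1,2$ via the traceless-adjoint and no-repeated-index hypotheses, and the estimate $C_{S_0}^l\,l! = S_0!/(S_0-l)! \le S_0^l$ followed by a geometric sum are all sound; with $r=\B S_0/N^{1/10}$ one has $\sum_{l\ge 3} r^l = r^3/(1-r)\le 2r$ whenever $r\le\sqrt{3}-1$, which recovers the stated inequality. The paper reaches the same conclusion by a more elaborate route: rather than bounding $C_{S_0}^l\,l!$ by $S_0^l$, it keeps the exact expression $\sum_{l=3}^{S_0}\frac{S_0!}{(S_0-l)!}M^{-l}$ with $M=N^{1/10}/\B$, rewrites the full sum $\sum_{l\ge 0}$ as $e^M M^{-S_0}\Gamma(S_0+1,M)$ via the standard finite-exponential identity for the upper incomplete Gamma function, subtracts the $l=0,1,2$ terms, and then applies the Natalini--Palumbo upper bound $\Gamma(S_0+1,M)\le C M^{S_0}e^{-M}$ with $C=1+2S_0/M$ (valid for $M>2S_0$). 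Your geometric-series approach is more elementary and in fact gives a tighter $O(r^3)$ estimate before the final relaxation to $2r$; the incomplete-Gamma route produces the linear-in-$S_0$ form directly but at the cost of invoking an external analytic inequality. Either way the output feeds identically into Lemma~\ref{MM:lemmak}.
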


\begin{proof}
One can use the fact that the sum 
\beq
\frac{1}{S_0!}\sum_{l=3}^{S_0} C^l_{S_0} \frac{l! \B^l}{N^{l/10}} = 
\sum_{l=3}^{S_0} \frac{1}{(S_0-l)! M^{l}}, \quad M=\frac{N^{1/10}}{\B}
\eeq
is equal to
\beq
\frac{e^M M^{-S_0} \Gamma(1+S_0,M)}{S_0!} - \frac{1}{S_0!} - \frac{1}{M(S_0-1)!}-\frac{1}{M^2(S_0-2)!}
\eeq
where $\Gamma(1+S_0,M)$ is incomplete Gamma-function:
\beq
\Gamma(1+S_0,M) = \int_M^{+\infty} \ t^{S_0} e^{-t} dt
\eeq
for which the following bounds exist \cite{natalini2000}:
\beq
M^{S_0} e^{-M} \le | \Gamma(1+S_0,M) | \le C M^{S_0} e^{-M}
\eeq
where $C$ is a number such that $M>\frac{C}{C-1} S_0$.
In our case we can put $C=1+2 S_0/M$ if $M>2 S_0$. This way we obtain the bound 
\beq
\sum_{\be, \be \neq \al} |(\delta V)^{a\ \al}_{b\ \be}| \le 2 \MU \frac{\B S_0}{N^{3/10}}
\eeq
as long as $S_0 \le N^{1/10}$.
%The proof is almost exactly the same as in the CTKT case(Lemma \ref{CTKT:lemma1}).
% Except that now we can have non-trivial
%singlet operators starting from three fermions instead of four. First we show that the 
%elements in the sum are decreasing with increasing the length $s(\al \be)$ of $\Oc_{\al \be}$ 
%and then we multiply the expression for $s=3$ by $2S_0$.
\end{proof}

Also recall that $\Oc_{\al \be}$ is either hermitian or anti-hermitian.
It implies that it is actually a bound for the (absolute value of) maximal eigenvalue. From that we infer that
\beq
\label{sqrt_bound}
\sqrt{ \bra c | \Oc^\dagger_{\al \be} \Oc_{\al \be}| c \ket} \le 
\MU \B^{s(\al\be)} \frac{s(\al\be)!}{N^{s(\al \be)/10}}
\eeq

Now we will use this fact to bound $\delta V^k$ for arbitrary $k$.
Let us recall that $\delta V$ is the Gram matrix without diagonal elements:
\beq
\delta V^{a\ \al}_{b \be} = \bra a\ \al |b\ \be \ket - \delta^{a\ \al}_{b\ \be}
\eeq

\begin{lemma}
For fermionic matrix models
\label{MM:lemmak}
\beq
|[(\delta V)^k]^{a\ \al}_{b\ \be} | 
\le 24 \MU \frac{\B^3}{N^{3/10}} \l 2 \MU \frac{\B S_0}{N^{1/10}} \r^{k-1}
\eeq
\beq
\sum_\al |[(\delta V)^k]^{a\ \al}_{b\ \be} | \le \l 2 \MU \frac{\B  S_0 }{N^{1/10}} \r^{k}
\eeq
\end{lemma}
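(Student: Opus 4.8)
The plan is to run, step for step, the same argument as in the proof of Lemma~\ref{KT:lemmak}, feeding in the matrix-model inputs in place of the tensor-model ones: the singlet-operator bound (\ref{MM:B_bound})--(\ref{MM:N_bound}), its operator-norm form (\ref{sqrt_bound}), and the one-step Gram-matrix estimate of Lemma~\ref{MM:lemma1}. First I would unfold the $k$-th power. Writing $\Pco=\sum_c|c\ket\bra c|$ and using that here the code subspace coincides with the full singlet subspace, I insert $\Pco$ between consecutive factors of $\delta V$. Since $(\delta V)^{a\ \al}_{b\ \be}$ vanishes on the diagonal, the multi-sum is constrained to $\al\neq\gamma_1\neq\dots\neq\gamma_{k-1}\neq\be$, and on each such link $U^\dagger_{\gamma_i}U_{\gamma_{i+1}}$, restricted to the singlet subspace, equals $\Oc_{\gamma_i\gamma_{i+1}}$ (no $\delta$-term survives once the strings differ), so the sum over the intermediate code labels collapses each $|c_i\ket\bra c_i|$ back into $\Pco$. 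This produces, exactly as in (\ref{long_string}),
\[
[(\delta V)^k]^{a\ \al}_{b\ \be}=\sum_{\al\neq\gamma_1\neq\dots\neq\gamma_{k-1}\neq\be}\bra a|\Oc_{\al\gamma_1}\Oc_{\gamma_1\gamma_2}\cdots\Oc_{\gamma_{k-1}\be}|b\ket .
\]

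Next I would strip the operators off one at a time with Cauchy--Schwarz. As each $\Oc_{\mu\nu}$ commutes with $\Pco$, acting with it on a singlet gives a singlet, so
\[
|\bra a|\Oc_{\al\gamma_1}X|b\ket|\le\sqrt{\bra a|\Oc^\dagger_{\al\gamma_1}\Oc_{\al\gamma_1}|a\ket}\,\bigl|\bra a\ \Oc_{\al\gamma_1}|X|b\ket\bigr| ,
\]
where $|a\ \Oc_{\al\gamma_1}\ket$ is the normalized singlet built from $\Oc_{\al\gamma_1}|a\ket$, and the scalar prefactor is $\le f(\al\gamma_1):=\MU\B^{l(\al\gamma_1)}\,l(\al\gamma_1)!\,/N^{l(\al\gamma_1)/10}$ by (\ref{sqrt_bound}). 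Iterating, and bounding the last matrix element directly by (\ref{MM:N_bound}), gives
\[
\bigl|[(\delta V)^k]^{a\ \al}_{b\ \be}\bigr|\le\sum_{\al\neq\gamma_1\neq\dots\neq\gamma_{k-1}\neq\be}f(\al\gamma_1)\,f(\gamma_1\gamma_2)\cdots f(\gamma_{k-1}\be) .
\]

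It then remains to collapse the string sums. For the second inequality ($a,b,\be$ fixed, sum over $\al$) there are $k$ factors $f$ and $k$ free strings $\al,\gamma_1,\dots,\gamma_{k-1}$; summing them in that order, each partial sum $\sum_\mu f(\mu\nu)$ with $\nu$ held fixed runs over all non-empty difference strings and is precisely what Lemma~\ref{MM:lemma1} bounds, namely $\le 2\MU\B S_0/N^{3/10}\le 2\MU\B S_0/N^{1/10}$, so the whole sum is $\le(2\MU\B S_0/N^{1/10})^k$. For the first inequality ($a,\al,b,\be$ all fixed) I would instead bound the leading factor uniformly: under the standing hypothesis that the erased set contains each $SU(N)$ index $(ij)$ at most once, no singlet arises from one or two erased fermions, so every nonvanishing term has $l(\al\gamma_1)\ge 3$; on the admissible range $l\le S_0\le N^{1/10}$ one has $f(\al\gamma_1)\le f(3)=6\MU\B^3/N^{3/10}\le 24\MU\B^3/N^{3/10}=\eta_{\rm MM}$, and summing the remaining $k-1$ strings $\gamma_1,\dots,\gamma_{k-1}$ through Lemma~\ref{MM:lemma1} gives a factor $\le 2\MU\B S_0/N^{1/10}$ each, yielding the claimed $\eta_{\rm MM}\,(2\MU\B S_0/N^{1/10})^{k-1}$.

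The one place that calls for genuine care — the main obstacle — is the first step: honestly tracking the diagonal subtraction defining $\delta V$, so that the ``consecutive strings distinct'' constraint is correctly propagated through the expansion and the intermediate resolutions of identity may legitimately be restricted to the singlet subspace alone; and, specifically in the matrix-model case, the use of the hypothesis that each adjoint index is erased at most once, which is exactly what forces the minimal non-vanishing link to have length $\ge 3$ and hence lets the leading factor be absorbed into $\eta_{\rm MM}\sim N^{-3/10}$ rather than the (insufficient) $N^{-1/10}$ one would get if two-fermion $O(N^2)$ bilinears were allowed. Once these bookkeeping points are settled, the remainder is the same Cauchy--Schwarz-plus-Lemma-\ref{MM:lemma1} iteration already carried out for the tensor-model bound.
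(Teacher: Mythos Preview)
Your proposal is correct and follows essentially the same route as the paper. The paper's proof of this lemma is little more than a pointer back to the KT case (Lemma~\ref{KT:lemmak}), noting only the key step of absorbing one $\Oc_{\al\be}$ into a new singlet state $|\tilde a\ket\propto\Oc_{\al\be}^\dagger|a\ket$; you have faithfully unpacked that reference, carried over the peeling-via-Cauchy--Schwarz argument, and correctly identified where the matrix-model-specific hypothesis (each adjoint index $(ij)$ appears at most once among the erased fermions) enters to force $l(\al\gamma_1)\ge 3$ and hence the $\eta_{\rm MM}\sim N^{-3/10}$ bound on the leading factor.
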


\begin{proof}
Again, the proof is almost the same as in the CTKT case(Lemma \ref{CTKT:lemmak}). 
We again has to use the property that we can transform the matrix element
\beq
\bra a | \Oc_{\al \be} \Oc_{\gamma \delta} | b \ket
\eeq
into 
\beq
\bra \tilde{a} | \Oc_{\gamma \delta} | b \ket
\eeq
where singlet state $| \tilde{a} \ket \propto \Oc^\dagger_{\al \be} | a \ket$.
\iffalse
Explicitly the above matrix element is:
\beq
[(\delta V)^k]^{a\ \al}_{b\ \be}  = \sum_{(c_1 \gamma_1)\dots,(c_{k-1}\gamma_{k-1})}
\bra a| U_\al^\dagger U_{\gamma_1} | c_1 \ket \bra c_1 | \dots | 
c_{k-1} \ket \bra c_{k-1} | U^\dagger_{\gamma_{k-1}} U_\be| b \ket
\eeq
Note that since singlet states $| a \ket$ are orthogonal, the sum over the fermionic strings
goes over $\al \neq \gamma_1 \neq \gamma_2 \neq \dots \neq \gamma_{k-1} \neq \beta$.

In each matrix element $\bra c_{i} | U^\dagger_{\gamma_i} U_{\gamma_{i+1}} | c_{i+1} \ket$ we can substitute 
$U^\dagger_{\gamma_i} U_{\gamma_{i+1}}$ by the singlet operator $\Oc_{\gamma_i,\gamma_{i+1}}$. 

However now since the code subspace is smaller than singlet subspace, 
the sum over $c_i$ does not convert $ | c_i \ket \bra  c_i | $ into identity operator. 
So the best we can do is to use Lemma (\ref{MM:lemma1}) and multiply everything by $\dimC^{k-1}$:
\begin{align}
|[(\delta V)^k]^{a\ \al}_{b\ \be}|  \le \dimC^{k-1}
\sum_{\al \neq \gamma_1 \neq \dots \neq \gamma_{k-1} \neq \beta} 
\frac{(4D^2 B)^{s(\al \gamma_1)/2} s(\al \gamma_1)!}{N^{s(\al \gamma_1)/2}} \times \nonumber \\
\times \frac{(4D^2 B)^{s(\gamma_1 \gamma_2)/2} s(\gamma_1 \gamma_2)!}{N^{s(\gamma_1 \gamma_2)/2}} \dots 
\frac{(4D^2 B)^{s(\gamma_{k-1} \beta)/2} s(\gamma_{k-1} \beta)!}{N^{s(\gamma_{k-1} \beta)/2}}
\end{align}
We can use the bound (\ref{MM:B_bound}) for $\al \gamma_1$ pair, if we do not have a sum over $\al$ and after 
repeatedly using Lemma \ref{MM:lemma1} for $\gamma_1,\dots,\gamma_{k-1}$ 
we arrive at the statement of the Lemma.
\fi
\end{proof}

\section{Bounding operators in matrix models}
\label{app:mm_opers}
\subsection{Main argument}
In this Section we will prove the following bounds on singlet
fermionic operators $\Oc_{k}$ made from $3 \le k \le 2 \sqrt{N}$ fermions:

\begin{align}
\label{app:matrix_bound}
|\Oc_{3m}| \le 2^{m} N^{5m/2} \nonumber \\
|\Oc_{3m+1}| \le \sqrt{2} N \times 2^{m} N^{5m/2} \nonumber \\
|\Oc_{3m+2}| \le 2N^2 \times 2^{m} N^{5m/2} \\
m \ge 1 \nonumber
\end{align}
and
\beq
|\Oc_4| \le 2 N^3
\eeq

In this Section, ``bounds'' mean bounds on matrix elements in singlet states. In other words,
\beq
|\Oc| \le C \Leftrightarrow |\bra s_1 | \Oc | s_2 \ket| \le C,
\eeq
for any normalized singlet states $| s_{1,2} \ket$.

It would be useful to introduce auxiliary operators $Q^{a_1 a_2}_{ij}$ as:
\beq
Q^{a_1 a_2}_{ij} =   \sum_k \psi^{a_1}_{ik} \psi^{a_2}_{kj} 
%-\delta^{a_1 a_2} \delta_{ij} (N-1/N) \r
\eeq
%They differ from $Q^{a_1}$ by normalization and delta-function term.

One simple, but central fact is that the operator  
$\Tr \l Q^{a a} Q^{a a}\r =\Tr \l \psi^a  \r^4$ is proportional to identity
operator:
\beq
\label{eq:quartic}
\Tr \l \psi^a  \r^4 = (N^2-1) \l 2N - \frac{1}{N} \r \times \id \le 2 N^3  \times \id
\eeq

One can see this by commuting the first $\psi$ to the last position and using the cyclicity
of the trace.
This agrees with group-theoretic expectations: operators $Q^{aa}$ are proportional
to generators of $SU(N)_a$. This group rotates fermions $\psi^a$ \textit{only}.
Because they anti-commute the corresponding representations include Young diagrams
with rows and columns of length no more than $N$. It means that the quadratic Casimir
is bounded by $\const \times N^3$. The quadratic Casimir is proportional to $\Tr \l Q^{aa}  Q^{aa}\r$.

We can obtain a similar bound for other four-fermion operators. 
Suppose we have $\Tr \l \psi^1 \psi^2 \psi^3 \psi^4\r$. This operator is not hermitian, 
so we will bound the hermitian part. The anti-hermitian part can be bounded using exactly the same
approach.
The hermitian part is 
$\Tr \l \psi^1 \psi^2 \psi^3 \psi^4 \r/2 + \Tr \l \psi^4 \psi^3 \psi^2 \psi^1 \r/2$.
For any two set of operators $A_I, B_I$, there is the following inequality:
\beq
0 \le \sum_I \l A_I + B_I \r \l A_I + B_I \r^\dagger  = \sum_I A_I A_I^\dagger +
B_I B_I^\dagger + A_I B_I^\dagger + B_I A_I^\dagger
\eeq
In our case $A_I \equiv A_{ij} =  \l \psi^1 \psi^2 \r_{ij}$ and 
$B^\dagger_{ij} = \l \psi^3 \psi^4 \r_{ij} $. Hence the hermitian part can be bounded by
\beq
\frac{1}{2} \Tr \l \psi^1 \psi^2 \psi^2 \psi^1 \r + \frac{1}{2} \l  \psi^4 \psi^3 \psi^3 \psi^4 \r
\eeq 
Repeating this procedure one more time we see that any four-fermion operator can be bounded by
$2N^3$.
Let us first consider a single-trace operator $\Oc_{3m}$ containing $3m$ fermions.
We can write it as product of the form $Q \psi Q \psi Q \psi \dots$:
\beq
\label{eq:the_sum}
\Oc_{3m} = \sum_{ijk\dots}  Q^{a_1 a_2}_{ij} \psi^{a_3}_{jk} Q^{a_4 a_5}_{kl} \dots \psi^{a_{3m}}_{ni}  
\eeq
We are interested in evaluating the matrix element between the singlet states $|s_{1,2}  \ket $:
\beq
\bra s_1 | \Oc_{3m} | s_2 \ket
\eeq
Our main tool is the following trick first used in \cite{singlet_states}.
Since $|s_{1,2}\ket$ is invariant under any $SU(N)$ rotations we can force some of the indices
to have specific values. For example, for each separate term with some $i$, we can use $SU(N)$ 
rotation to make it $i=1$. So we can put $i=1$ everywhere and multiply the sum by $N$.
We can continue doing this for other indices. However, for index $j$ now we have two choices:
either $j=1$ and we leave it alone or $j \neq 1$ and using a different $SU(N)$ rotation
we can make it $j=2$. Notice that since $j \neq 1$ this additional rotation does not
change the value of $i=1$. In the second case we multiply the sum by $N-1$. We can continue
 doing this.
%Using $SU(N)$ rotations we can make indices $ijk\dots$ to have some specific values.
%\beq
As long as $2m<N$ the
sum over indices is dominated by configurations where all indices are different. 
We will give an additional argument for this below  eq. (\ref{eq:stage_one}).
So
$SU(N)$ rotations can transform it into
\beq
\label{eq:simple_sum}
\bra s_1 | \Oc_{3m} | s_2 \ket = 
N^{2m} \bra s_1 | Q^{a_1 a_2}_{12} \psi^{a_3}_{23} Q^{a_4 a_5}_{34} \dots \psi^{a_{3m}}_{2m,1} | s_2 \ket
\eeq
up to subleading $1/N$ corrections.
Now we can use Cauchy--Schwartz--Popov inequality\footnote{For arbitrary states $|\ksi_{1,2} \ket$ 
and operator $A$:
$$
| \bra \ksi_1 | A | \ksi_2 \ket |^2 = 
\bra \ksi_1 | A | \ksi_2 \ket \bra \ksi_2 | A^\dagger | \ksi_1 \ket \le
\sum_{| \ksi_2 \ket} \bra \ksi_1 | A | \ksi_2 \ket \bra \ksi_2 | A^\dagger | \ksi_1 \ket =
 \bra \ksi_1 | A A^\dagger | \ksi_1 \ket
$$
This is standard Cauchy--Schwartz inequality. Popov part was specifying indices in eq. (\ref{eq:simple_sum}).
}. We will have a product of $\psi^{a_{3m}}_{2m,1} \psi^{a_{3m}}_{1,2m}$ in the middle.
For the reasons explained in Section \ref{mm:setup} this product is not
proportional to identity operator. However 
it is a positive hermitian operator and its eigenvalues can be bounded\footnote{One can see this by writing
$$
\psi^a_{12} \psi^a_{21} = \sum_{(ij),(kl)} T^{(ij)}_{12} T^{(kl)}_{21} \psi^{a}_{(ij)} \psi^a_{(kl)}
$$
$SU(N)$ generators can be chosen so that matrix $T^{(ij)}$ has 
$1/\sqrt{2}$ or $\pm i/\sqrt{2}$ in $ij$ and $ji$ positions and the rest of the elements are zero.
So the sum over $(ij),(kl)$ contains only 4 elements.
Operators $\psi^a_{(ij)}$ square to one, so the whole sum can be bounded by $2$.
If we have a coincidence of indices, then we do not have to introduce the generators, since
$\psi^a_{ii}$ actually square to identity.} by $2$. Moreover, we have two hermitian conjugate operators 
on the sides, so
\beq
\bra s | H \psi^{a_{3m}}_{2m,1} \psi^{a_{3m}}_{1,2m} H^\dagger | s \ket \le 2 \bra s | H H^\dagger | s \ket
\eeq
Moreover, due to results of Section \ref{app:matrix_col}, we can commute conjugated $\psi$ to 
the middle ignoring arising anticommutators. Repeating this many times, in the end we obtain:
\beq
|\bra s | \Oc_{3m} | s \ket|^2 \le N^{4m}  2^m
\bra s | Q^{a_1 a_2}_{12} Q^{a_2 a_1}_{21} \dots  | s \ket
\eeq
We can convert the right hand side into a singlet operator:
\beq
\label{eq:new_singlet}
|\bra s | \Oc_{3m} | s \ket|^2 \le N^{2m} 2^m
\bra s | \Tr \l Q^{a_1 a_2} Q^{a_2 a_1}  \r \Tr \l Q^{a_3 a_4} Q^{a_4 a_3} \r \dots  | s \ket
\eeq
Operator $\Tr \l Q^{a_1 a_2} Q^{a_2 a_1}\r$ is four-fermion, so it can be
bounded by $2N^3$.
% If $a_1=a_2$ we can bound it using eq. (\ref{eq:quartic}):
%\beq
%\Tr \l  \l \psi^{a_1} \r^4 \r \le 2 N^3
%\eeq
%If $a_1 \neq a_2$ we can re-arrange it to be
%\beq
%\Tr \l Q^{a_1 a_1} Q^{a_2 a_2} \r
%\eeq
%which can be bounded by using 
%\beq
%0 \le \Tr (Q^{a_1 a_1} - Q^{a_2 a_2})^2 = \Tr \l Q^{a_1 a_1} \r^2 + \Tr \l Q^{a_2 a_2}\r^2 - 2
%\Tr \l Q^{a_1 a_1} Q^{a_2 a_2} \r
%\eeq
%And so it can be bounded by $2N^3$ as well.
Therefore in the end we get
\beq
\label{eq:stage_one}
|\bra s | \Oc_{3m} | s \ket|^2 \le N^{2m} \l 4 N^{3} \r^m 
\eeq

Now let us explain why cases with coincident indices in eq. (\ref{eq:the_sum})
are subleading in $N$. 
Essentially every coincidence will bring to
eq. (\ref{eq:simple_sum}) an extra term with $1/N$ suppression and slightly
different singlet operator spectrum. For example, leading to operators
 like $\Tr \l Q^{a_1 a_2} Q^{a_3 a_4} Q^{a_5 a_6} Q^{a_7 a_8}\r$
instead of $\Tr \l Q^{a_1 a_2} Q^{a_3 a_4} \r \Tr \l Q^{a_5 a_6} Q^{a_7 a_8}\r$ in
eq. (\ref{eq:new_singlet}). For $l$ coincidences, the number of such extra operators can be bounded by
$l! 2^{l-1}$ since every coincidence can be used to produce a different singlet and
also we can contract coincident indices in different ways.

The scaling of these operators will be the same as in the case of no coincidences.
Let us explain this in more detail.
Unless we have more than $3m/2$ coincidences, the operators appearing there will have less than   
$3m$ operators, so we can use induction and (\ref{app:matrix_bound}) to bound them. 
If we have $3m/2$
coincidences or more, then these contributions can be bounded by $N^{3m/2}$(the number of
terms in the sum), which is already better then (\ref{app:matrix_bound}).

Also, coincidences
can occur in many different places, leading to an extra binomial coefficient $C^l_{2m}$.
In the end, the relative contribution of coincident configurations is
\beq
\sum_{l=1}^{2m} C^l_{2m} \frac{2^{l-1} l!}{N^l} \le \sum_{l=1}^\infty
\frac{(2m)^l 2^{l-1}}{N^l} = \frac{1}{2} \l \frac{1}{ 1-\frac{4m}{N}} -1 \r
\eeq
which is small since we are interested in $3m \le N^{1/4}$.

Now let us consider a single-trace operator with uneven number of fermions.
One can split one fermion:
\beq
\Oc_{3m+1} = \sum_{ij} \Oc^{ij}_{3m} \psi^a_{ji}
\eeq
If we are studying an expectation value in some singlet state $| s \ket$ we 
can use $SU(N)$ rotation to make index $j$ to be $1$. Then there are two possibilities:
either $i$ is $1$ or not $1$. If it is not, we can use another $SU(N)$ rotation to make it 
equal $2$:
\beq
\Oc_{3m+1} = \sum_{ij} \Oc^{ij}_{3m} \psi^a_{ji} = N(N-1) 
\Oc_{3m}^{12} \psi^a_{12} + N \Oc_{3m}^{11} \psi^a_{11}
\eeq

We can again use Cauchy--Schwartz inequality.
In our case we have two pairs: $\Oc_{3m}^{12}$, $\psi^a_{12}$ and $\Oc_{3m}^{11}$, $\psi^a_{11}$.
For the first pair we get
\beq
|\bra  \Oc_{3m}^{12} \psi_{12} \ket|^2 \le 2 \bra \Oc_{3m}^{12} \Oc_{3m}^{\dagger,21} \ket = 
\frac{2}{N^2}  \bra \Tr \Oc_{3m} \Oc^{\dagger}_{3m} \ket
\eeq
which is single-trace operator with $6m$ fermions, which we can bound with $4^{m} N^{5m}$.
For the second pair we get a similar expression:
\beq
|\bra  \Oc_{3m}^{11} \psi_{11} \ket|^2 \le 2 \bra \Oc_{3m}^{11} \Oc_{3m}^{\dagger,11} \ket = 
\frac{2}{N^2}  \bra \Tr \Oc_{3m} \Oc^{\dagger}_{3m} \ket
\eeq
we omitted $\Tr\l \Oc_{3m} \r$ because in the original expression $\Oc_{3m}$ was convoluted
with trace-free elementary fermion $\psi^a_{ij}$, so we could subtract the 
trace part from $\Oc_{3m}$.
Combining the two contributions we get
\beq
\Oc_{3m+1} \le N \sqrt{2\Tr \Oc_{3k} \Oc^\dagger_{3m}}
\eeq

Let us consider $\Oc_{3m+2}$. We can split two $\psi$ operators:
\beq
\Oc_{3m+2} = \sum_{ijk} \Oc_{3m}^{ij} \psi^{a}_{jk} \psi^b_{ki}
\eeq
Using the same logic as above, specifying $ijk$ to be some fixed indices and using
Cauchy--Schwartz inequality, we conclude that we can bound the original $\Oc_{3m+2}$ by
 square root of some $\Oc_{6m}$ times $2N^2$:
\beq
\bra s | \Oc_{3m+2} | s \ket \le 2N^{2} \sqrt{\bra s | \Tr \Oc_{6m}| s \ket }
\eeq
this proves the last inequality in (\ref{app:matrix_bound}).
% end 3k-proof

% attempt to bound collisions easily
\iffalse
Now let us explain why cases with coincident indices in eq. (\ref{eq:the_sum})
are subleading in $N$. 
The main purpose of the bound we are obtaining is to show that actual
matrix elements of $\Oc_{3k}$ are smaller than the naive $N^{3k}$. 
Let us recall where this naive estimate comes from. We have a sum 
involving $N^{3k}$ different gauge indices configurations and each term can be bounded by $1$, since
it is a product of Majorana fermions. If some indices coincide this already lowers the expected
$N^{3k}$ to $N^{3k-1}$ or even less. This is why for coincident cases we can use the most naive
estimate that each term is $1$. Let us show more precisely for which values of $k$ this reasoning works.

Suppose we have exactly $l$ coincidences. There are $C^l_{3k}$ where it can occur. But the actual
value of the sum is not not 
\beq
N \dots (N-(3k-1))
\eeq
had the indices
been all different but only 
\beq
N \dots (N-(3k-1-l))
\eeq
We can bound this simply by $N^{3k-l}$ . So to bound the contribution of all possible coincidences
we have to evaluate the sum
\beq
\sum_{l=1}^{3k} C^{l}_{3k} N^{3k-k} = N^{3k} \l (1+\frac{1}{N})^{3k} - 1 \r
\eeq
Suppose we want the contribution from the collisions be less than $N^{3k-k/2}$.
\fi
% end attempt

\subsection{Resolving anti-commutators}
\label{app:matrix_col}
Let us consider a single trace operator $\Oc_k$ made with $k$ fermions $\psi^a_{ij}$.
Suppose we want to rearrange positions of operators $\psi$.
In this subsection we are going to demonstrate that
additional operators arising from anti-commutators are suppressed by $1/N$. Intuitively, it happens
if $k$ is not too large, so that the index
sum is dominated by different indices. Namely, we will show that $k \lesssim \sqrt{N}$. 
This implies that in
rearranging $\psi$ operators inside the trace we can neglect all (anti-)commutators.
%We are going to show that appearing (anti-)commutators are subleading in $N$.
Our task is greatly simplified by the fact that anti-commutators are $c-$numbers, so
we will simply reduce the fermionic number and then use the proposed bound (\ref{app:matrix_bound}).
However, we should keep in mind that there could be $k(k-1)/2$ places where $\psi$ failed to 
anti-commute, we have to take it into account in order to prove that anti-commutators are
subleading. Also we have to watch carefully if our manipulations are going to introduce factors
\beq
\label{eq:m_piece}
\sum_{ij} \psi^a_{ij} \psi^a_{ji} = N^2 -1
\eeq
as they potentially can spoil the scaling (\ref{app:matrix_bound}).

Essentially, there are two cases when anti-commutators appear, in other words indices ``collide'':
\begin{itemize}
\item Collision of $(i_1,j_1)$ and $(i_2,j_2)$ \\
In this case the extra anti-commutator term is
\beq
\{ \psi^{a_1}_{i_1 j_1} , \psi^{a_1}_{i_2 j_2} \} = 
2 \l \delta_{i_1 j_2} \delta_{j_1 i_2} -\delta_{i_1 j_1} \delta_{i_2 j_2} \frac{1}{N} \r
\eeq
So we simply have $\Oc_{k} \ra k(k-1) \l \Oc_{k-l-2} \Oc_l + \frac{1}{N} \tilde{\Oc}_{k-2} \r$.
Where each $\Oc_{\#}$ is a single trace operator.
$l$ is determined by the distance between $\psi$ operators under the trace.
Could it be that we have introduced a disconnected piece like (\ref{eq:m_piece})? 
It is not possible for $\tilde{\Oc}_{k-2}$ since we have already bounded all $k=4$ operators.
It is possible for $\Oc_{k-l-2} \Oc_k$. But in this case we do not have to multiply by
$k(k-1)$ since such occurrence might happen only twice for each $\psi$. So we have to multiply 
by $2k$ instead.
So in the end we may have
$\propto k^2 \Oc_{k-2}$ or $2k N^2 \Oc_{k-4}$. But it is still consistent with the scaling
(\ref{app:matrix_bound}) as long as $k \lesssim N$.

\item Collision of $(i_1,j)$ and $(j,i_2)$:
\beq
\sum_j  \{ \psi^{a_1}_{i_1 j}  , \psi^{a_1}_{j i_2} \}  = 2 (N - 1/N)\delta_{i_1 i_2} 
\eeq
In this case we have $\Oc_k \ra 2k (N-1/N) \Oc_{k-2} $. Notice that we have multiplied
by $2k$ because the $\psi$ operators share the same index, so there are only $k$ pairs like that.
Could it be that we have created a piece like (\ref{eq:m_piece})? Notice that now both
$\psi$ belong to the same single trace operator. And $\Oc_{k-2}$ is single trace too.
So $\Oc_{k-2}$ is disconnected only if $k=4$. 
But we have already covered all four-fermion operators in the previous Section.

This last case will yield the stringiest constraint. Namely,
if originally we had $\Oc_{3k+3}$, which scales as $2^{k+1} N^{5(k+1)/2}$
we will have $2k N \Oc_{3k+1}$ which scales as $2k N^2  2^k N^{5k/2}$.
It is smaller if
\beq
k \le  2 N^{1/2}
\eeq
%They are not many $k=4$ operators 
%and in the previous Section we have already
%considered the cases $\Tr \l \psi^1 \psi^1 \psi^1 \psi^1 \r$ and
%$\Tr \l \psi^1 \psi^2 \psi^2 \psi^1 \r$. 
%We have checked by hand that all 4-fermion operators satisfy the scaling $\ref{app:matrix_bound}$.

\end{itemize}

\subsection{Color factors}
\label{app:colors}
We want to bound expressions like
\beq
\Cc_{k} = \Tr \l T^{(ij)_1} T^{(ij)_2}  \dots T^{(ij)_k} \r
\eeq
by
\beq
|\Cc_{k}| \le \sqrt{2}
\eeq

We can do it again using elementary algebra and the completeness relation:
\beq
\sum_{(pq)} T^{(pq)}_{ij} T^{(pq)}_{kl} = \delta_{il} \delta_{jk} - \frac{1}{N} \delta_{ij} \delta_{kl}
\eeq

We will use induction in $k$: $\Cc_1=0$ and $\Cc_2 \le 1$ due to normalization (\ref{eq:T_norm}).

We can consider the square of $\Cc_k$ and take a sum over $(ij)_k$:
\begin{align}
|\Oc_k |^2 = \Tr \l T^{(ij)_1} \dots T^{(ij)_k} \r 
\Tr \l T^{(ij)_k}  \dots T^{(ij)_1} \r \le \nonumber \\
\le  \sum_{(ij)_k} \Tr \l T^{(ij)_1} T^{(ij)_2}  \dots T^{(ij)_k} \r 
\Tr \l T^{(ij)_k}  \dots T^{(ij)_1} \r  = \nonumber \\
= \Tr \l T^{(ij)_1} \dots T^{(ij)_{k-1}} T^{(ij)_{k-1}} \dots T^{(ij)_1} \r -
\frac{1}{N} \Tr \l T^{(ij)_1} \dots T^{(ij)_{k-1}} \r \Tr \l T^{(ij)_{k-1}} \dots T^{(ij)_1}  \r
\end{align}
The first term can be rewritten as 
\beq
\Tr \l H T^{(ij)_{k-1}} T^{(ij)_{k-1}} H^\dagger \r 
\eeq
Now, $T^{(ij)}$ are hermitian and due to normalization (\ref{eq:T_norm}) has eigenvalues
smaller then one. Therefore the first term can be bounded by
\beq
\Tr \l H T^{(ij)_{k-1}} T^{(ij)_{k-1}} H^\dagger \r  \le \Tr \l H H^\dagger \r
\eeq
Repeating this for other operators we arrive at
\beq
|\Cc_k|^2 \le 1 + \frac{1}{N} |\Cc_{k-1}|^2
\eeq
Hence, all $\Cc_k$ can be bounded by
\beq
|\Cc_k|^2 \le 1 + \sum_{l=1}^{+\infty}\frac{1}{N^l} \le 1 + \frac{1}{N-1} \le 2
\eeq
for $N\ge 2$.
%However, we can note that
%the space $S$ all of operators(not necessarily singlets) is given by the tensor product of the
%space of states $V$ and its conjugate $\bar{V}$:
%\beq
%S = V \otimes \bar{V}
%\eeq
%so instead we can bound the Casimirs for $V$ by looking at the representations arising in $S$.

%Operators in $S$ are build from products of $\psi^{a_1}_{ij}$ where some subsets of $i$ or $j$-indices
%are symmetrized(Young diagram rows), some are antisymmetrized(columns).
%The main observation is that because of the fermionic symmetry, symmetrization in some subset of
%$i$ corresponds to anti-symmetrization in $j$. It means that the Young diagrams we are studying
%has no more than $N$ rows and $N$ columns.
%In fact the Casimir is maximised when there are $N/2$ rows of length $N$. In this case
%\beq
%C_2 = \frac{N^3}{2}
%\eeq
\subsection{Wrong contractions}
\label{sec:wrong}
In this Section we are going to estimate the relative contribution of ``wrong contractions''
(like $\Tr\l \psi^1 \psi^3 \psi^2 \r/N^2$ in eq. (\ref{eq:three_psi})). 
Due to the bound (\ref{app:matrix_bound}) these operators have
the same scaling with $N$. So a problem may arise if the number of such extra operator overcome
the $1/N$ suppressions.

We can bound their number from above\footnote{It is an upper bound
because we are neglecting the cyclicity of trace, thus adding extra operators}, 
by noticing that ``wrong contractions'' are given by all possible permutations of
fermionic operators. A trivial permutation gives unsuppressed operator, whereas 
other permutations are suppressed by $N^{p}$, where $p$ is the number of permuted fermions.
If we fix $p$, there are $!p$ non-trivial\footnote{Number $!p$, known as
\textit{derangement}, is the number of permutations of $p$ elements which
leave no element in its original place. For large $p$, $!p \approx p!/e$.} permutations. 
We can bound this number simply by $p!$. So the relative contribution of ``wrong contractions'' can
be bounded by
\beq
\label{eq:wrong_bound}
\sum_{p=2}^k C_{k}^p \frac{p!}{N^p} = k! \sum_{p=2}^k \frac{1}{N^p (k-p)!} = 
\frac{e^N \Gamma(1+k,N)}{N^k} - 1 - \frac{k}{N}
\eeq
where $\Gamma(k+1,N)$ is incomplete Gamma function\footnote{Similar computation is done in Appendix
\ref{MM:lemmas} but we repeat it here to make different sections more self-contained.}:
\beq
\Gamma(k+1,N) = \int_N^{+\infty} t^k e^{-t} dt
\eeq
which has the following bounds \cite{natalini2000}:
\beq
N^k e^{-N} \le \Gamma(k+1,N) \le C N^k e^{-N} 
\eeq
where constant $C$ is such that $N \ge \frac{C}{C-1} k$. 
In our case assuming $k \le N/2$ we choose it to be $C=1+2k/N$. We conclude that 
the sum (\ref{eq:wrong_bound}) can be bounded by
\beq
(\text{eq. (\ref{eq:wrong_bound})}) \le \frac{k}{N} \le \frac{1}{2} 
\eeq
\section{Bounding singlet operators in CTKT model}
\label{app:bound}
In this Appendix we will prove the bound (\ref{bound}).
\subsection{Hamiltonian and Casimirs}
As a warm-up, let us repeat the calculation from \cite{singlet_states} and bound
\footnote{Unlike \cite{singlet_states} we have $\psi_{abc}^2=1$.} the Hamiltonian by $N^5$.
This is illustrated by Figure \ref{fig:cutting}.

\begin{figure}[!ht]
\centering
\includegraphics[scale=0.6]{./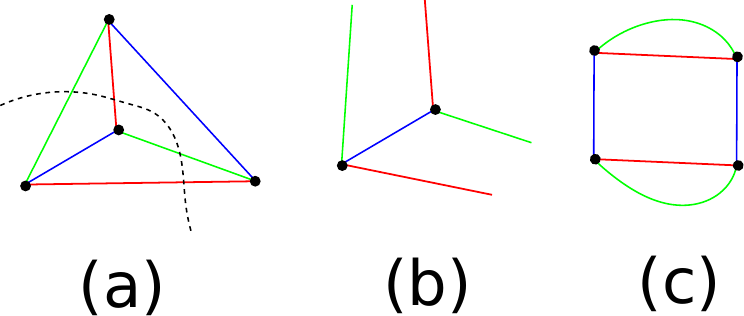}
\caption{(a) Original CTKT hamiltonian and the cut line. (b) One $A^{b'c'}_{bc}$ part. 
(c) The resulting Casimir.}
\label{fig:cutting}
\end{figure}

Diagrammatically, a singlet operator is a 3-regular graph, where
vertices are fermion operators $\psi_{abc}$ and tri-colored edges are index contractions. 
Splitting an operator into two corresponds to a bi-partition of vertices into black and gray.

We can rewrite the Hamiltonian as:
\beq
H_{\rm CTKT} = \sum_{bc b'c'} A^{b'c'}_{bc} A^{b' c}_{b c'} + \text{number of order $N^4$}
\label{H_AA}
\eeq
Where $A$ are (Hermitian) generators of $O(N^2)$, since we can combine a pair $(b,c)$ into a single index running
from $1$ to $N^2$:
\beq
A^{bc}_{b'c'} = \frac{i}{2} \sum_a [\psi_{abc}, \psi_{a b'c'}]
\eeq
Ignoring the constant in (\ref{H_AA}), we can rearrange the terms in $H_{CTKT}$:
\beq
H_{\rm CTKT} = \frac{1}{2} \sum_{bcb'c'} \l  \l A^{b'c'}_{bc} \r^2 + \l A^{b'c}_{bc'} \r^2 - 
\l A^{b'c'}_{bc} - A^{b'c}_{bc'} \r^2 \r
\eeq
First two terms are Casimirs $C_2^{O(N^2)}$ of $O(N^2)$. The last term is negative so
the Hamiltonian is bounded by the Casimir:
\beq
\bra s | H_{\rm CTKT} | s \ket \le C_2^{O(N^2)}
\eeq
One elementary algebraic fact is that the Casimir of\footnote{Here we
briefly return to $N_1 N_2 N_3$ notation to show how the groups are arranged.} 
$O(N_1)$ and the Casimir of $O(N_2 N_3)$ are related:
\beq
C_2^{O(N_1)} + C_2^{O(N_2 N_3)} = N_1 N_2 N_3 \l N_1 + N_2 N_3 -2 \r
\eeq
Therefore the Casimirs and the Hamiltonian are bounded by $N^5$.
\subsection{Main argument}
\label{sec:main_arg}
It is easy to generalized this bound for bigger singlet operators. We will work inductively in the
number of fermions $2k$.
We want to show that for each $2k$-fermion operator $\Oc_{2k}$ there is $2k-4$ fermion operator
$\Oc_{2k-4}$ such that 
\beq
\label{to_prove}
\bra s | \Oc_{2k} | s \ket  \le N^5 \bra s | \Oc_{2k-4} | s \ket
\eeq

Consider an operator $\Oc_{2k}$ consisting of $2k$ fermions. It involves the sum over $k$ triples $(a,b,c)$.
We split it into two operators:
\beq
\label{eq:split}
\Oc_{2k} = \sum_M \Oc^M_{k} \tilde{\Oc}^M_{k}
\eeq
Multi-index $M$ represents a collection of elementary indices.
%Each term in the sum is Hermitian.
In Section \ref{sec:col} we prove a lemma that up to $1/N$ corrections, $\Oc^M_{k}$ and $\tilde{\Oc}^M_{k}$
are hermitian or (anti-hermitian) and they commute with each other. We estimate these $1/N$ corrections to show that
they do not spoil the desired inequality (\ref{to_prove}). Intuitively, it happens because for small enough $k$, the sum
over triples $(a,b,c)$ is dominated by configurations where all triples are different.

Without loss of generality we assume that they are both hermitian\footnote{For anti-Hermitian operators one can multiply both of them by $i$, making
them hermitian and
use $-2 \Oc \tilde{\Oc} = \Oc^2 + \tilde{\Oc}^2 - (\Oc + \tilde{\Oc})^2$}. 
Therefore we can again use Cauchy--Schwartz inequality\footnote{
For brevity we omit dressing the expression with $\bra \ksi | \cdot | \ksi \ket$}:
\beq
\label{eq:cs}
\Oc_{2k} \le \frac{1}{2} \sum_M \l \l \Oc^M_{k} \r^2 + \l \tilde{\Oc}^M_{k} \r^2 \r
\eeq
So in order to bound matrix elements we can cut an operator in two pieces and glue them to their respective copies.

Imagine we managed to cut in such a way that both $\Oc^M_{k}, \tilde{\Oc}^M_k$ 
contain a $\psi$-operator with two dangling lines - Figure \ref{fig:bubble} (a).
\begin{figure}[!ht]
\centering
\includegraphics{./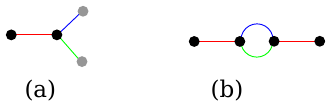}
\caption{Black and gray dots correspond to the partition into $\Oc^M_k$ and $\tilde{\Oc}^M_k$.
(a) Part of the original operator $\Oc_{2k}$. (b) What happens when we build $\l \Oc^M_k \r^2$}
\label{fig:bubble}
\end{figure}
Then $(\Oc^M_k)^2$ and $\l \tilde{\Oc}^M_{k} \r^2$ contains a bubble (\ref{eq:bubble}) which reduces
the number of $\psi$-operators to $2k-2$ with an extra $N^2$ in front. 
Naively, we got a even a better bound compared to (\ref{to_prove}):
\beq
\bra s | \Oc_{2k} | s \ket  \le N^2 \bra s | \Oc_{2k-2} | s \ket
\eeq
Unfortunately, unlike the original $\Oc_{2k}$, new operator $\Oc_{2k-2}$ 
can contain\footnote{This is discussed in detail in Section \ref{sec:col}.} 
$\sum_{abc} \psi_{abc} \psi_{abc} = N^3$ taking us back to the proposed bound (\ref{to_prove}).
Indeed, it is allowed to have a $\psi_{abc} \psi_{abc}$ \textit{after} we eliminate
the new bubble. In fact, this is how a bound on Casimir operators can be obtained. So we must allow this situation. 
However, we must cut carefully and do not
introduce an extra ``stray'' $\psi$ with all lines dangling, as its square will result in a extra $N^3$ from diagram on
Figure \ref{fig:psipsi}. 

For example, consider 8-fermion operator on Figure \ref{cube}. 
\begin{figure}[!ht]
\centering
\includegraphics[scale=0.7]{./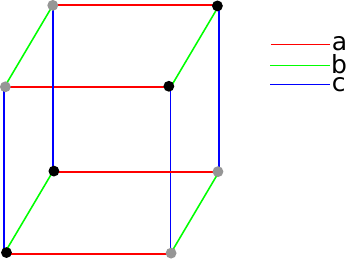}
\caption{An example of $\Oc_{8}$. Black and gray dots show partition into $\Oc_4, \tilde{\Oc}_4$}
\label{cube}
\end{figure}
After we cut as shown on the figure we end up with $\l C_2^{O(N^2)} \r^2$:
\beq
\Oc^{\rm cube}_8 \le \l C_2^{O(N^2)} \r^2 \le N^{10}
\eeq

Do we always have a proper cut? By proper we mean that both 
$\Oc^M_{k}, \tilde{\Oc}^M_k$ have a $\psi$ with two dangling lines, and, moreover, there are
no disconnected $\psi_{abc}$. We have proven by hand that such cut is possible for all connected singlet operators
containing 4,8,10 fermions. We omitted 6-fermion operators, because there are no ``true'' six-fermion operators:
all of them reduce to 4-fermion operators.

It would be convenient to contract all $a$ indices right away and so
 $\Oc_{2k}$ can be thought of as build from $A^{b_1 c_1}_{b_2 c_2}$:
\beq
A^{b_1 c_1}_{b_2 c_2} = i \sum_a \psi_{a b_1 c_1} \psi_{a b_2 c_2}, \ \l b_1 \neq b_2\ \text{or}\ c_1 \neq c_2 \r
\eeq
One way to make sure there are no hanging $\psi_{abc}$ is to cut using $A$ only. This is possible 
if $k$ is even. If it is odd, we will need to cut one $A$ in half.

Let us start from proving that one can always make a proper cut if $2k \ge 12$.
The proof is illustrated by Figure \ref{fig:proof}.

\begin{figure}[!ht]
\centering
\includegraphics{./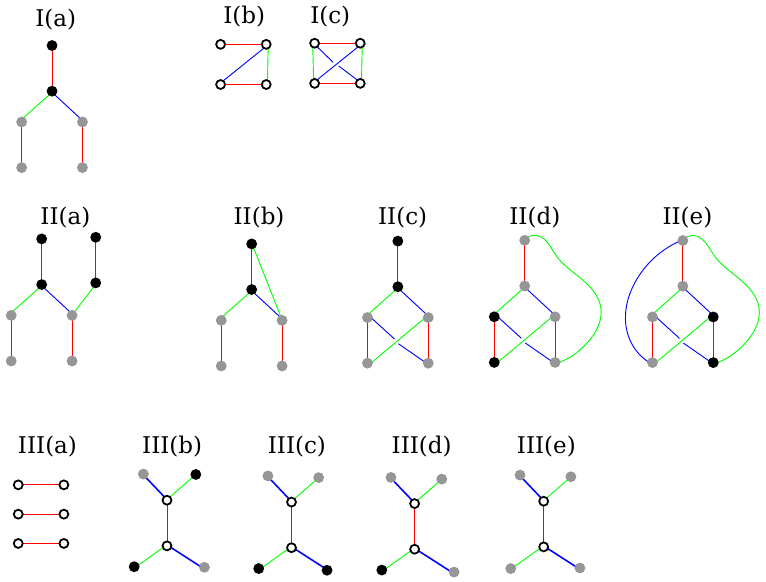}
\caption{Three stages in constructing the proper cut.}
\label{fig:proof}
\end{figure}

Stage I: we pick up a $A$ combination(two $\psi$ connected by a ``red'' $a$ index) and declare it black.
We need to consider two cases.
\begin{itemize}
\item I(a): Suppose one of black $\psi$ is connected to two different $A$, which we declare gray.
\item I(b) Alternatively, assume that it is impossible. 
Then for each $A$, both of its $\psi$ are connected to the same $A$.
Repeating the same argument for lower $A$ in I(b) we arrive at the Hamiltonian operator in I(c).
\end{itemize}

Stage II: consider one of the gray $\psi$ in I(a) which is connected to the initial black $A$. There are three possibilities:
\begin{itemize}
\item II(a): Its other leg is connected to a new $A$. We declare new $A$ to be black.
\item II(b): Its other leg is connected to the initial black $A$. 
\item II(c): Neither of the above is true. Then the two gray $A$ are connected to themselves. 
To be more precise, gray $\psi$, connected to the initial black $A$, have to connect to the other gray $A$. 
Imagine this situation happens to
\text{all} triples of $A$ and cases (a) and (b) never realize. Then we can re-paint the $A$ and repeat
the above conclusion that gray $A$ are connected to themselves. This way we first arrive at II(d) and then to
II(e), which is an isolated 6-fermion operator.
\end{itemize}

Cases II(a), II(b) are good, for each black and gray parts, we have a $\psi$ with two dangling indices.
If $k$ is even we can paint other $A$ as we want. Obviously, this will not introduce ``stray'' $\psi$.
However, if $k$ is odd, we need to make sure we can cut one of the $A$ in half without introducing strays.

Stage III: suppose $k$ is odd. Since we are working with $2k \ge 12$, we have at least three extra $A$ - III(a).
We call them new $A$ as in contrast to old $A$ shown on II(a) and II(b).
We paint all new $A$ except one or two into black and gray. We do this arbitrarily, but maintaining the balance, such
that in the end we need to split one remaining blank $A$ into a gray or black $\psi$.
\begin{itemize}
\item III(b), III(c), III(d) or symmetric situations: 
obviously we can properly split the remaining $A$(white circles) into black and gray.
\item III(e): we can not do that, since it will lead to a stray black $\psi$(there is also a symmetric situation
with black and gray exchanged). To order to recover good situations like III(b,c,d) we need to change the color
of at most one fermion.
We can assume that all gray $\psi$
belong to the old $A$ depicted on II(a) or II(b). The reason for this is that we have at least
one new black $A$ and one new gray $A$ which we can always repaint and exchange their colors.

Formal proof of this assumption is the following. Assume that the blank $A$ is connected to one new gray $A$. This connection 
involves 1,2 or 3 out of 4 color lines emanating from the blank $A$. In principle it is possible that
some of the gray $\psi$ in Stage III are the same. Notice that the blank $A$
can not be completely connected to another $A$, as it leads to charge bubbles. We exchange this gray $A$ with another 
black one, leading to III(b,c,d).

The remaining step is when the blank is completely connected to the old gray $A$. In this case this choice 
of the blank $A$ is not good and we need to pick up another one. We have at least three $A$ to choose from and they
can not be all connected to old gray $A$, since there are not enough color lines there.
\end{itemize}
As we briefly mentioned above, we have checked by hand that the proper cut exists for $2k=4,8,10$. The above
proof covers $2k \ge 12$. The proper cut always exists, so we have proven inequality (\ref{to_prove}).
Also, as we have said before, all six-fermion operators reduce to the four-fermion operators
up-to an extra $N^2$ in front. Therefore
\beq
\bra s | \Oc_6| s \ket \le N^2 \bra s | H_{\rm CTKT}| s \ket \le N^7.
\eeq
This is the origin of $1/\sqrt{N}$ for odd $k$ in the bound (\ref{bound}).

\subsection{Resolving anti-commutators}
\label{sec:col}
Consider a singlet $\Oc_{2k}$  build out of $2k$ $\psi$-operators. We have a sum over
$k$ triples $I=(a,b,c)$. In this Section we argue that 
%all triples are different up to 
%$1/N$ corrections which are not important if $k<\sqrt{N}$.
%In other words, 
commuting fermions past each other will yield extra
anti-commutator terms which are suppressed in $N$. 
This
result implies right away that $\Oc_{2k}$ and its sub-parts are 
(anti-)Hermitian and so we can use inequality (\ref{eq:cs}).

Let us consider case by case when indices of fermions may collide, i.e. 
fermion anticommutator is not trivial.
Luckily, since we have Majorana fermions, collision of indices will produce operators
with less fermion operators.
We should keep in mind that fermionic operator with $2k$ fermions can have
$k(2k-1)$ anti-commutator terms.
\begin{itemize}
\item Collision of $(a_1,b_1,c_1)$ and $(a_2,b_2,c_2)$.

Suppose that in $\Oc_{2k}$ two $\psi$-operators with no common indices, e.g. 
$\psi_{a_1 b_1 c_1}, \psi_{a_2 b_2 c_2}$ have, in fact, the same indices. This question arises, for 
example, if we want to commute $\psi_{a_1 b_1 c_1}$ past $\psi_{a_2 b_2 c_2}$.
It means that we substitute the product of these operators by a delta function:
\beq
\{ \psi_{a_1 b_1 c_1}, \psi_{a_2 b_2 c_2} \} = 2 \delta_{a_1 a_2} \delta_{b_1 b_2} \delta_{c_1 c_2}
\eeq
This implies that instead of $\Oc_{2k}$ we have $ k(2k-1) \Oc_{2k-2}$, 
which is consistent with inequality 
(\ref{to_prove}) if $k \lesssim N^{5/4}$

\item Collision of $(a,b_1,c_1)$ and $(a,b_2,c_2)$.

Now suppose that $\psi$ with one common index collide. In this case we have
\beq
\sum_a \{  \psi_{a b_1 c_1}  ,\psi_{a b_2 c_2} \} =  2 N \delta_{b_1 b_2} \delta_{c_1 c_2}
\eeq
So now we extra factor of in front: $N 2k(k-1) \Oc_{2k-2}$.
This is still consistent with (\ref{to_prove}) if $k \lesssim N^{3/4}$.

\item Collision of $(a_1,b,c)$ and $(a_2,b,c)$.

 Since we are interested in matrix elements in
singlet states, in the very beginning we can exclude the charges (\ref{charges}):
\beq
\label{eq:bubble}
\sum_{bc} \psi_{a_1 bc} \psi_{a_2 bc} =-i Q^1_{a_1 a_2} + N^2 \delta_{a_1 a_2}
\eeq
Since we are interested in the singlet states, the charge operator can be commuted to bra or ket, producing a number
of the order of the number of fermions and reducing the number of operators from $2k$ to $2k-2$. 
So both terms effectively reduce the number of fermions
to $2k-2$. We will assume that we have an operator build from no more than $N^2$ fermions.
This way the second part dominates.
It means that instead of bounding $\Oc_{2k}$ we have to bound
\beq
k N^2 \Oc_{2k-2} 
\eeq
This is a factor of $k$ because there are at most $k$ such pairs with two indices
contracted.
This is consistent with eq. (\ref{to_prove}) if $k \le N^{1/2}$.

\item Collision of $(a,b,c)$ and $(a,b,c)$
This is a drastic situation when $\Oc_{2k}$ contains a disconnected piece - Figure \ref{fig:psipsi}
\beq
\sum_{abc} \psi_{abc} \psi_{abc} = N^3
\eeq
In fact, this is not possible.
\begin{figure}[!ht]
\centering
\includegraphics[scale=1.5]{./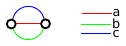}
\caption{Diagrammatic representation for $\sum_{abc} \psi_{abc} \psi_{abc}$}
\label{fig:psipsi}
\end{figure}
In the original problem of decomposing a product of $\psi$ into singlets such combinations 
are absent.
Indeed, if there is two identical $\psi_{abc}$, their product can be substituted by $1$ and we 
need to decompose a product of 
$2k-2$ fermions into singlets. In our manipulations 
with the operators in the main part of the proof, we made sure not to
introduce such disconnected factor. 
We might worry that while computing anti-commutators, when we substitute a product 
$\psi \psi$ by a delta function, we might accidentally create such disconnected pieces.
But this is actually not possible because $\psi$ share at most one common index as we have explained
above.
%Consider, for example
%\beq
%\dots \psi_{a b_1 c_1} \psi_{a b_2 c_2} \psi_{\tilde{a} b_1 \tilde{c}} \psi_{\tilde{a} b_2 \tilde{c}} 
%\eeq 
%But in the worst scenario we end up with $N^9 \Oc_{2k-8}$ or $N^7 \Oc_{2k-6}$ or $N^5 \Oc_{2k-4}$ 
%and no extra delta-functions which is consistent with (\ref{to_prove}).
\end{itemize}
%\section{Violating 't Hooft scaling in harmonic oscillator}
%\label{viol}
%Suppose we have two Majorana fermions $\psi^1_{ij},\psi^2_{ij}$ in the adjoint representation of $SU(N)$.
%We can form complex fermions by:
%\begin{align}
%\psi_{ij} = \psi^1_{ij} + i \psi^2_{ij} \\
%\psi^\dagger_{ij} = \psi^1_{ij} - i\psi^2_{ij}
%\end{align}
%with anti-commutation relation:
%\beq
%\{\psi^\dagger_{ij}, \psi_{kl}\} =  \delta_{il} \delta_{jk} - \frac{1}{N} \delta_{ij} \delta_{kl}
%\eeq
%The Hamiltonian of harmonic oscillator is
%\beq
%\label{osc}
%H = \mu \sum_{ij} \psi^\dagger_{ij} \psi_{ji}
%\eeq
%We define the vacuum $| 0 \ket$ in the fermionic Fock state as
%\beq
%\psi_{ij} | 0 \ket = 0
%\eeq
%This state is a singlet. In case of harmonic oscillator (\ref{osc}) this coincides with the physical vacuum.
%Then low-energy singlet states are 
%$\Tr\l \psi^\dagger \psi^\dagger \r | 0 \ket, \Tr \l \psi^\dagger \psi^\dagger \psi^\dagger  \r | 0 \ket$ and
%so on.

%It is obvious that for a fixed $ij$ we have
%\beq
%\psi_{ij} \psi^\dagger_{ji} | 0 \ket = \l 1 - \frac{\delta_{ij}}{N} \r | 0 \ket
%\eeq
%And so
%\beq
%\bra 0 | \Tr(\psi \psi^\dagger)  | 0 \ket = N^2 - 1
%\eeq

%This clearly violates the scaling (\ref{thooft}).

%\begin{thebibliography}{1}
%\end{thebibliography}
\section{An argument for 't Hooft scaling}
\label{why}
Let us argue why 't Hooft scaling holds in the low energy sector of BFSS.

Field content of BFSS/BMN has 16(spinor of $SO(9)$) real Majorana fermions $\psi^a, a=1,\dots,16$ and
9(vector of $SO(9)$) real scalars $X^\mu$ both in the adjoint representation of $SU(N)$.
BFSS model is essentially a dimensional reduction of $\mathcal{N}=4$ $SU(N)$ super Yang-Mills theory from
$3+1$ dimensions to one dimension. BMN is a special massive deformation.
Yang-Mills coupling constant $g^2_{YM}$ in 1 dimension has dimension of $\text{mass}^3$. Therefore the
system is strongly coupled in the IR. In this regime, when the 't Hooft constant $\la=g^2_{YM} N$ is large
BFSS is conjectured to be dual to Einstein--dilaton gravity in 10 dimensions \cite{bfss_dual}. 
However, since this theory does not have conformal symmetry, the geometry is not $AdS_2 \times S^8$: the $S^8$ radius is not constant and there is a large curvature region corresponding to the UV regime where 
't Hooft coupling is small.  

The Lagrangian of BFSS is the dimension reduction of four-dimensional $\mathcal{N}=4$ $SU(N)$ super Yang--Mills to one dimension:
\beq
\begin{split}
\Lc=  \frac{1}{g_{YM}^2} \Tr \Biggl( \oh \sum_{\nu=1}^9 \l D_t X^\nu \r^2 + \oh \chi D_t \chi + \frac{1}{4}  [X^\nu,X^\eta]^2 + 
 i \oh  \chi \ga^\nu [\chi, X^\nu] \Biggr)
\end{split}
\eeq
where $\gamma^\nu$ are 10-dimensional Gamma-matrices 
$D_t = \pr_t + i [A_t, \cdot]$ is the covariant derivative in the adjoint representation.
There is one coupling in this model: $g_{YM}$ is the gauge coupling constant. 

\iffalse
First of all,'t Hooft scaling (\ref{eq:thooft}) obviously
does not hold for a matrix harmonic
oscillator near the ground state. BMN at large mass can be thought of as a collection of matrix oscillators.
For example, having a complex fermion $\psi_{ij},\psi^\dagger_{ij}$ and a singlet Fock vacuum $| 0 \ket$ annihilated
by $\psi_{ij}$:
\beq
\psi_{ij} | 0 \ket = 0
\eeq
one can easily verify that
\beq
\Tr \l \psi \psi^\dagger \r | 0 \ket \sim N^2 | 0 \ket
\eeq
this happens because all $N^2$ in the trace contribute.
If the Hamiltonian is just a harmonic oscillator Hamiltonian:
\beq
H = \om_0 \sum \psi^\dagger_{ij} \psi_{ij}
\eeq
then the Fock vacuum $| 0 \ket$ is the physical vacuum of the model. Therefore 't Hooft scaling does not 
take place and we expect no error correction.
\fi

First of all, let us explain why the power $5/2$ in (\ref{eq:basic_bound}) is consistent with 
't Hooft scaling. Naively, one expects
a factor of $N$ for each trace. However, in our case fermion operators are normalized differently. 
In the 't Hooft argument one considers Lagrangian
in the form
\beq
\mathcal{L} = \frac{N}{\la} \Tr \l (D X)^2 + \chi D \chi + \dots  \r
\eeq
where we have omitted the interaction terms, and $X$ and $\chi$ schematically refer to a collection of scalar and fermionic 
fields. In this normalization, inspecting
Feynman diagrams one indeed concludes that single-trace expectation values of $\chi$ scale as $N$
\beq
\bra \Tr \l \chi^k \r \ket \sim N
\eeq
However, $\chi$ have non-canonical commutation relations, because of the factor $N/\la$ in front of the Lagrangian: $\chi^2 \sim \frac{\la}{N}$.
In the present paper we work with fermions in the canonical 
normalization $\psi^2 \sim 1$, hence $\psi \propto \chi \sqrt{N}$. Hence for canonically normalized
fermions
\beq
\label{eq:true_thooft}
\bra \Tr \l \psi^k \r \ket \sim N^{1+k/2}
\eeq

However, in quantum mechanics we solve
Schr\"odinger equation instead of computing Feynman diagrams. 
So it is not obvious that 't Hooft scaling should hold. However, in case of BFSS model it was argued that it holds
at least for simple expectation values such as $\Tr \l X^1 \r^2$.

%However, the Lagrangian of BMN model is a bit more complicated:
%\beq
%\begin{split}
%\Lc=  \frac{1}{g^2} \Biggl( \oh \sum_{\nu=1}^9 \l D_t X^\nu \r^2 - \oh \l{\mu \over 3}\r^2 \sum_{\nu=1,2,3} \l X^\nu \r^2 - \oh \l{\mu \over 6}\r^2 \sum_{\nu \ge 4} (X^\nu)^2 + \oh \psi D_t \psi -
%{\mu \over 8} \psi \gamma_{123} \psi + \\
%-i {1 \over 3} \mu g  \sum_{\nu,\eta,\kappa=1}^3 \Tr \l X^\nu X^\eta X^\kappa \r \ep_{\nu \eta \kappa} + \frac{1}{4} \Tr \l [X^\nu,X^\eta]^2 \r + 
% i \oh \Tr \l \psi \ga^\nu [\psi, X^\nu] \r \Biggr)
%\end{split}
%\eeq

%where $\gamma^\nu$ are 10-dimensional Gamma-matrices with $\gamma_{123}=\gamma^1 \gamma^2 \gamma^3$ and
%$D_t = \pr_t + i [A_t, \cdot]$ is the covariant derivative in the adjoint representation.
%There are two couplings in this model:
%$\mu$ is a mass parameter, at $\mu=0$ this model reduces to BFSS model. $g$ is the gauge coupling constant. Corresponding 't Hooft coupling constant
%$\la=g^2 N$ has dimension $[\text{mass}]^3$. For $\mu \gg \la^{1/3}$ we have weakly coupled 
%harmonic oscillators, whereas for $\la^{1/3} \gg \mu$ we are close to BFSS model at strong coupling
%and we expect that the Fock vacuum is not in the low-energy subsector.

%There were certain tensions regarding the BFSS ground state VEVs of single trace operators, such as
%$\Tr \left[ \l X_1 \r^2 \right]$. 
%They were resolved by Polchinski \cite{Polchinski} 
Using supersymmetry and virial theorem,
one can rigorously show \cite{Polchinski} that the "typical size" $R$ of the vacuum state is
\beq
\label{LC}
R^2 = \bra 0 | \Tr \left[ \l X_1 \r^2 \right] | 0\ket \gtrsim N \la^{2/3}
\eeq  
It is believed that this inequality is saturated. For example, one can study 
non-singlet excitations in this model
\cite{to_gauge}.
In can be shown that the energy gap between the vacuum and the lightest $SU(N)$-charged(adjoint) state is
\beq
E_{\rm adj} \propto \frac{N \la}{R^2}.
\eeq
't Hooft scaling (\ref{LC}) implies that non-singlets are separated by a finite gap from the ground state.
Monte--Carlo simulations support this conjecture \cite{Hanada}.

%\printbibliography

\bibliographystyle{utphys}
\bibliography{ref}{}

\providecommand{\href}[2]{#2}\begingroup\raggedright\begin{thebibliography}{10}

\bibitem{err}
A.~Almheiri, X.~Dong, and D.~Harlow, ``{Bulk Locality and Quantum Error
  Correction in AdS/CFT},''
  \href{http://dx.doi.org/10.1007/JHEP04(2015)163}{{\em JHEP} {\bfseries 04}
  (2015) 163},
\href{http://arxiv.org/abs/1411.7041}{{\ttfamily arXiv:1411.7041 [hep-th]}}.
%%CITATION = ARXIV:1411.7041;%%.

\bibitem{EW_reconstruction}
X.~Dong, D.~Harlow, and A.~C. Wall, ``Reconstruction of bulk operators within
  the entanglement wedge in gauge-gravity duality,''
  \href{http://dx.doi.org/10.1103/physrevlett.117.021601}{{\em Physical Review
  Letters} {\bfseries 117} no.~2, (Jul, 2016) }.
  \url{http://dx.doi.org/10.1103/PhysRevLett.117.021601}.

\bibitem{RT_from_QEC}
D.~Harlow, ``The ryu–takayanagi formula from quantum error correction,''
  \href{http://dx.doi.org/10.1007/s00220-017-2904-z}{{\em Communications in
  Mathematical Physics} {\bfseries 354} no.~3, (May, 2017) 865–912}.
  \url{http://dx.doi.org/10.1007/s00220-017-2904-z}.

\bibitem{tensor_perfect}
F.~Pastawski, B.~Yoshida, D.~Harlow, and J.~Preskill, ``{Holographic quantum
  error-correcting codes: Toy models for the bulk/boundary correspondence},''
  \href{http://dx.doi.org/10.1007/JHEP06(2015)149}{{\em JHEP} {\bfseries 06}
  (2015) 149},
\href{http://arxiv.org/abs/1503.06237}{{\ttfamily arXiv:1503.06237 [hep-th]}}.
%%CITATION = ARXIV:1503.06237;%%.

\bibitem{tensor_random}
P.~Hayden, S.~Nezami, X.-L. Qi, N.~Thomas, M.~Walter, and Z.~Yang,
  ``{Holographic duality from random tensor networks},''
  \href{http://dx.doi.org/10.1007/JHEP11(2016)009}{{\em JHEP} {\bfseries 11}
  (2016) 009},
\href{http://arxiv.org/abs/1601.01694}{{\ttfamily arXiv:1601.01694 [hep-th]}}.
%%CITATION = ARXIV:1601.01694;%%.

\bibitem{mps1}
M.~Fannes, B.~Nachtergaele, and R.~F. Werner, ``Finitely correlated states on
  quantum spin chains,'' {\em Comm. Math. Phys.} {\bfseries 144} no.~3, (1992)
  443--490. \url{https://projecteuclid.org:443/euclid.cmp/1104249404}.

\bibitem{mps2}
S.~\"Ostlund and S.~Rommer, ``Thermodynamic limit of density matrix
  renormalization,'' \href{http://dx.doi.org/10.1103/PhysRevLett.75.3537}{{\em
  Phys. Rev. Lett.} {\bfseries 75} (Nov, 1995) 3537--3540}.
  \url{https://link.aps.org/doi/10.1103/PhysRevLett.75.3537}.

\bibitem{peps1}
F.~Verstraete and J.~I. Cirac, ``Renormalization algorithms for quantum-many
  body systems in two and higher dimensions,'' 2004.

\bibitem{peps2}
F.~Verstraete, M.~M. Wolf, D.~Perez-Garcia, and J.~I. Cirac, ``Criticality, the
  area law, and the computational power of projected entangled pair states,''
  \href{http://dx.doi.org/10.1103/physrevlett.96.220601}{{\em Physical Review
  Letters} {\bfseries 96} no.~22, (Jun, 2006) }.
  \url{http://dx.doi.org/10.1103/PhysRevLett.96.220601}.

\bibitem{ttn}
Y.-Y. Shi, L.-M. Duan, and G.~Vidal, ``Classical simulation of quantum
  many-body systems with a tree tensor network,''
  \href{http://dx.doi.org/10.1103/physreva.74.022320}{{\em Physical Review A}
  {\bfseries 74} no.~2, (Aug, 2006) }.
  \url{http://dx.doi.org/10.1103/PhysRevA.74.022320}.

\bibitem{MERA}
G.~Vidal, ``Class of quantum many-body states that can be efficiently
  simulated,'' \href{http://dx.doi.org/10.1103/physrevlett.101.110501}{{\em
  Physical Review Letters} {\bfseries 101} no.~11, (Sep, 2008) }.
  \url{http://dx.doi.org/10.1103/PhysRevLett.101.110501}.

\bibitem{sy}
S.~Sachdev and J.-w. Ye, ``{Gapless spin fluid ground state in a random,
  quantum Heisenberg magnet},''
  \href{http://dx.doi.org/10.1103/PhysRevLett.70.3339}{{\em Phys. Rev. Lett.}
  {\bfseries 70} (1993) 3339},
\href{http://arxiv.org/abs/cond-mat/9212030}{{\ttfamily arXiv:cond-mat/9212030
  [cond-mat]}}.
%%CITATION = COND-MAT/9212030;%%.

\bibitem{kitaev}
A.~Kitaev, ``A simple model of quantum holography.''
  \url{http://online.kitp.ucsb.edu/online/entangled15/kitaev/},\url{http://online.kitp.ucsb.edu/online/entangled15/kitaev2/}.
\newblock Talks at KITP, April 7, 2015 and May 27, 2015.

\bibitem{ms}
J.~Maldacena and D.~Stanford, ``{Remarks on the Sachdev-Ye-Kitaev model},''
  \href{http://dx.doi.org/10.1103/PhysRevD.94.106002}{{\em Phys. Rev.}
  {\bfseries D94} no.~10, (2016) 106002},
\href{http://arxiv.org/abs/1604.07818}{{\ttfamily arXiv:1604.07818 [hep-th]}}.
%%CITATION = ARXIV:1604.07818;%%.

\bibitem{pr}
J.~Polchinski and V.~Rosenhaus, ``{The Spectrum in the Sachdev-Ye-Kitaev
  Model},''
\href{http://arxiv.org/abs/1601.06768}{{\ttfamily arXiv:1601.06768 [hep-th]}}.
%%CITATION = ARXIV:1601.06768;%%.

\bibitem{cont}
P.~Faist, S.~Nezami, V.~V. Albert, G.~Salton, F.~Pastawski, P.~Hayden, and
  J.~Preskill, ``Continuous symmetries and approximate quantum error
  correction,'' \href{http://arxiv.org/abs/1902.07714}{{\ttfamily
  arXiv:1902.07714 [quant-ph]}}.

\bibitem{Woods2020}
M.~P. Woods and A.~M. Alhambra, ``Continuous groups of transversal gates for
  quantum error correcting codes from finite clock reference frames,''
  \href{http://dx.doi.org/10.22331/q-2020-03-23-245}{{\em Quantum} {\bfseries
  4} (Mar, 2020) 245}, \href{http://arxiv.org/abs/1902.07725}{{\ttfamily
  arXiv:1902.07725 [quant-ph]}}.
  \url{http://dx.doi.org/10.22331/q-2020-03-23-245}.

\bibitem{Yang2020}
Y.~Yang, Y.~Mo, J.~M. Renes, G.~Chiribella, and M.~P. Woods, ``{Covariant
  Quantum Error Correcting Codes via Reference Frames},''
  \href{http://arxiv.org/abs/2007.09154}{{\ttfamily arXiv:2007.09154
  [quant-ph]}}.

\bibitem{bfss}
T.~Banks, W.~Fischler, S.~H. Shenker, and L.~Susskind, ``{M theory as a matrix
  model: A Conjecture},''
  \href{http://dx.doi.org/10.1103/PhysRevD.55.5112}{{\em Phys. Rev.} {\bfseries
  D55} (1997) 5112--5128},
\href{http://arxiv.org/abs/hep-th/9610043}{{\ttfamily arXiv:hep-th/9610043
  [hep-th]}}.
%%CITATION = HEP-TH/9610043;%%.

\bibitem{CTKT}
I.~R. Klebanov and G.~Tarnopolsky, ``{Uncolored random tensors, melon diagrams,
  and the Sachdev-Ye-Kitaev models},''
  \href{http://dx.doi.org/10.1103/PhysRevD.95.046004}{{\em Phys. Rev.}
  {\bfseries D95} no.~4, (2017) 046004},
\href{http://arxiv.org/abs/1611.08915}{{\ttfamily arXiv:1611.08915 [hep-th]}}.
%%CITATION = ARXIV:1611.08915;%%.

\bibitem{Carrozza:2015adg}
S.~Carrozza and A.~Tanasa, ``{$O(N)$ Random Tensor Models},''
  \href{http://dx.doi.org/10.1007/s11005-016-0879-x}{{\em Lett. Math. Phys.}
  {\bfseries 106} no.~11, (2016) 1531--1559},
  \href{http://arxiv.org/abs/1512.06718}{{\ttfamily arXiv:1512.06718
  [math-ph]}}.

\bibitem{bfss_test1}
E.~Berkowitz, E.~Rinaldi, M.~Hanada, G.~Ishiki, S.~Shimasaki, and P.~Vranas,
  ``{Precision lattice test of the gauge/gravity duality at large-$N$},''
  \href{http://dx.doi.org/10.1103/PhysRevD.94.094501}{{\em Phys. Rev.}
  {\bfseries D94} no.~9, (2016) 094501},
\href{http://arxiv.org/abs/1606.04951}{{\ttfamily arXiv:1606.04951 [hep-lat]}}.
%%CITATION = ARXIV:1606.04951;%%.

\bibitem{bfss_test2}
K.~N. Anagnostopoulos, M.~Hanada, J.~Nishimura, and S.~Takeuchi, ``{Monte Carlo
  studies of supersymmetric matrix quantum mechanics with sixteen supercharges
  at finite temperature},''
  \href{http://dx.doi.org/10.1103/PhysRevLett.100.021601}{{\em Phys. Rev.
  Lett.} {\bfseries 100} (2008) 021601},
\href{http://arxiv.org/abs/0707.4454}{{\ttfamily arXiv:0707.4454 [hep-th]}}.
%%CITATION = ARXIV:0707.4454;%%.

\bibitem{bfss_test3}
M.~Hanada, Y.~Hyakutake, J.~Nishimura, and S.~Takeuchi, ``{Higher derivative
  corrections to black hole thermodynamics from supersymmetric matrix quantum
  mechanics},'' \href{http://dx.doi.org/10.1103/PhysRevLett.102.191602}{{\em
  Phys. Rev. Lett.} {\bfseries 102} (2009) 191602},
\href{http://arxiv.org/abs/0811.3102}{{\ttfamily arXiv:0811.3102 [hep-th]}}.
%%CITATION = ARXIV:0811.3102;%%.

\bibitem{Berenstein:2004kk}
D.~Berenstein, ``{A Toy model for the AdS / CFT correspondence},''
  \href{http://dx.doi.org/10.1088/1126-6708/2004/07/018}{{\em JHEP} {\bfseries
  07} (2004) 018}, \href{http://arxiv.org/abs/hep-th/0403110}{{\ttfamily
  arXiv:hep-th/0403110}}.

\bibitem{Itzhaki:2004te}
N.~Itzhaki and J.~McGreevy, ``{The Large N harmonic oscillator as a string
  theory},'' \href{http://dx.doi.org/10.1103/PhysRevD.71.025003}{{\em Phys.
  Rev. D} {\bfseries 71} (2005) 025003},
  \href{http://arxiv.org/abs/hep-th/0408180}{{\ttfamily arXiv:hep-th/0408180}}.

\bibitem{Karch:2002vn}
A.~Karch, ``{Light cone quantization of string theory duals of free field
  theories},'' \href{http://arxiv.org/abs/hep-th/0212041}{{\ttfamily
  arXiv:hep-th/0212041}}.

\bibitem{Clark:2003wk}
A.~Clark, A.~Karch, P.~Kovtun, and D.~Yamada, ``{Construction of bosonic string
  theory on infinitely curved anti-de Sitter space},''
  \href{http://dx.doi.org/10.1103/PhysRevD.68.066011}{{\em Phys. Rev. D}
  {\bfseries 68} (2003) 066011},
  \href{http://arxiv.org/abs/hep-th/0304107}{{\ttfamily arXiv:hep-th/0304107}}.

\bibitem{Mintun:2015qda}
E.~Mintun, J.~Polchinski, and V.~Rosenhaus, ``{Bulk-Boundary Duality, Gauge
  Invariance, and Quantum Error Corrections},''
  \href{http://dx.doi.org/10.1103/PhysRevLett.115.151601}{{\em Phys. Rev.
  Lett.} {\bfseries 115} no.~15, (2015) 151601},
  \href{http://arxiv.org/abs/1501.06577}{{\ttfamily arXiv:1501.06577
  [hep-th]}}.

\bibitem{to_gauge}
J.~Maldacena and A.~Milekhin, ``{To gauge or not to gauge?},''
  \href{http://dx.doi.org/10.1007/JHEP04(2018)084}{{\em JHEP} {\bfseries 04}
  (2018) 084},
\href{http://arxiv.org/abs/1802.00428}{{\ttfamily arXiv:1802.00428 [hep-th]}}.
%%CITATION = ARXIV:1802.00428;%%.

\bibitem{Hanada}
E.~Berkowitz, M.~Hanada, E.~Rinaldi, and P.~Vranas, ``Gauged and ungauged: a
  nonperturbative test,'' \href{http://dx.doi.org/10.1007/jhep06(2018)124}{{\em
  Journal of High Energy Physics} {\bfseries 2018} no.~6, (Jun, 2018) }.
  \url{http://dx.doi.org/10.1007/JHEP06(2018)124}.

\bibitem{singlet_states}
I.~R. Klebanov, A.~Milekhin, F.~Popov, and G.~Tarnopolsky, ``{Spectra of
  eigenstates in fermionic tensor quantum mechanics},''
  \href{http://dx.doi.org/10.1103/PhysRevD.97.106023}{{\em Phys. Rev.}
  {\bfseries D97} no.~10, (2018) 106023},
\href{http://arxiv.org/abs/1802.10263}{{\ttfamily arXiv:1802.10263 [hep-th]}}.
%%CITATION = ARXIV:1802.10263;%%.

\bibitem{Banks_2011}
T.~Banks and N.~Seiberg, ``Symmetries and strings in field theory and
  gravity,'' \href{http://dx.doi.org/10.1103/physrevd.83.084019}{{\em Physical
  Review D} {\bfseries 83} no.~8, (Apr, 2011) }.
  \url{http://dx.doi.org/10.1103/PhysRevD.83.084019}.

\bibitem{Harlow2018symmetries}
D.~Harlow and H.~Ooguri, ``Symmetries in quantum field theory and quantum
  gravity,'' 2018.

\bibitem{Harlow_2019}
D.~Harlow and H.~Ooguri, ``Constraints on symmetries from holography,''
  \href{http://dx.doi.org/10.1103/physrevlett.122.191601}{{\em Physical Review
  Letters} {\bfseries 122} no.~19, (May, 2019) }.
  \url{http://dx.doi.org/10.1103/PhysRevLett.122.191601}.

\bibitem{EK}
B.~Eastin and E.~Knill, ``Restrictions on transversal encoded quantum gate
  sets,'' \href{http://dx.doi.org/10.1103/physrevlett.102.110502}{{\em Physical
  Review Letters} {\bfseries 102} no.~11, (Mar, 2009) }.
  \url{http://dx.doi.org/10.1103/PhysRevLett.102.110502}.

\bibitem{approx1}
B.~Schumacher and M.~D. Westmoreland, ``Approximate quantum error correction,''
  2001.

\bibitem{approx2}
D.~W. Leung, M.~A. Nielsen, I.~L. Chuang, and Y.~Yamamoto, ``Approximate
  quantum error correction can lead to better codes,''
  \href{http://dx.doi.org/10.1103/physreva.56.2567}{{\em Physical Review A}
  {\bfseries 56} no.~4, (Oct, 1997) 2567–2573}.
  \url{http://dx.doi.org/10.1103/PhysRevA.56.2567}.

\bibitem{Liu}
J.~Liu, ``{Scrambling and decoding the charged quantum information},''
\href{http://arxiv.org/abs/2003.11425}{{\ttfamily arXiv:2003.11425
  [quant-ph]}}.
%%CITATION = ARXIV:2003.11425;%%.

\bibitem{lidar1998}
D.~A. Lidar, I.~L. Chuang, and K.~B. Whaley, ``Decoherence-free subspaces for
  quantum computation,'' {\em Physical Review Letters} {\bfseries 81} no.~12,
  (1998) 2594.

\bibitem{Pakrouski:2020hym}
K.~Pakrouski, P.~N. Pallegar, F.~K. Popov, and I.~R. Klebanov, ``{Many Body
  Scars as a Group Invariant Sector of Hilbert Space},''
  \href{http://arxiv.org/abs/2007.00845}{{\ttfamily arXiv:2007.00845
  [cond-mat.str-el]}}.

\bibitem{yaffe1982}
L.~G. Yaffe, ``Large n limits as classical mechanics,'' {\em Reviews of Modern
  Physics} {\bfseries 54} no.~2, (1982) 407.

\bibitem{hagedorn}
O.~Aharony, J.~Marsano, S.~Minwalla, K.~Papadodimas, and M.~Van~Raamsdonk,
  ``The deconfinement and hagedorn phase transitions in weakly coupled large n
  gauge theories,'' \href{http://dx.doi.org/10.1016/j.crhy.2004.09.012}{{\em
  Comptes Rendus Physique} {\bfseries 5} no.~9-10, (Nov, 2004) 945–954}.
  \url{http://dx.doi.org/10.1016/j.crhy.2004.09.012}.

\bibitem{singlet_operators}
K.~Bulycheva, I.~R. Klebanov, A.~Milekhin, and G.~Tarnopolsky, ``{Spectra of
  Operators in Large $N$ Tensor Models},''
  \href{http://dx.doi.org/10.1103/PhysRevD.97.026016}{{\em Phys. Rev.}
  {\bfseries D97} no.~2, (2018) 026016},
\href{http://arxiv.org/abs/1707.09347}{{\ttfamily arXiv:1707.09347 [hep-th]}}.
%%CITATION = ARXIV:1707.09347;%%.

\bibitem{KL}
E.~Knill, R.~Laflamme, and L.~Viola, ``Theory of quantum error correction for
  general noise,'' \href{http://dx.doi.org/10.1103/physrevlett.84.2525}{{\em
  Physical Review Letters} {\bfseries 84} no.~11, (Mar, 2000) 2525–2528}.
  \url{http://dx.doi.org/10.1103/PhysRevLett.84.2525}.

\bibitem{NC}
M.~A. Nielsen and I.~L. Chuang, ``Quantum computation and quantum
  information,'' {\em Quantum Computation and Quantum Information, by Michael
  A. Nielsen, Isaac L. Chuang, Cambridge, UK: Cambridge University Press, 2010}
  (2010) .

\bibitem{Belavkin}
V.~P. {Belavkin}, ``{Optimum distinction of non-orthogonal quantum signals},''
  {\em Radio Engineering and Electronic Physics} {\bfseries 20} (June, 1975)
  1177--1185.

\bibitem{Holevo1978}
A.~S. Holevo, ``On asymptotically optimal hypotheses testing in quantum
  statistics,'' {\em Teoriya Veroyatnostei i ee Primeneniya} {\bfseries 23}
  no.~2, (1978) 429--432.

\bibitem{pretty_good}
P.~Hausladen, R.~Jozsa, B.~Schumacher, M.~Westmoreland, and W.~K. Wootters,
  ``Classical information capacity of a quantum channel,''
  \href{http://dx.doi.org/10.1103/PhysRevA.54.1869}{{\em Phys. Rev. A}
  {\bfseries 54} (Sep, 1996) 1869--1876}.
  \url{https://link.aps.org/doi/10.1103/PhysRevA.54.1869}.

\bibitem{barnum2000}
H.~Barnum and E.~Knill, ``Reversing quantum dynamics with near-optimal quantum
  and classical fidelity,'' 2000.

\bibitem{HKLL}
A.~Hamilton, D.~Kabat, G.~Lifschytz, and D.~A. Lowe, ``Holographic
  representation of local bulk operators,''
  \href{http://dx.doi.org/10.1103/physrevd.74.066009}{{\em Physical Review D}
  {\bfseries 74} no.~6, (Sep, 2006) }.
  \url{http://dx.doi.org/10.1103/PhysRevD.74.066009}.

\bibitem{Morrison2014}
I.~A. Morrison, ``Boundary-to-bulk maps for ads causal wedges and the
  reeh-schlieder property in holography,''
  \href{http://dx.doi.org/10.1007/jhep05(2014)053}{{\em Journal of High Energy
  Physics} {\bfseries 2014} no.~5, (May, 2014) }.
  \url{http://dx.doi.org/10.1007/JHEP05(2014)053}.

\bibitem{Causal2012}
V.~E. Hubeny and M.~Rangamani, ``Causal holographic information,''
  \href{http://dx.doi.org/10.1007/jhep06(2012)114}{{\em Journal of High Energy
  Physics} {\bfseries 2012} no.~6, (Jun, 2012) }.
  \url{http://dx.doi.org/10.1007/JHEP06(2012)114}.

\bibitem{Headrick2014}
M.~Headrick, V.~E. Hubeny, A.~Lawrence, and M.~Rangamani, ``Causality \&
  holographic entanglement entropy,''
  \href{http://dx.doi.org/10.1007/jhep12(2014)162}{{\em Journal of High Energy
  Physics} {\bfseries 2014} no.~12, (Dec, 2014) }.
  \url{http://dx.doi.org/10.1007/JHEP12(2014)162}.

\bibitem{Bennett1997}
C.~H. Bennett, D.~P. DiVincenzo, and J.~A. Smolin, ``Capacities of quantum
  erasure channels,'' \href{http://dx.doi.org/10.1103/physrevlett.78.3217}{{\em
  Physical Review Letters} {\bfseries 78} no.~16, (Apr, 1997) 3217–3220}.
  \url{http://dx.doi.org/10.1103/PhysRevLett.78.3217}.

\bibitem{pastur2004moments}
L.~Pastur and V.~Vasilchuk, ``On the moments of traces of matrices of classical
  groups,'' {\em Communications in mathematical physics} {\bfseries 252}
  no.~1-3, (2004) 149--166.

\bibitem{Anous:2019rqb}
T.~Anous, J.~L. Karczmarek, E.~Mintun, M.~Van~Raamsdonk, and B.~Way, ``{Areas
  and entropies in BFSS/gravity duality},''
  \href{http://dx.doi.org/10.21468/SciPostPhys.8.4.057}{{\em SciPost Phys.}
  {\bfseries 8} no.~4, (2020) 057},
  \href{http://arxiv.org/abs/1911.11145}{{\ttfamily arXiv:1911.11145
  [hep-th]}}.

\bibitem{mq}
J.~Maldacena and X.-L. Qi, ``Eternal traversable wormhole,'' 2018.

\bibitem{natalini2000}
P.~Natalini and B.~Palumbo, ``Inequalities for the incomplete gamma function,''
  {\em Math. Inequal. Appl} {\bfseries 3} no.~1, (2000) 69--77.

\bibitem{bfss_dual}
N.~Itzhaki, J.~M. Maldacena, J.~Sonnenschein, and S.~Yankielowicz,
  ``{Supergravity and the large N limit of theories with sixteen
  supercharges},'' \href{http://dx.doi.org/10.1103/PhysRevD.58.046004}{{\em
  Phys. Rev.} {\bfseries D58} (1998) 046004},
\href{http://arxiv.org/abs/hep-th/9802042}{{\ttfamily arXiv:hep-th/9802042
  [hep-th]}}.
%%CITATION = HEP-TH/9802042;%%.

\bibitem{Polchinski}
J.~Polchinski, ``{M theory and the light cone},''
  \href{http://dx.doi.org/10.1143/PTPS.134.158}{{\em Prog. Theor. Phys. Suppl.}
  {\bfseries 134} (1999) 158--170},
\href{http://arxiv.org/abs/hep-th/9903165}{{\ttfamily arXiv:hep-th/9903165
  [hep-th]}}.
%%CITATION = HEP-TH/9903165;%%.

\end{thebibliography}\endgroup
\end{document}